\numberwithin{equation}{section}
\numberwithin{equation}{section}
\newtheorem{Thm}{Theorem}
\newcommand{\wh}{\widehat}
\newcommand{\wt}{\widetilde}
\newcommand{\wb}{\overline}
\def \ty#1{\hbox{\tiny{{\it{#1}}}}}
\newcommand{\nn}{\nonumber}
\newcommand{\bC}{\boldsymbol{C}}
\newcommand{\bD}{\boldsymbol{D}}
\newcommand{\bI}{\boldsymbol{I}}
\newcommand{\bK}{\boldsymbol{K}}
\newcommand{\bM}{\boldsymbol{M}}
\newcommand{\bP}{\boldsymbol{P}}
\newcommand{\bQ}{\boldsymbol{Q}}
\newcommand{\br}{\boldsymbol{r}}
\newcommand{\bs}{\boldsymbol{s}}
\newcommand{\bT}{\boldsymbol{T}}
\newcommand{\bF}{\boldsymbol{F}}
\newcommand{\bG}{\boldsymbol{G}}
\newcommand{\bH}{\boldsymbol{H}}
\newcommand{\Ga}{\boldsymbol{\gamma}}
\newcommand{\Ph}{\boldsymbol{\Phi}}
\newcommand{\Og}{\boldsymbol{\Omega}}
\newcommand{\st}{\hbox{\tiny\it{T}}}
\begin{document}

\title{Cauchy matrix solutions to some local and nonlocal complex equations}

\author{Hai-jing Xu, Song-lin Zhao$^{*}$\\
\\\lowercase{\scshape{
Department of Applied Mathematics, Zhejiang University of Technology,
Hangzhou 310023, P.R. China}}}
\email{Corresponding Author: songlinzhao@zjut.edu.cn}

\begin{abstract}

In this paper, we develop a Cauchy matrix reduction technique
that enables us to obtain solutions for the reduced local and nonlocal complex equations from the Cauchy matrix
solutions of the original before-reduction systems. Specifically,
by imposing local and nonlocal complex reductions on some Ablowitz-Kaup-Newell-Segur-type
equations, we study some local and nonlocal complex equations, involving
the local and nonlocal complex modified Korteweg-de Vries equation, the local and nonlocal complex sine-Gordon equation,
the local and nonlocal potential nonlinear Schr\"{o}dinger equation and the local and nonlocal potential complex
modified Korteweg-de Vries equation. Cauchy matrix-type soliton solutions and
Jordan block solutions for the aforesaid local and nonlocal complex equations are presented.
The dynamical behaviors of some obtained solutions are analyzed with
graphical illustrations.

\end{abstract}
\keywords{Local and nonlocal complex reductions; AKNS-type equations; Cauchy matrix solutions; dynamics.}

\maketitle

\section{Introduction}
\label{sec-1}

The complex integrable equations have been of considerable interest recently in
many areas of mathematical physics. Some famous examples include the nonlinear Schr\"{o}dinger (NLS)
equation, the complex modified Korteweg-de Vries (cmKdV) equation, the complex sine-Gordon (csG) equation,
and so on. The NLS equation
\begin{align}
\label{NLS}
2iu_t+u_{xx}+8|u|^2u=0
\end{align}
usually appears in the description of deep water
waves \cite{Zak}, vortex filaments \cite{Hasimoto} and the collapse of Langmuir waves in plasma
physics \cite{Zak-1}. The cmKdV equation
\begin{align}
\label{mKdV}
4 u_{t}+u_{xxx}+24|u|^2u_{x}=0
\end{align}
has been applied as a model for the
nonlinear evolution of plasma waves \cite{Char} and a molecular chain
model \cite{Gorb}, as well as short pulses in optical fibers \cite{Anco}. The csG equation reads
\begin{align}
\label{csG}
u_{xt}+4u+8u\partial^{-1}(|u|^2)_t=0,
\end{align}
which arose in general relativity \cite{Vega} and propagated optical pulses in a nonlinear medium \cite{Park}.
In equations \eqref{NLS}-\eqref{csG}, the dependent variable $u$ is defined on the continuous coordinates $(x,t)\in \mathbb{R}^2$;
$i$ is the imaginary unit and $|\cdot|$ means module. Besides,
$\partial^{-1}$ stands for an inverse operator of $\partial=\frac{\partial}{\partial x}$
with the condition $\partial\partial^{-1}=\partial^{-1}\partial=1$. It is well known that
all the three above equations can be gained from the
Ablowitz-Kaup-Newell-Segur (AKNS) hierarchy by complex reductions.

In recent years, the study of parity-time symmetric integrable models
has become a focus of attention within the theory of integrable systems.
The parity-time symmetric integrable models play important roles in quantum
physics and other areas of physics, such as the
quantum chromodynamics, electric circuits, optics, Bose-instein condensates, and so on.
One of the famous examples is the reverse space NLS equation
\begin{align}
\label{n-NLS}
iu_t+u_{xx}+u^2u^*(-x)=0,
\end{align}
which was proposed firstly by Ablowitz and Musslimani \cite{AM-nNLS}. Here and hereafter, asterisk denotes the
complex conjugate. This equation is parity-time symmetric because it is invariant under the action of the parity-time operator,
i.e., the joint transformations $x\rightarrow -x,~t\rightarrow -t$ and complex conjugation $^*$.
Equation \eqref{n-NLS} possesses Lax integrability and admits
infinite number of conservation laws. Besides, it can be solved by the inverse scattering
transform \cite{AM-nNLS-IST}. Since the equation \eqref{n-NLS} includes two different
places $\{x, t\}$ and $\{x'=-x,~t'=t\}$, this model can be used to describe two-place physics phenomena \cite{Lou-JMP}.
Up to now, several methods have been adopted to
search for solutions to the reverse space NLS equation \eqref{n-NLS}, such as
the Riemann-Hilbert approach, the Hirota's bilinear method, the Darboux transformation, etc.
\cite{Sinha,Valchev,AbLuoMu,Yan,ChenZ-AML-2017,LX}. Besides the study of nonlocal NLS equation,
various results for nonlocal cmKdV equation have been also investigated. The inverse scattering transform for the reverse time-space nonlocal
cmKdV equation with nonzero boundary conditions at infinity was presented \cite{Luo}.
Ma {\it et al.} \cite{MaShenZhu} showed that nonlocal cmKdV equation
was gauge equivalent to a spin-like model. Consequently, dark soliton, W-type soliton, M-type soliton, and periodic solutions
for the nonlocal cmKdV equation were obtained with the help of the Darboux transformation.
In \cite{YY-SAPM}, Yang and Yang introduced a simple variable transformation to convert the
nonlocal cmKdV equation to the local cmKdV equation. As a result, multisoliton and quasi-periodic solutions
for the nonlocal cmKdV equation were constructed.
By using the Ablowitz-Musslimani reduction formulas, G\"{u}rses and Pekcan
found 1-, 2-, and 3-soliton solutions of the nonlocal cmKdV equations \cite{GuPe}.
Based on the double Wronskian solutions for the AKNS hierarchy,
Zhang {\it et al.} developed a reduction technique by imposing a constraint on the two
basic vectors in double Wronskians so that two potential functions in the AKNS
hierarchy obey some nonlocal relations therefore solutions to the nonlocal cmKdV equation
were obtained \cite{SAPM-2018}. This method proves generally and has been applied to many nonlocal systems,
involving the nonlocal discrete soliton equations \cite{DLZ,FZS} and the nonlocal nonisospectral soliton equations \cite{LWZ,FZ-IJMPB,Xu-Zhao}.
With regards to nonlocal csG equation, its solutions were also studied in \cite{SAPM-2018}. Furthermore,
the covariant hodograph transformations between nonlocal short pulse models and nonlocal multi-component csG equation were discussed
\cite{Zhang-AML-SP}.

The Cauchy matrix approach, as a systematic method for constructing integrable equations together with their solutions,
was first proposed by Nijhoff and his collaborators \cite{NAH-2009-JPA,N-2004-math}
to investigate the soliton solutions of the Adler-Bobenko-Suris lattice list. This method is actually a
by-product of the linearization approach which was first proposed by Fokas
and Ablowitz \cite{FA-DL} and developed to discrete integrable
systems by Nijhoff, Quispel {\it et al.}, one can refer to \cite{NQLC,FCW-BOOK,SAPM-2012,PRSA-2017}.
In \cite{ZZ-SAPM-2013}, Zhang and Zhao proposed
a generalized Cauchy matrix scheme to construct more kinds of exact solutions
for the Adler-Bobenko-Suris lattice list beyond the soliton solutions. The (generalized) Cauchy matrix approach
actually arose from the well known Sylvester equation in matrix theory \cite{S-1884}. In \cite{XZZ}, they used this method
to discuss the relations between the Sylvester equation and some continuous integrable equations,
including the KdV equation, the mKdV equation, the Schwarzian KdV equation and the sG equation.
They found that all these equations arose from the same Sylvester equation and could be expressed by some discrete equations of
$S^{(i,j)}$ defined on certain points.


Recently, motivated by the Cauchy matrix approach and the understanding of dispersion relations,
the first author of the present paper
studied the connections between the Sylvester equation and some AKNS equations \cite{Zhao-ROMP}, involving
the second order AKNS equation, the third order AKNS equation, the negative order AKNS equation,
the second order potential AKNS equation and the third order potential AKNS equation.
Consequently, Cauchy matrix-type soliton solutions, Jordan block solutions and generic
soliton-Jordan block mixed solutions for these AKNS-type equations were revealed.
In terms of that work, Cauchy matrix-type soliton solutions and Cauchy matrix-type Jordan block
solutions for the nonlocal NLS equation were investigated \cite{FZ-ROMP}.

In this paper we plan to discuss Cauchy matrix solutions to some local and nonlocal complex integrable equations
which can be obtained by imposing reductions on the above AKNS-type equations except for the second order AKNS equation.
The models include local and nonlocal cmKdV equation, local and nonlocal csG equation,
local and nonlocal potential NLS equation and local and nonlocal potential cmKdV equation.
In what follows, we call the AKNS-type equations mentioned above except for the second order AKNS equation
in sequence by AKNS(3) equation, AKNS(-1) equation, pAKNS(2) equation and pAKNS(3) equation for short.

The paper is organized as follows.
In Sec. 2, we briefly review some AKNS-type equations and their local and nonlocal complex reductions.
The corresponding Cauchy matrix solutions are also listed.
In Sec. 3, Cauchy matrix solutions to the local and nonlocal cmKdV
equation and the local and nonlocal csG equation are given.
Dynamics of some obtained solutions are analyzed and illustrated by asymptotic analysis.
In Sec. 4, we present the Cauchy matrix solutions and the dynamics of the
local and nonlocal potential NLS equation and the local and nonlocal potential cmKdV equation.
Sec. 5 is for conclusions and some remarks.

\section{AKNS-type equations and Cauchy matrix solutions}
\label{sec-2}

As a preliminary part let us present an account of some
AKNS-type equations and their local and nonlocal complex reductions.
The equations involve AKNS(3) equation, AKNS(-1) equation, pAKNS(2) equation and
pAKNS(3) equation. Moreover, we recall the Cauchy matrix solutions,
including Cauchy matrix-type soliton solutions and
Cauchy matrix-type Jordan block solutions, to these equations.
For more details one can refer to \cite{Zhao-ROMP}.

\subsection{AKNS-type equations}

In \cite{AKNS-1973}, Ablowitz, Kaup, Newell and Segur proposed a spectral problem, named AKNS
spectral problem,
\begin{align}
\label{AKNS-sp}
&\Ph_x=\bP\Ph,\quad
\bP=\left(\begin{array}{cc}
-\lambda & 2u \\
-2v & \lambda
\end{array}
\right),\quad \Ph=\left(\begin{array}{c}
\Phi_{1}\\
\Phi_{2}
\end{array}\right),
\end{align}
with spectral parameter $\lambda$ and potentials $u=u(x,t)$ and $v=v(x,t)$,
which provides integrable backgrounds for the NLS equation, the mKdV equation and
the sG equation. By imposing time evolution
\begin{align}
\label{AKNS-time}
\Ph_t=\bQ \Ph, \quad \bQ=\left( \begin{array}{cc}
A & B \\
C & -A
\end{array}
\right),
\end{align}
the compatibility condition  $\Ph_{xt}=\Ph_{tx}$ or moreover
zero curvature equation $\bP_{t}-\bQ_{x}+[\bP,\bQ]=0$ leads to
\begin{subequations}
\label{uv-A}
\begin{align}
& A=2\partial^{-1}(-v,u)\left(\begin{array}{c}
-B \\ C
\end{array}\right)-\lambda_tx+A_0,\\
& \label{uv-t}
2\left(\begin{array}{c}
u \\ -v
\end{array}\right)_t=L\left(\begin{array}{c}
-B \\ C
\end{array}\right)-2\lambda\left(\begin{array}{c}
-B \\ C
\end{array}\right)+4A_0\left(\begin{array}{c}
u\\
v
\end{array}\right)-4\lambda_t\left(\begin{array}{c}
xu\\
xv
\end{array}\right),
\end{align}
\end{subequations}
where $A_0$ is a constant and
\begin{align}
\label{L}
L=\left(\begin{array}{cc}
-\partial & 0\\
0 & \partial
\end{array}\right)+8\left(\begin{array}{c}
u \\ v
\end{array}\right)\partial^{-1}(-v,u).
\end{align}

Setting $\lambda_t=0$, $A_0=\lambda^n$ and expanding $(B,C)^{\st}$ into
\begin{align}
\left(\begin{array}{c} B\\ C \end{array}
\right)=\sum\limits_{j=1}^{n} \left(\begin{array}{c} B_j\\ C_j
\end{array} \right) \lambda^{n-j},
\end{align}
then from \eqref{uv-t} one can derive
the positive order AKNS hierarchy
\begin{align}
\label{AKNS-hie}
2^{n-1}\left(\begin{array}{c} u\\ -v\end{array}
\right)_t
=L^n \left( \begin{array}{c} u\\ v \end{array} \right), \quad
(n=1,2,\ldots),
\end{align}
where $L$ given by \eqref{L} is the recursion operator.
Now we still take $\lambda_t=0$ but $A_0=\lambda^{-n}$.
Substituting the expansion
\begin{align}
\left(\begin{array}{c} B\\ C
\end{array} \right)=\sum\limits_{j=1}^{n} \left(\begin{array}{c}
B_j\\ C_j \end{array} \right) \lambda^{j-n-1}
\end{align}
into \eqref{uv-t} and taking $(B_1, C_1)^{\st}$ such that
$L(-B_1, C_1)^{\st}=-4(u, v )^{\st}$,
we reach the following negative order AKNS hierarchy
\begin{align}
\label{NAKNSH}
L^n \left(\begin{array}{c} u\\ -v \end{array}
\right)_t=2^{n+1}\left(\begin{array}{c} u\\ v \end{array} \right), \quad
(n=1,2,\cdots).
\end{align}


In the positive order AKNS hierarchy \eqref{AKNS-hie}, one of the most salient or characteristic equations is the AKNS(3) equation, which is of form
\begin{subequations}
\label{3th-AKNS}
\begin{align}
& \label{3th-AKNS-b}
4 u_{t}+u_{xxx}+24 uvu_{x}=0, \\
& \label{3th-AKNS-c}
4 v_{t}+v_{xxx}+24 uvv_{x}=0.
\end{align}
\end{subequations}
Under complex reduction
\begin{align}
\label{3AKNS-red}
v(x,t)=\delta u^*(\sigma x,\sigma t), \quad \delta,\sigma=\pm 1, \quad x,t \in \mathbb{R},
\end{align}
one can get a local and nonlocal cmKdV equation
\begin{align}
\label{cmKdV}
4 u_{t}+u_{xxx}+24\delta uu^*(\sigma x,\sigma t)u_{x}=0.
\end{align}
It is obvious that equation \eqref{cmKdV} is
preserved under transformations $u\rightarrow -u$ and $u\rightarrow \pm iu$.
When $\sigma=-1$, equation \eqref{cmKdV} is the reverse time-space cmKdV equation.

The first AKNS-type equation in the negative order AKNS hierarchy \eqref{NAKNSH} is the AKNS(-1) equation (see \cite{ZJZ}), i.e.,
\begin{subequations}
\label{-1th-AKNS}
\begin{align}
& u_{xt}+4 u+8 u\partial^{-1}(uv)_t=0, \\
& v_{xt}+4 v+8 v\partial^{-1}(uv)_t=0.
\end{align}
\end{subequations}
Imposing reduction
\begin{align}
\label{-1AKNS-red}
v(x,t)=\delta u^*(\sigma x,\sigma t), \quad \delta, \sigma=\pm 1, \quad x,t \in \mathbb{R},
\end{align}
on equation \eqref{-1th-AKNS} gives rise to a local and nonlocal (non-potential) csG equation
\begin{align}
\label{n-sG}
u_{xt}+4u+8\delta u\partial^{-1}(uu^*(\sigma x,\sigma t))_t=0.
\end{align}
Equation \eqref{n-sG} is also preserved under transformations $u\rightarrow -u$ and $u\rightarrow \pm iu$.
When $\sigma=-1$, equation \eqref{n-sG} is the reverse time-space csG equation.

Since both $u$ and $v$ tend to zero as $|x|\rightarrow \infty$ and $uv$ is a conserved density,
one can alternatively write \eqref{-1th-AKNS} as
\begin{align}
\label{-1th-AKNS-ano}
u_{xt}+8uw=0, \quad v_{xt}+8vw=0, \quad w_x=(uv)_t,
\end{align}
where $w=\partial^{-1}(uv)_t+\frac{1}{2}$ is an auxiliary function. By complex reduction \eqref{-1AKNS-red}
together with relation $w(x,t)=\sigma w^*(\sigma x,\sigma t)$,
one get another form of local and nonlocal csG equation
\begin{align}
\label{n-sG-1}
u_{xt}+8uw=0, \quad w_x=\delta (uu^*(\sigma x,\sigma t))_t.
\end{align}

Besides the AKNS hierarchy, there exists another hierarchy which can be also viewed as AKNS
type. We name this hierarchy as a potential AKNS hierarchy. The first non-trivial
member is the pAKNS(2) equation, namely,
\begin{subequations}
\label{2th-pAKNS}
\begin{align}
& \label{2th-pAKNS-a}
2 q_t-q_{xx}+2\frac{q_xr_x}{r}=0, \\
& \label{2th-pAKNS-b}
2 r_t+r_{xx}-2\frac{q_xr_x}{q}=0.
\end{align}
\end{subequations}
With complex reduction
\begin{align}
\label{2pAKNS-red}
r(x,t)=q^*(\sigma x,t), \quad \sigma=\pm 1, \quad t\rightarrow -it, \quad x,t \in \mathbb{R},
\end{align}
equation \eqref{2th-pAKNS} leads to a local and nonlocal potential NLS equation
\begin{align}
\label{n-pNLS-1}
2iq_t-q_{xx}+2\frac{q_x q^*(\sigma x)}{q^*(\sigma x)}=0,
\end{align}
which is preserved under transformations $q\rightarrow -q$ and $q\rightarrow \pm iq$.
When $\sigma=1$, \eqref{n-pNLS-1} is the local potential NLS equation, which
was firstly introduced by Nijhoff {\it et al.} \cite{Schwar,Schwar-1} and
related to the equation of motion of the Heisenberg ferromagnet with uniaxial anisotropy \cite{Quispel-1982-3}.
When $\sigma=-1$, \eqref{n-pNLS-1} is the reverse space potential NLS equation.

The second non-trivial member in the potential AKNS hierarchy is the pAKNS(3) equation, which is described as
\begin{subequations}
\label{3th-pAKNS}
\begin{align}
& 4 q_{t}+q_{xxx}-3\frac{q_{xx}r_x}{r}-3\frac{q_xr_x}{qr^2}(q_x r-qr_x)=0, \\
& 4 r_{t}+r_{xxx}-3\frac{q_x r_{xx}}{q}+3\frac{q_xr_x}{q^2 r}(q_xr-qr_x)=0.
\end{align}
\end{subequations}
Through complex reduction
\begin{align}
\label{3pAKNS-red}
r(x,t)=q^*(\sigma x,\sigma t), \quad \sigma=\pm 1, \quad x,t \in \mathbb{R},
\end{align}
equation \eqref{3th-pAKNS} yields a local and nonlocal potential cmKdV equation
\begin{align}
\label{n-pmKdV-1}
4 q_{t}+q_{xxx}-3\frac{q_{xx}q_x^*(\sigma x,\sigma t)}{q^*(\sigma x,\sigma t)}-3\frac{q_xq_x^*(\sigma x,\sigma t)}
{qq^{*^2}(\sigma x,\sigma t)}(q_x q^*(\sigma x,\sigma t)-q q_x^*(\sigma x,\sigma t))=0,
\end{align}
which is also preserved under transformations $q\rightarrow -q$ and $q\rightarrow \pm iq$.
When $\sigma=-1$, equation \eqref{n-pmKdV-1} corresponds to the reverse time-space potential cmKdV equation.

It is really noteworthy that the pAKNS(2) equation \eqref{2th-pAKNS} and the pAKNS(3) equation \eqref{3th-pAKNS}, respectively, are related to the
AKNS(2) equation
\begin{subequations}
\label{2th-AKNS}
\begin{align}
& \label{2th-AKNS-a}
2 u_{t}-u_{xx}-8u^2v=0, \\
& \label{2th-AKNS-b}
2 v_t+v_{xx}+8uv^2=0,
\end{align}
\end{subequations}
and the AKNS(3) equation \eqref{3th-AKNS} by transformations
\begin{align}
\label{MT}
2u=q_x/r,\quad 2v=-r_x/q.
\end{align}

\subsection{Cauchy matrix solutions}

The present part is dedicated to Cauchy matrix solutions
of the AKNS-type equations mentioned above, including Cauchy matrix-type soliton solutions
and Cauchy matrix-type Jordan block solutions.

Cauchy matrix solutions of the AKNS(3) equation \eqref{3th-AKNS} and the AKNS(-1) equation \eqref{-1th-AKNS}
can be summarized in the following Theorem.
\begin{Thm}
\label{Thm-1}
The functions
\begin{subequations}
\label{AKNS-solu}
\begin{align}
& u=\bs^{\st}_2(\bI-\bM_2\bM_1)^{-1}\br_2, \\
& v=\bs^{\st}_1(\bI-\bM_1\bM_2)^{-1}\br_1,
\end{align}
\end{subequations}
solve the AKNS(3) equation \eqref{3th-AKNS} and the AKNS(-1) equation \eqref{-1th-AKNS},
provided that the components $\bM_1\in \mathbb{C}_{N_1\times N_2}$, $\bM_2\in \mathbb{C}_{N_2\times N_1}$,
$\br_j,~\bs_j \in \mathbb{C}_{N_j\times 1}$, $(j=1,2)$ satisfy the following determining
equation set
\begin{subequations}
\label{SE-sys}
\begin{align}
& \label{SE}
\bK_1 \bM_1-\bM_1\bK_2=\br_1\, \bs_2^{\st}, \quad  \bK_2 \bM_2-\bM_2\bK_1=\br_2\, \bs_1^{\st}, \\
& \label{rs-x}
\br_{j,x}=(-1)^{j-1}\bK_j \br_j,\qquad \bs_{j,x}=(-1)^{j-1}\bK_j^{\st}\bs_j, \\
& \label{rs-t}
\br_{j,t}=(-1)^{j}\bK_j^n\br_j,\qquad \bs_{j,t}=(-1)^{j}(\bK_j^{\st})^n\bs_j,
\end{align}
\end{subequations}
with $n=3$ and $n=-1$, respectively, where $N_1+N_2=2N$ and $^{\st}$ means transpose.
\end{Thm}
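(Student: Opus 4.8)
The plan is to prove \eqref{AKNS-solu} by direct verification, reducing the whole computation to a closed finite set of differential--algebraic relations among scalar ``master functions'' built from the Cauchy matrix data, exactly as in \cite{Zhao-ROMP}. The first step is to record the $x$- and $t$-evolution of the matrices $\bM_1,\bM_2$. Differentiating the Sylvester equations \eqref{SE} and inserting the linear relations \eqref{rs-x}--\eqref{rs-t}, one finds that each of $\bM_{j,x}$ and $\bM_{j,t}$ again solves a Sylvester equation with the \emph{same} homogeneous operator $\bK_1(\,\cdot\,)-(\,\cdot\,)\bK_2$ (respectively $\bK_2(\,\cdot\,)-(\,\cdot\,)\bK_1$); uniqueness of that solution for generic $\bK_1,\bK_2$ with disjoint spectra then yields the closed forms $\bM_{1,x}=\bK_1\bM_1-\bM_1\bK_2$, $\bM_{2,x}=-(\bK_2\bM_2-\bM_2\bK_1)$, and, for the $t$-flow of order $n$, $\bM_{1,t}=-\bK_1^{n}\bM_1+\bM_1\bK_2^{n}$, $\bM_{2,t}=\bK_2^{n}\bM_2-\bM_2\bK_1^{n}$ (for $n=-1$ one assumes in addition that $\bK_1,\bK_2$ are invertible). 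From these and \eqref{rs-x}--\eqref{rs-t} one then computes the derivatives of the resolvents $(\bI-\bM_1\bM_2)^{-1}$ and $(\bI-\bM_2\bM_1)^{-1}$, using the push-through identity $(\bI-\bM_1\bM_2)^{-1}\bM_1=\bM_1(\bI-\bM_2\bM_1)^{-1}$ together with the consequence of \eqref{SE} that the commutator of $\bK_1$ with $(\bI-\bM_1\bM_2)^{-1}$ is a rank-$\le 2$ term sandwiched between resolvents.

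The second step introduces the scalar master functions $S^{(i,j)}$ obtained by placing powers of $\bK_1,\bK_2$ between the vectors $\bs_a$, $\br_b$ and the resolvents, normalised so that $u$ and $v$ in \eqref{AKNS-solu} are their lowest members. The relations just derived translate into shift recursions in the superscripts $(i,j)$ and into formulas expressing $\partial_x S^{(i,j)}$ and $\partial_t S^{(i,j)}$ as finite combinations of master functions plus quadratic ``$uv$''-type terms coming from the rank-$2$ commutator---precisely the structure carried by the recursion operator $L$ of \eqref{L}. Substituting the resulting expressions for $u_x,u_{xx},u_{xxx}$ (and their $v$-counterparts) and eliminating the auxiliary $S^{(i,j)}$ then reproduces the AKNS(3) equation \eqref{3th-AKNS} when $n=3$ and the AKNS(-1) equation \eqref{-1th-AKNS} when $n=-1$; in the latter case one identifies $\partial^{-1}(uv)_t$ with the particular master function whose $x$-derivative equals $(uv)_t$, available because $(uv)_t$ emerges as a total $x$-derivative inside the $S$-calculus. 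An equivalent and perhaps more transparent route is to exhibit a vector $\Ph$ assembled from $\br_a$, the resolvents and $\bs_a$ that solves the spectral problem \eqref{AKNS-sp} together with the time evolution \eqref{AKNS-time} for the stated $u,v$, with $\bK$ acting as a matrix spectral parameter; the zero-curvature relation $\bP_t-\bQ_x+[\bP,\bQ]=0$ then delivers the equations at once.

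The main difficulty is bookkeeping rather than anything conceptual. Since the blocks $\bK_1,\bK_2$ need not have the same size, one must keep careful track of which resolvent, $(\bI-\bM_1\bM_2)^{-1}$ or $(\bI-\bM_2\bM_1)^{-1}$, occurs at each stage and commute the $\bM_j$ and $\bK_j$ past resolvents using only the push-through and Sylvester identities, without creating spurious terms. The order $n=-1$ case is the most delicate, both because of the inverse powers $\bK_j^{-1}$ and because the nonlocal term $\partial^{-1}(uv)_t$ has to be matched with a genuine master function rather than left as a formal antiderivative.
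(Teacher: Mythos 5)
The paper does not prove Theorem~\ref{Thm-1} itself but simply quotes it from \cite{Zhao-ROMP}, and your proposal reconstructs precisely the method of that reference (and of \cite{XZZ}): closed-form $x$- and $t$-evolutions of $\bM_1,\bM_2$ obtained from the Sylvester equations by uniqueness, the master functions $S^{(i,j)}$ with their shift recursions and derivative formulas, and direct verification of the AKNS(3) and AKNS(-1) equations. The matrix identities you state (e.g.\ $\bM_{1,x}=\bK_1\bM_1-\bM_1\bK_2=\br_1\bs_2^{\st}$, $\bM_{1,t}=-\bK_1^{n}\bM_1+\bM_1\bK_2^{n}$, and the rank-$\le 2$ commutator of $\bK_1$ with the resolvent) all check out, so the plan is sound and essentially identical to the cited derivation.
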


For notational brevity, here and in what follows, we omit the index
of each unit matrix $\bI$ to indicate its size.
Cauchy matrix solutions of the pAKNS(2) equation \eqref{2th-pAKNS} and the pAKNS(3) equation
\eqref{3th-pAKNS} are presented by the following Theorem.
\begin{Thm}
\label{Thm-2}
The functions
\begin{subequations}
\label{2th-pAKNS-so}
\begin{align}
& q=\bs^{\st}_2(\bI-\bM_2\bM_1)^{-1}\bK_2^{-1}\br_2-\bs^{\st}_2\bM_2(\bI-\bM_1\bM_2)^{-1}\bK_1^{-1}\br_1-1, \\
& r=\bs^{\st}_1(\bI-\bM_1\bM_2)^{-1}\bK_1^{-1}\br_1-\bs^{\st}_1\bM_1(\bI-\bM_2\bM_1)^{-1}\bK_2^{-1}\br_2-1,
\end{align}
\end{subequations}
solve the pAKNS(2) equation \eqref{2th-pAKNS} and the pAKNS(3) equation \eqref{3th-pAKNS},
provided that the components $\bK_j \in \mathbb{C}_{N_j\times N_j}$,
$\bM_1\in \mathbb{C}_{N_1\times N_2}$, $\bM_2\in \mathbb{C}_{N_2\times N_1}$,
$\br_j,~\bs_j \in \mathbb{C}_{N_j\times 1}$, $(j=1,2)$ satisfy the determining
equation set \eqref{SE-sys} with $n=2$ and $n=3$, respectively, where $N_1+N_2=2N$.
\end{Thm}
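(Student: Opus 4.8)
The plan is to exploit the Miura-type transformation \eqref{MT}, under which the pAKNS(2) equation \eqref{2th-pAKNS} and the pAKNS(3) equation \eqref{3th-pAKNS} are connected to the AKNS(2) equation \eqref{2th-AKNS} and the AKNS(3) equation \eqref{3th-AKNS}, respectively. Write
\begin{align*}
u=\bs^{\st}_2(\bI-\bM_2\bM_1)^{-1}\br_2, \qquad v=\bs^{\st}_1(\bI-\bM_1\bM_2)^{-1}\br_1
\end{align*}
for the functions \eqref{AKNS-solu} built from $\bK_j,\bM_j,\br_j,\bs_j$ subject to \eqref{SE-sys}, which by Theorem~\ref{Thm-1} and \cite{Zhao-ROMP} solve \eqref{2th-AKNS} when $n=2$ and \eqref{3th-AKNS} when $n=3$. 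The transformation \eqref{MT} then tells us exactly what to verify: it suffices to show that the explicit pair $(q,r)$ in \eqref{2th-pAKNS-so} satisfies
\begin{align}
\label{eq:Miura-check}
q_x=2ur, \qquad r_x=-2vq,
\end{align}
together with the analogous first-order relations expressing $q_t$ and $r_t$ through $u,v$ and their $x$-derivatives; substituting these into \eqref{2th-pAKNS}, resp. \eqref{3th-pAKNS}, collapses those equations onto \eqref{2th-AKNS}, resp. \eqref{3th-AKNS}, which are already known to hold. (The constant $-1$ in \eqref{2th-pAKNS-so} is precisely what makes $q_x/r$ and $r_x/q$ reproduce $2u$ and $-2v$: it disappears under $\partial_x$ but survives as the normalization of $q,r$ as the Cauchy-matrix part vanishes.)

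To carry this out I would first read off from \eqref{SE-sys} the $x$- and $t$-evolutions of the dressed Cauchy matrices. Assuming, as usual, that $\bK_1$ and $\bK_2$ have disjoint spectra (so that the Sylvester equations \eqref{SE} determine $\bM_1,\bM_2$ uniquely) and that each $\bK_j$ is invertible (needed for \eqref{2th-pAKNS-so} to make sense), differentiating \eqref{SE} in $x$ and using \eqref{rs-x} gives $\bM_{1,x}=\br_1\bs^{\st}_2$ and $\bM_{2,x}=\br_2\bs^{\st}_1$, while differentiating \eqref{SE} in $t$ and using \eqref{rs-t} yields the telescoped expressions
\begin{align*}
\bM_{1,t}=-\sum_{l=0}^{n-1}\bK_1^{\,n-1-l}\br_1\bs^{\st}_2\bK_2^{\,l}, \qquad \bM_{2,t}=\sum_{l=0}^{n-1}\bK_2^{\,n-1-l}\br_2\bs^{\st}_1\bK_1^{\,l}.
\end{align*}
Together with the resolvent identity $(\bI-\bM_2\bM_1)^{-1}\bM_2=\bM_2(\bI-\bM_1\bM_2)^{-1}$ and the resulting rules for $\partial_x,\partial_t$ of $(\bI-\bM_2\bM_1)^{-1}$ and $(\bI-\bM_1\bM_2)^{-1}$, this is all the machinery needed to differentiate the expressions \eqref{2th-pAKNS-so}.

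The main computation is \eqref{eq:Miura-check}. Applying $\partial_x$ to $q=\bs^{\st}_2(\bI-\bM_2\bM_1)^{-1}\bK_2^{-1}\br_2-\bs^{\st}_2\bM_2(\bI-\bM_1\bM_2)^{-1}\bK_1^{-1}\br_1-1$ and using $\bK_j\bK_j^{-1}=\bI$, the terms carrying a $\bK_j^{-1}$ recombine and the derivative collapses to $2\bigl[\bs^{\st}_2(\bI-\bM_2\bM_1)^{-1}\br_2\bigr]\cdot r=2ur$, with $r$ the second function of \eqref{2th-pAKNS-so}; the relation $r_x=-2vq$ follows by the symmetry $1\leftrightarrow 2$, $u\leftrightarrow v$. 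The time relations are obtained in the same way, now feeding in $\bM_{1,t},\bM_{2,t}$; their telescoping structure is exactly what renders these relations local (on the Cauchy matrix solutions the $\partial^{-1}$-type entry $A$ of the time matrix \eqref{AKNS-time}, via \eqref{uv-A} at $\lambda=0$, becomes a finite sum), and should the AKNS equation for $(u,v)$ be needed to recognize a right-hand side, \eqref{2th-AKNS}, resp. \eqref{3th-AKNS}, is available.

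The step I expect to be the genuine obstacle is the bookkeeping around the factors $\bK_j^{-1}$: in \eqref{2th-pAKNS-so} they sit between the resolvents and the vectors, so naive differentiation does not directly return the combinations \eqref{AKNS-solu}, and one must pass systematically to the scalar master functions of \cite{Zhao-ROMP} and use their recurrences (in particular the one that lowers the upper index) to see the cancellations. The $n=3$ case, governing \eqref{3th-pAKNS}, is the heaviest, since \eqref{3th-pAKNS} carries the compound rational nonlinearity $\tfrac{q_xr_x}{qr^2}(q_xr-qr_x)$; there I would establish \eqref{eq:Miura-check} and the $n=3$ time relations first and only then substitute, letting the reduction to \eqref{3th-AKNS} absorb those rational terms rather than expanding them by hand.
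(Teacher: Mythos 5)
First, a point of comparison: the paper contains no proof of Theorem~\ref{Thm-2} at all --- both theorems of Section~2 are recalled from \cite{Zhao-ROMP} --- so there is no in-paper argument to measure your proposal against. Judged on its own terms, your strategy of verifying the Miura-type relations of \eqref{MT}, namely $q_x=2ur$ and $r_x=-2vq$ with $(u,v)$ the expressions \eqref{AKNS-solu}, together with companion time relations, and then letting \eqref{2th-pAKNS} and \eqref{3th-pAKNS} follow algebraically, is viable: the two $x$-relations do hold for \eqref{2th-pAKNS-so} (one can confirm them in closed form already in the $N_1=N_2=1$ case), and once all four first-order relations are established the pAKNS equations reduce to identities.

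Two concrete problems remain. First, a sign error: from \eqref{SE} and \eqref{rs-x} one gets $\bM_{1,x}=\br_1\bs_2^{\st}$ but $\bM_{2,x}=-\br_2\bs_1^{\st}$, because $\br_{2,x}=-\bK_2\br_2$ and $\bs_{1,x}=\bK_1^{\st}\bs_1$ give $(\br_2\bs_1^{\st})_x=-\bigl(\bK_2\br_2\bs_1^{\st}-\br_2\bs_1^{\st}\bK_1\bigr)$; your stated $\bM_{2,x}=\br_2\bs_1^{\st}$ is even inconsistent with your own (correct) formula for $\bM_{2,t}$ specialized to $n=1$, and carried through literally it would break the verification of $q_x=2ur$. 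Second, and more substantively, the proposal defers exactly the content of the theorem: the recombination of the $\bK_j^{-1}$-bearing terms in $q_x$, and above all the $t$-relations for $n=2$ and $n=3$ (which are \emph{not} consequences of the AKNS equations for $(u,v)$ but independent identities --- e.g.\ for $n=2$ one must show $2q_t=2u_xr-2ur_x$ directly from the Cauchy matrix data, since no equation in $u,v$ alone can produce a statement about $q_t$), are asserted rather than established. You correctly flag this as the obstacle, but without the master-function recurrences of \cite{Zhao-ROMP} or an equivalent explicit computation the argument is a program, not a proof. A final small point: your appeal to Theorem~\ref{Thm-1} for the $n=2$ case overreaches, since that theorem as stated covers only $n=3$ and $n=-1$; this is harmless, however, because once the four first-order relations are verified directly the AKNS equations for $(u,v)$ are not actually needed.
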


According to the analysis above, we know that Cauchy matrix solutions of the
AKNS-type equations \eqref{3th-AKNS}, \eqref{-1th-AKNS},
\eqref{2th-pAKNS} and \eqref{3th-pAKNS} are determined by the determining equation set \eqref{SE-sys}.
Equations \eqref{rs-x} and \eqref{rs-t} are used to determine the dispersion relations
$\br_j,~\bs_j,~(j=1,2)$ and the two equations in \eqref{SE} are applied to determine $\bM_1$ and $\bM_2$.
Both of equations in \eqref{SE} are the
Sylvester equations and have a unique solution for $\bM_1$ and $\bM_2$
provided that $\mathcal{E}(\bK_1)\bigcap \mathcal{E}(\bK_2)=\varnothing$,
where $\mathcal{E}(\bK_1)$ and $\mathcal{E}(\bK_2)$, respectively,
denote eigenvalue sets of $\bK_1$ and $\bK_2$ (see \cite{S-1884}). We assume that
$\bK_1$ and $\bK_2$ satisfy such a condition. Additionally, we suppose
$0 \notin \mathcal{E}(\bK_1\cup\bK_2)$ and $1 \notin \mathcal{E}(\bM_1\bM_2)$
to guarantee the reversibility of matrices
$\bK_1$, $\bK_2$ and $\bI-\bM_1\bM_2$. Because $\bM_1\bM_2$ and $\bM_2\bM_1$
have the same non-zero eigenvalues, matrix $\bI-\bM_2\bM_1$ is also invertible.
It is worth noting that variables $u$, $v$, $q$ and $r$ are invariant
and system \eqref{SE-sys} is covariant under similarity transformations
\begin{align}
\bK_j=\Ga_j^{-1}\wb{\bK}_j\Ga_j,\quad \br_j=\Ga_j^{-1}\wb{\br}_j,\quad \bs_j=\Ga^{\st}_j\wb{\bs}_j,\quad \bM_1=\Ga_1^{-1}\wb{\bM}_1\Ga_2,\quad
\bM_2=\Ga_2^{-1}\wb{\bM}_2\Ga_1,
\end{align}
for $j=1,2$, where $\Ga_1$ and $\Ga_2$ are transform matrices. For deriving the explicit solutions one just need to
solve the following canonical equations
\begin{subequations}
\label{DES-C}
\begin{align}
& \label{DES-M12-C}\Og_1 \bM_1-\bM_1\Og_2=\br_1\, \bs^{\st}_2, \quad \Og_2 \bM_2-\bM_2\Og_1=\br_2\, \bs^{\st}_1, \\
& \label{DES-x-C}\br_{j,x}=(-1)^{j-1}\Og_j \br_j,\quad \bs_{j,x}=(-1)^{j-1}\Og^{\st}_j\bs_j, \\
& \label{DES-t-C}\br_{j,t}=(-1)^{j}\Og_j^n\br_j,\quad \bs_{j,t}=(-1)^{j}(\Og_j^{\st})^n\bs_j,
\end{align}
\end{subequations}
for $j=1,2$, where $\Og_1$ and $\Og_2$ are the canonical forms of the matrices $\bK_1$ and $\bK_2$, respectively.
In terms of the constraints on $\bK_1$ and $\bK_2$, we know that
\begin{align}
\label{con-Og}
\mathcal{E}(\Og_1)\bigcap \mathcal{E}(\Og_2)=\varnothing, \quad 0 \notin \mathcal{E}(\Og_1\cup\Og_2).
\end{align}

Equations \eqref{DES-x-C} and \eqref{DES-t-C} are linear, which implies the expressions for $\br_j$ and $\bs_j$ as
\begin{align}
\label{rsj}
\br_{j}=\mbox{exp}((-1)^{j-1}(\Og_{j}x-\Og^n_{j}t))\bC^+_{j}, \quad \bs_{j}=\mbox{exp}((-1)^{j-1}(\Og^{\st}_{j}x-(\Og_j^{\st})^nt)\bD^+_{j},\quad j=1,2,
\end{align}
where $\{\bC^+_j,~\bD^{+}_j \}$ are constant column vectors. The key point of
the solving procedure of \eqref{DES-M12-C} is to factorize $\bM_1$ and $\bM_2$ into triplets, i.e. $\bM_1=\bF_1\bG_1\bH_2$ and
$\bM_2=\bF_2\bG_2\bH_1$, where $\{\bF_j,~\bH_j\} \subset \mathbb{C}_{N_j\times N_j}$, $\bG_1
\in \mathbb{C}_{N_1\times N_2}$ and $\bG_2\in \mathbb{C}_{N_2\times N_1}$. For the detailed calculations, one can refer to \cite{XZZ}.
In what follows, we just list two kinds of solutions involving  Cauchy matrix-type soliton solutions
and Cauchy matrix-type Jordan block solutions, where usually the subscripts $_D$ and $_J$ correspond to the cases of
$\{\Og_j\}$ being diagonal and Jordan block, respectively.

When
\begin{align}
\Og_j=\Og_{\ty{D},j}=\mbox{Diag}(k_{j,1},k_{j,2},\ldots,k_{j,N_j}),\quad j=1,2,
\end{align}
we have
\begin{subequations}
\label{MDD-solu}
\begin{align}
& \label{rD-solu} \br_{j}=\mbox{exp}((-1)^{j-1}(\Og_{\ty{D},j}x-\Og^n_{\ty{D},j}t))\bC^+_{\ty{D},j},\quad j=1,2,\\
& \label{sD-solu} \bs_{j}=\mbox{exp}((-1)^{j-1}(\Og_{\ty{D},j}x-\Og^n_{\ty{D},j}t))\bD^+_{\ty{D},j},\quad j=1,2,\\
& \bM_1=\mbox{exp}(\Og_{\ty{D},1}x-\Og^n_{\ty{D},1}t)\bC^-_{\ty{D},1}\cdot\bG_{\ty{D}}\cdot\bD^-_{\ty{D},2}\mbox{exp}(-\Og_{\ty{D},2}x+\Og^n_{\ty{D},2}t), \\
& \bM_2=\mbox{exp}(-\Og_{\ty{D},2}x+\Og^n_{\ty{D},2}t)\bC^-_{\ty{D},2}\cdot(-\bG_{\ty{D}}^{\st})\cdot\bD^-_{\ty{D},1}\mbox{exp}(\Og_{\ty{D},1}x-\Og^n_{\ty{D},1}t),
\end{align}
\end{subequations}
where
\begin{subequations}
\label{CD-GDD}
\begin{align}
& \bC^+_{\ty{D},j}=(c_{j,1},c_{j,2},\ldots,c_{j,N_j})^{\st},\quad
\bD^+_{\ty{D},j}=(d_{j,1},d_{j,2},\ldots,d_{j,N_j})^{\st},\quad j=1,2, \\
& \label{CD-GDD-soli} \bC^-_{\ty{D},j}=\mbox{Diag}(c_{j,1},c_{j,2},\ldots,c_{j,N_j}),\quad
\bD^-_{\ty{D},j}=\mbox{Diag}(d_{j,1},d_{j,2},\ldots,d_{j,N_j}),\quad j=1,2,\\
& \bG_{\ty{D}}=(g_{i,j})_{N_1\times N_2},\quad g_{i,j}=\frac{1}{k_{1,i}-k_{2,j}}.
\end{align}
\end{subequations}
In this case, one can derive Cauchy matrix-type soliton solutions.

When
\begin{align}
\Og_j=\Og_{\ty{J},j}=\left(\begin{array}{cccccc}
k_{j,1} & 0    & 0   & \cdots & 0   & 0 \\
1   & k_{j,1}  & 0   & \cdots & 0   & 0 \\
\vdots &\vdots &\vdots &\vdots &\vdots &\vdots \\
0   & 0    & 0   & \cdots & 1   & k_{j,1}
\end{array}\right)_{N_j \times N_j},\quad j=1,2,
\end{align}
we have
\begin{subequations}
\label{MJJ-solu}
\begin{align}
& \label{rJ-solu} \br_{j}=\mbox{exp}((-1)^{j-1}(\Og_{\ty{J},j}x-\Og^n_{\ty{J},j}t))\bC^+_{\ty{J},j},\quad j=1,2,\\
& \label{sJ-solu} \bs_{j}=\mbox{exp}((-1)^{j-1}(\Og^{\st}_{\ty{J},j}x-(\Og^{\st}_{\ty{J},j})^nt))\bD^+_{\ty{J},j},\quad j=1,2,\\
& \bM_1=\mbox{exp}(\Og_{\ty{J},1}x-\Og^n_{\ty{J},1}t)\bC^-_{\ty{J},1}\cdot\bG_{\ty{J}}\cdot\bD^-_{\ty{J},2}\mbox{exp}(-\Og^{\st}_{\ty{J},2}x+(\Og^{\st}_{\ty{J},2})^nt), \\
& \bM_2=\mbox{exp}(-\Og_{\ty{J},2}x+\Og^n_{\ty{J},2}t)\bC^-_{\ty{J},2}\cdot(-\bG_{\ty{J}}^{\st})\cdot\bD^-_{\ty{J},1}\mbox{exp}(\Og^{\st}_{\ty{J},1}x-(\Og^{\st}_{\ty{J},1})^nt),
\end{align}
\end{subequations}
where
\begin{subequations}
\label{CD-GJJ}
\begin{align}
\label{CD-J}
& \bC^+_{\ty{J},j}=c_{j,1}\bar{\bI},\quad
\bD^+_{\ty{J},j}=d_{j,1}\bar{\bI},\quad \bC^-_{\ty{J},j}=c_{j,1}\bI,\quad
\bD^-_{\ty{J},j}=d_{j,1}\bI, \\
& \bG_{\ty{J}}=(g_{i,j})_{N_1\times N_2},
\qquad g_{i,j}=\mathrm{C}^{i-1}_{i+j-2}\frac{(-1)^{i+1}}{(k_{1,1}-k_{2,1})^{i+j-1}},
\end{align}
\end{subequations}
with $\bar{\bI}=(1,0,\ldots,0)^{\st}$ and $\mathrm{C}^{i}_{j}=\frac{j!}{i!(j-i)!},~~(j\geq i).$
In this case, one can derive Cauchy matrix-type Jordan block solutions.

In the following two sections, we show the Cauchy matrix reduction technique
to consider the Cauchy matrix solutions for the local and nonlocal
equations \eqref{cmKdV}, \eqref{n-sG}, \eqref{n-pNLS-1} and \eqref{n-pmKdV-1}.
For this purpose, we take $N_1=N_2=N$. In addition, for convenience, we denote
\begin{align}
k_i=k_{2,i}, \quad c_i=c_{2,i}, \quad d_i=d_{2,i}, \quad \vartheta_i=c_id_i,
\end{align}
with $i=1,2,\ldots,N$.

\section{Cauchy matrix solutions for the equations \eqref{cmKdV} and \eqref{n-sG}}

In this section, we are going to discuss Cauchy matrix solutions to the
local and nonlocal cmKdV equation \eqref{cmKdV} and the local and nonlocal csG equation \eqref{n-sG}.
We will design suitable constraints on the pairs $(\Og_1,\Og_2)$, $(\br_1,\br_2)$, $(\bs_1,\bs_2)$
and $(\bM_1, \bM_2)$ in the determining equation set \eqref{DES-C} so that \eqref{AKNS-solu}
coincides with the reductions \eqref{3AKNS-red} and \eqref{-1AKNS-red}.

\subsection{Local and nonlocal cmKdV equation}

To derive solutions for the local and nonlocal cmKdV equation
\eqref{cmKdV}, we restrict ourselves in Theorem \ref{Thm-1} with $n=3$.
For the Cauchy matrix solution of equation \eqref{cmKdV}, we show the results in Theorem \ref{so-nlcmKdV}.
\begin{Thm}
\label{so-nlcmKdV}
The function
\begin{align}\label{nlcmKdV-u-solu}
u(x,t)=\bs^{\st}_2(\bI-\bM_2\bM_1)^{-1}\br_2,
\end{align}
solves the local and nonlocal cmKdV equation \eqref{cmKdV},
where the entities satisfy \eqref{DES-C} $(n=3)$ and simultaneously obey the constraints
\begin{align}\label{nlcmKdV-M1M12}
\br_1=\varepsilon \bT \br^*_2(\sigma x,\sigma t),\quad \bs_1=\varepsilon \bT^{\st^{-1}}\bs_2^*(\sigma x,\sigma t),\quad
\bM_1=-\delta\sigma\bT \bM^*_2(\sigma x,\sigma t)\bT^*,
\end{align}
in which $\bT\in \mathbb{C}_{N\times N}$ is a constant matrix satisfying
\begin{align}\label{nlcmKdV-at-eq}
\Og_1\bT+\sigma\bT\Og^*_2=0,\quad \bC^+_1=\varepsilon\bT\bC_2^{+^*},\quad
\bD^+_1=\varepsilon\bT^{\st^{-1}}\bD_2^{+^*},\quad \varepsilon^2=\varepsilon^{*^2}=\delta.
\end{align}
\end{Thm}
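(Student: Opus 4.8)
The plan is to verify that the constraints \eqref{nlcmKdV-M1M12}--\eqref{nlcmKdV-at-eq} are compatible with the determining equation set \eqref{DES-C} (so that such a reduced solution actually exists), and then to check that under these constraints the Cauchy matrix solution \eqref{AKNS-solu} of the AKNS(3) equation \eqref{3th-AKNS} satisfies the complex reduction \eqref{3AKNS-red}, i.e. $v(x,t)=\delta u^*(\sigma x,\sigma t)$; Theorem \ref{Thm-1} with $n=3$ then immediately forces $u$ in \eqref{nlcmKdV-u-solu} to solve \eqref{cmKdV}. The first task is a consistency check: starting from $\br_1=\varepsilon\bT\br_2^*(\sigma x,\sigma t)$ and $\bs_1=\varepsilon\bT^{\st^{-1}}\bs_2^*(\sigma x,\sigma t)$, I differentiate in $x$ and in $t$ and use the dispersion relations \eqref{DES-x-C}--\eqref{DES-t-C} for $\br_2,\bs_2$ (with the chain-rule factor $\sigma$ from the argument $\sigma x,\sigma t$), together with the intertwining relation $\Og_1\bT=-\sigma\bT\Og_2^*$ from \eqref{nlcmKdV-at-eq}; the latter, iterated, gives $\Og_1^n\bT=(-\sigma)^n\bT(\Og_2^*)^n$, and for $n=3$ the sign works out as $(-\sigma)^3=-\sigma^3=-\sigma$ (since $\sigma=\pm1$), exactly matching what the $\br_1$-dispersion relation in \eqref{DES-x-C}--\eqref{DES-t-C} requires. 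The constant-vector conditions $\bC_1^+=\varepsilon\bT\bC_2^{+\,*}$, $\bD_1^+=\varepsilon\bT^{\st^{-1}}\bD_2^{+\,*}$ are precisely the initial-value data making \eqref{rsj} consistent with the assumed form of $\br_1,\bs_1$.

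Next I would show the Sylvester equations in \eqref{DES-M12-C} are preserved. Take the first equation $\Og_2\bM_2-\bM_2\Og_1=\br_2\bs_1^{\st}$, evaluate it at $(\sigma x,\sigma t)$, conjugate it, and multiply on the left by $-\delta\sigma\bT$ and on the right by $\bT^*$; using $\Og_1\bT=-\sigma\bT\Og_2^*$ (equivalently $\Og_2^*=-\sigma\bT^{-1}\Og_1\bT$) and its transpose-conjugate companion $\Og_1^{\st}\bT^{\st^{-1}}=-\sigma\bT^{\st^{-1}}\Og_2^{\st\,*}$, one rewrites the conjugated-shifted equation as a Sylvester equation for $\bM_1:=-\delta\sigma\bT\bM_2^*(\sigma x,\sigma t)\bT^*$ with right-hand side $\br_1\bs_2^{\st}$; here the scalar bookkeeping $\varepsilon^2=\delta$, $\varepsilon^{*2}=\delta$ and $\sigma^2=1$ is what makes the coefficient collapse correctly. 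By uniqueness of the Sylvester solution (guaranteed by \eqref{con-Og}) this forces the third relation in \eqref{nlcmKdV-M1M12}; the second Sylvester equation in \eqref{DES-M12-C} is then automatic by the same manipulation.

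Finally, with all three constraints in force, I substitute into $v=\bs_1^{\st}(\bI-\bM_1\bM_2)^{-1}\br_1$ from \eqref{AKNS-solu}: replacing $\br_1,\bs_1,\bM_1$ by their expressions in $\br_2^*,\bs_2^*,\bM_2^*$ evaluated at $(\sigma x,\sigma t)$, the matrices $\bT$ and $\bT^*$ together with the scalar $-\delta\sigma$ telescope — $\bM_1\bM_2$ becomes a $\bT$-conjugate of $\delta\sigma\cdot(-1)\cdot\bM_2^*\bM_1^*=-\delta\sigma\,(\bM_2\bM_1)^*$... careful tracking of the $\sigma,\delta$ factors gives $(\bI-\bM_1\bM_2)^{-1}=\bT^{\st}\,(\bI-\bM_2^*\bM_1^*)^{-1}\bT^{\st^{-1}}$ up to the conjugation/shift, and the $\varepsilon$'s from $\br_1,\bs_1$ contribute $\varepsilon^2=\delta$ — so that $v(x,t)=\delta\,\bigl[\bs_2^{\st}(\bI-\bM_2\bM_1)^{-1}\br_2\bigr]^*(\sigma x,\sigma t)=\delta u^*(\sigma x,\sigma t)$, as required. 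The main obstacle is purely bookkeeping: getting every $\sigma$, $\delta$, $\varepsilon$ and transpose/conjugate to land in the right place when propagating the reduction through the matrix inverses, and in particular confirming that the choice of power $n=3$ (as opposed to $n=2$) is exactly what makes the sign in the $t$-evolution compatible — this is why AKNS(3), not AKNS(2), reduces to the cmKdV equation. Once the algebraic identities $\Og_1\bT=-\sigma\bT\Og_2^*$ and $\varepsilon^2=\varepsilon^{*2}=\delta$ are in hand, each step is a short computation.
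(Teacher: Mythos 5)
Your proposal follows the paper's proof essentially verbatim: confirm that the constraints \eqref{nlcmKdV-M1M12}--\eqref{nlcmKdV-at-eq} are compatible with the dispersion relations \eqref{DES-x-C}--\eqref{DES-t-C} and with the Sylvester equations \eqref{DES-M12-C} (with uniqueness of the Sylvester solution forcing the $\bM_1$ relation), then substitute into $v=\bs_1^{\st}(\bI-\bM_1\bM_2)^{-1}\br_1$ to get $v(x,t)=\delta u^*(\sigma x,\sigma t)$ and invoke Theorem \ref{Thm-1} with $n=3$. The only blemishes are cosmetic slips in the sketched final step --- the similarity acting on $(\bI-\bM_1\bM_2)^{-1}$ is by $\bT^{-1}(\cdot)\,\bT$ rather than $\bT^{\st}(\cdot)\,\bT^{\st^{-1}}$, and the two factors of $-\delta\sigma$ in $\bM_1\bM_2$ multiply to $+1$, which is precisely what lets $\varepsilon^2=\delta$ supply the final coefficient --- and these do not affect the correctness of the argument.
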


\begin{proof}

This is by direct computation. Under the assumption \eqref{nlcmKdV-at-eq}, starting from \eqref{rsj} we have
\begin{align}
\br_1(x,t)=& \exp{\Bigl(\Og_1 x-\Og_1^{3}t \Bigr)}\bC_1^+ \nn \\
=& \exp{\Bigl(\bT(-\Og^*_2 \sigma x+\Og^{*^3}_2\sigma t)\bT^{-1}\Bigr)}\bC_1^+ \nn \\
=& \bT\exp{\Bigl(-\Og^*_2 \sigma x+\Og^{*^3}_2\sigma t\Bigr)}\bT^{-1}\bC_1^+ \nn \\
=& \varepsilon \bT \br^*_2(\sigma x,\sigma t).
\end{align}
A similar computation as above yields $\bs_1(x,t)=\varepsilon \bT^{\st^{-1}} \bs^*_2(\sigma x,\sigma t)$.
Substituting $\Og_1=-\sigma\bT\Og^*_2\bT^{-1}$ into the first equation in \eqref{DES-M12-C},
applying the second equation on \eqref{DES-M12-C} and noticing the solvability of the Sylvester equation,
we arrive at the relation $\bM_1(x,t)=-\delta\sigma\bT \bM^*_2(\sigma x,\sigma t)\bT^*$.

In terms of the relations \eqref{nlcmKdV-M1M12} and \eqref{nlcmKdV-at-eq}, we immediately get
\begin{align*}
v(x,t)=& \bs^{\st}_1(\bI-\bM_1\bM_2)^{-1}\br_1 \nn \\
=& \varepsilon^2 \bs^{*^{\st}}_2(\sigma x,\sigma t)(\bI-\bM_2^*(\sigma x,\sigma t)\bM^*_1(\sigma x,\sigma t))^{-1}\br^*_2(\sigma x,\sigma t) \nn \\
=& \delta u^*(\sigma x,\sigma t),
\end{align*}
which is nothing but the reduction relation \eqref{3AKNS-red}. Therefore, we complete the verification.
\end{proof}

\subsubsection{Exact solutions}

According to the Theorem \ref{so-nlcmKdV}, we know that solution to the local and nonlocal cmKdV equation \eqref{cmKdV} can be
expressed as
\begin{subequations}
\begin{align}\label{cmKdV-u-solu-1}
u(x,t)=\bs^{\st}_2(\bI+\delta\sigma\bM_2\bT \bM^*_2(\sigma x,\sigma t)\bT^*)^{-1}\br_2,
\end{align}
where the components $\br_2$ and $\bs_2$ read
\begin{align}
\label{rsj-mKdV}
\br_{2}=\mbox{exp}(-\Og_2x+\Og^3_2t)\bC^+_2, \quad \bs_{2}=\mbox{exp}(-\Og^{\st}_2x+(\Og_2^{\st})^3t)\bD^+_2,
\end{align}
and $\bM_2$ and $\bT$ are determined by
\begin{align}
\label{DES-M12-C-mKdV}
\Og_2 \bM_2\bT+\sigma\bM_2\bT\Og^*_2=\varepsilon\br_2\, \bs^{*^{\st}}_2(\sigma x,\sigma t).
\end{align}
\end{subequations}

\noindent {\bf Remark 1}: \textit{Denoting $\bM_2\bT\rightarrow\bM_2$, we can simplify solution \eqref{cmKdV-u-solu-1} together with
\eqref{DES-M12-C-mKdV} as
\begin{subequations}
\label{cmKdV-u-solu-2}
\begin{align}
& u(x,t)=\bs^{\st}_2(\bI+\delta\sigma\bM_2 \bM^*_2(\sigma x,\sigma t))^{-1}\br_2, \\
& \label{DES-M12-C-mKdV-nt} \Og_2 \bM_2+\sigma\bM_2\Og^*_2=\varepsilon\br_2\, \bs^{*^{\st}}_2(\sigma x,\sigma t).
\end{align}
\end{subequations}}

We now pay attention to the equations \eqref{rsj-mKdV} and \eqref{cmKdV-u-solu-2}
to give the expressions of solutions for the local and nonlocal cmKdV equation \eqref{cmKdV} with different $(\varepsilon,\sigma)$.
To obtain soliton solutions, we take $\Og_2=\Og_{\ty{D},2}$. We
list the soliton solution formulae with different $(\varepsilon,\sigma)$ in Table 1.
\begin{center}
\footnotesize \setlength{\tabcolsep}{8pt}
\renewcommand{\arraystretch}{1.5}
\begin{tabular}[htbp]{|l|l|l|l|}
\hline
$(\varepsilon,\sigma)$
& Soliton solutions & Sylvester equation
\\
\hline
(1,1)
& $u(x,t)=\bs^{\st}_2(\bI+\bM^{(1)}_{2}\bM^{(1)^*}_{2})^{-1}\br_2 $ & $\Og_{\ty{D},2} \bM^{(1)}_{2}+\bM^{(1)}_{2}\Og^*_{\ty{D},2}=\br_2\, \bs^{*^{\st}}_2$ \\
\hline
(i,1)
& $u(x,t)=\bs^{\st}_2(\bI-\bM^{(2)}_{2}\bM^{{(2)}^*}_{2})^{-1}\br_2 $  & $\Og_{\ty{D},2} \bM^{(2)}_{2}+\bM^{(2)}_{2}\Og^*_{\ty{D},2}=i\br_2\, \bs^{*^{\st}}_2$\\
\hline
(1,-1)
& $u(x,t)=\bs^{\st}_2(\bI-\bM^{(3)}_{2}\bM^{(3)^*}_{2}(-x,-t))^{-1}\br_2 $  & $\Og_{\ty{D},2} \bM^{(3)}_{2}-\bM^{(3)}_{2}\Og^*_{\ty{D},2}=\br_2\, \bs^{*^{\st}}_2(-x,-t)$\\
\hline
(i,-1)
& $u(x,t)=\bs^{\st}_{2}(\bI+\bM^{(4)}_{2}\bM^{(4)^*}_{2}(-x,-t))^{-1}\br_2 $  & $\Og_{\ty{D},2} \bM^{(4)}_{2}-\bM^{(4)}_{2}\Og^*_{\ty{D},2}=i\br_2\, \bs^{*^{\st}}_2(-x,-t)$\\
\hline
\end{tabular}
\end{center}
\begin{center}
\begin{minipage}{11cm}{\footnotesize
{Table 1. \emph{Soliton solutions for local and nonlocal cmKdV equation \eqref{cmKdV}}}}
\end{minipage}
\end{center}
In Table 1, $\br_{2},~\bs_{2}$ and $\bM^{(j)}_2=\bM^{(j)}_2(x,t) ~(j=1,2,3,4)$ are given by
\begin{subequations}
\label{rsM-mKdV-D}
\begin{align}
& \label{rsj-mKdV-D}
\br_{2}=\mbox{exp}(-\Og_{\ty{D},2}x+\Og^3_{\ty{D},2}t)\bC^+_{\ty{D},2}, \quad \bs_{2}=\mbox{exp}(-\Og^{\st}_{\ty{D},2}x+(\Og_{\ty{D},2}^{\st})^3t)\bD^+_{\ty{D},2}, \\
& \bM^{(1)}_{2}=-i\bM^{(2)}_{2}=\mbox{exp}(-\Og_{\ty{D},2}x+\Og^3_{\ty{D},2}t)\bC^-_{\ty{D},2}\cdot
\bG_{\ty{D}}^{(12)}\cdot\bD^{-^*}_{\ty{D},2}\mbox{exp}(-\Og^*_{\ty{D},2}x+\Og^{*^3}_{\ty{D},2}t), \\
& \bM^{(3)}_{2}=-i\bM^{(4)}_{2}=\mbox{exp}(-\Og_{\ty{D},2}x+\Og^3_{\ty{D},2}t)\bC^-_{\ty{D},2}\cdot
\bG_{\ty{D}}^{(34)}\cdot\bD^{-^*}_{\ty{D},2}\mbox{exp}(\Og^*_{\ty{D},2}x-\Og^{*^3}_{\ty{D},2}t),
\end{align}
\end{subequations}
where $\bC^{\pm}_{\ty{D},2},~\bD^{\pm}_{\ty{D},2}$ are defined in \eqref{CD-GDD} and
\begin{subequations}
\label{G1234}
\begin{align}
\label{G12} \bG^{(12)}_{\ty{D}}=(g^{(12)}_{i,j})_{N\times N},\quad g^{(12)}_{i,j}=\frac{1}{k_i+k_j^*}, \\
\label{G34} \bG^{(34)}_{\ty{D}}=(g^{(34)}_{i,j})_{N\times N},\quad g^{(34)}_{i,j}=\frac{1}{k_i-k_j^*}.
\end{align}
\end{subequations}

For deriving the Cauchy matrix-type Jordan block solutions, we take $\Og_2=\Og_{\ty{J},2}$ and summarize the Jordan block solutions in Table 2,
\begin{center}
\footnotesize \setlength{\tabcolsep}{8pt}
\renewcommand{\arraystretch}{1.5}
\begin{tabular}[htbp]{|l|l|l|l|}
\hline
$(\varepsilon,\sigma)$
& Jordan block solutions & Sylvester equation
\\
\hline
(1,1)
 & $u(x,t)=\bs^{\st}_2(\bI+\wh{\bM}^{(1)}_2\wh{\bM}^{(1)^*}_2)^{-1}\br_2$ &
$\Og_{\ty{J},2} \wh{\bM}^{(1)}_2+\wh{\bM}^{(1)}_2\Og^*_{\ty{J},2}=\br_2\, \bs^{*^{\st}}_2$ \\
\hline
(i,1)
& $u(x,t)=\bs^{\st}_2(\bI-\wh{\bM}^{(2)}_2\wh{\bM}^{(2)^*}_2)^{-1}\br_2$ &
$\Og_{\ty{J},2} \wh{\bM}^{(2)}_2+\wh{\bM}^{(2)}_2\Og^*_{\ty{J},2}=i\br_2\, \bs^{*^{\st}}_2$ \\
\hline
(1,-1)
 &$u(x,t)=\bs^{\st}_2(\bI-\wt{\bM}^{(1)}_2\wt{\bM}^{(1)^*}_2(-x,-t))^{-1}\br_2$ &
$\Og_{\ty{J},2} \wt{\bM}^{(1)}_2-\wt{\bM}^{(1)}_2\Og^*_{\ty{J},2}=\br_2\, \bs^{*^{\st}}_2(-x,-t)$ \\
\hline
(i,-1)
 &$u(x,t)=\bs^{\st}_2(\bI+\wt{\bM}^{(2)}_2\wt{\bM}^{(2)^*}_2(-x,-t))^{-1}\br_2 $ &
$\Og_{\ty{J},2} \wt{\bM}^{(2)}_2-\wt{\bM}^{(2)}_2\Og^*_{\ty{J},2}=i\br_2\, \bs^{*^{\st}}_2(-x,-t)$ \\
\hline
\end{tabular}
\end{center}
\begin{center}
\begin{minipage}{11cm}{\footnotesize
{Table 2. \emph{Jordan block solutions for local and nonlocal cmKdV equation \eqref{cmKdV}}}}
\end{minipage}
\end{center}
in which
\begin{subequations}
\label{rsj-mKdV-OM}
\begin{align}
& \label{rsj-mKdV-J}
\br_{2}=\mbox{exp}(-\Og_{\ty{J},2}x+\Og^3_{\ty{J},2}t)\bC^+_{\ty{J},2}, \quad \bs_{2}=\mbox{exp}(-\Og^{\st}_{\ty{J},2}x+(\Og_{\ty{J},2}^{\st})^3t)\bD^+_{\ty{J},2}, \\
& \wh{\bM}^{(1)}_2=-i\wh{\bM}^{(2)}_2=\mbox{exp}(-\Og_{\ty{J},2} x+
\Og_{\ty{J},2}^3 t)\bC_{\ty{J},2}^- \cdot \wh{\bG}_{\ty{J}}^{\st}\cdot\bD_{\ty{J},2}^{-^*}\mbox{exp}(-\Og_{\ty{J},2}^{*^{\st}}x
+(\Og_{\ty{J},2}^{*^{\st}})^3t), \\
& \label{OM-mKdV-J} \wt{\bM}^{(1)}_2=-i\wt{\bM}^{(2)}_2=\mbox{exp}(-\Og_{\ty{J},2} x
+\Og_{\ty{J},2}^3 t)\bC_{\ty{J},2}^- \cdot\wt{\bG}_{\ty{J}}^{\st}\cdot\bD_{\ty{J},2}^
{-^*}\mbox{exp}(\Og_{\ty{J},2}^{*^{\st}}x-(\Og_{\ty{J},2}^{*^{\st}})^3 t),
\end{align}
\end{subequations}
where $\bC^{\pm}_{\ty{J},2},~\bD^{\pm}_{\ty{J},2}$ are defined in \eqref{CD-J} and
\begin{subequations}
\begin{align}
& \wh{\bG}_{\ty{J}}=(\wh{g}_{i,j})_{N\times N}, \quad \wh{g}_{i,j}=\left(\mathrm{C}^{i-1}_{i+j-2}\frac{(-1)^{j+1}}{(k_1+k^*_1)^{i+j-1}}\right), \\
& \wt{\bG}_{\ty{J}}=(\wt{g}_{i,j})_{N\times N}, \quad
\wt{g}_{i,j}=\left(\mathrm{C}^{i-1}_{i+j-2}\frac{(-1)^{j+1}}{(k_1-k^*_1)^{i+j-1}}\right).
\end{align}
\end{subequations}

\subsubsection{Dynamics}

We proceed now with presenting some explicit solutions for the local and nonlocal cmKdV equation \eqref{cmKdV}.
Moreover, we shall identify their dynamic properties.
For the sake of brevity, we introduce some notations
\begin{align}
\label{FHftheta}
\xi_{i}=2(k_i^3t-k_ix), \quad e^{\theta_{ij}}=\frac{1}{(k^*_i+k_j)^2}, \quad e^{\epsilon_{ij}}=\frac{1}{(k^*_i-k_j)^2},
\end{align}
with $i,j=1,2,\ldots,N$.

When $N=1$, 1-soliton solutions for equation \eqref{cmKdV} can be described as
\begin{subequations}
\label{cmKdV-u-soli}
\begin{align}
& \label{cmKdV-soli-11} u_{11,(\varepsilon=1,\sigma=1)}=\frac{\vartheta_1}{
 e^{-\xi_{1}}+|\vartheta_1|^{2}e^{\xi_{1}^*+\theta_{11}}}, \\
& \label{cmKdV-soli-12} u_{11,(\varepsilon=i,\sigma=1)}=\frac{\vartheta_1}{
 e^{-\xi_{1}}-|\vartheta_1|^{2}e^{\xi_{1}^*+\theta_{11}}}, \\
& \label{cmKdV-soli-21} u_{11,(\varepsilon=1,\sigma=-1)}=\frac{\vartheta_1}{
e^{-\xi_{1}}+|\vartheta_1|^{2}e^{-\xi_{1}^*+\epsilon_{11}}}, \\
& \label{cmKdV-soli-22} u_{11,(\varepsilon=i,\sigma=-1)}=\frac{\vartheta_1}{
e^{-\xi_{1}}-|\vartheta_1|^{2}e^{-\xi_{1}^*+\epsilon_{11}}}.
\end{align}
\end{subequations}
These solutions can recover to the ones obtained in \cite{SAPM-2018} by taking
simple transformations.

Because these solutions are
complex form, we turn to present analysis of the dynamics of $|u_{11}|^2$. With no loss of generality
we just discuss solutions \eqref{cmKdV-soli-11} and \eqref{cmKdV-soli-21}.
We start with the 1-soliton solution \eqref{cmKdV-soli-11}. Taking
\begin{align}
\label{k1-com}
k_1=\alpha+i\beta
\end{align}
into solution \eqref{cmKdV-soli-11} and by direct calculation, we get
\begin{align}
\label{var1si1-mKdV}
|u_{11}|^2_{(\varepsilon=1,\sigma=1)}=\alpha^2\mbox{sech}^2\left[2\alpha\big(x-(\alpha^2-3\beta^2)t+\hbar\big)\right],
\end{align}
here and hereafter $\hbar=\frac{1}{2\alpha} \ln\frac{2\alpha}{|\vartheta_1|}$.
This solution describes a stable uni-directional travelling wave, which travels with a fixed amplitude $\alpha^2$, constant velocity $\alpha^2-3\beta^2$,
and top trajectory $x=(\alpha^2-3\beta^2)t-\hbar$.
When $\alpha^2=3\beta^2$, \eqref{var1si1-mKdV} implies a stationary
soliton wave. Fig. 1 depicts a stationary soliton wave and two moving soliton waves.


\begin{center}
\begin{picture}(120,100)
\put(-150,-23){\resizebox{!}{3.5cm}{\includegraphics{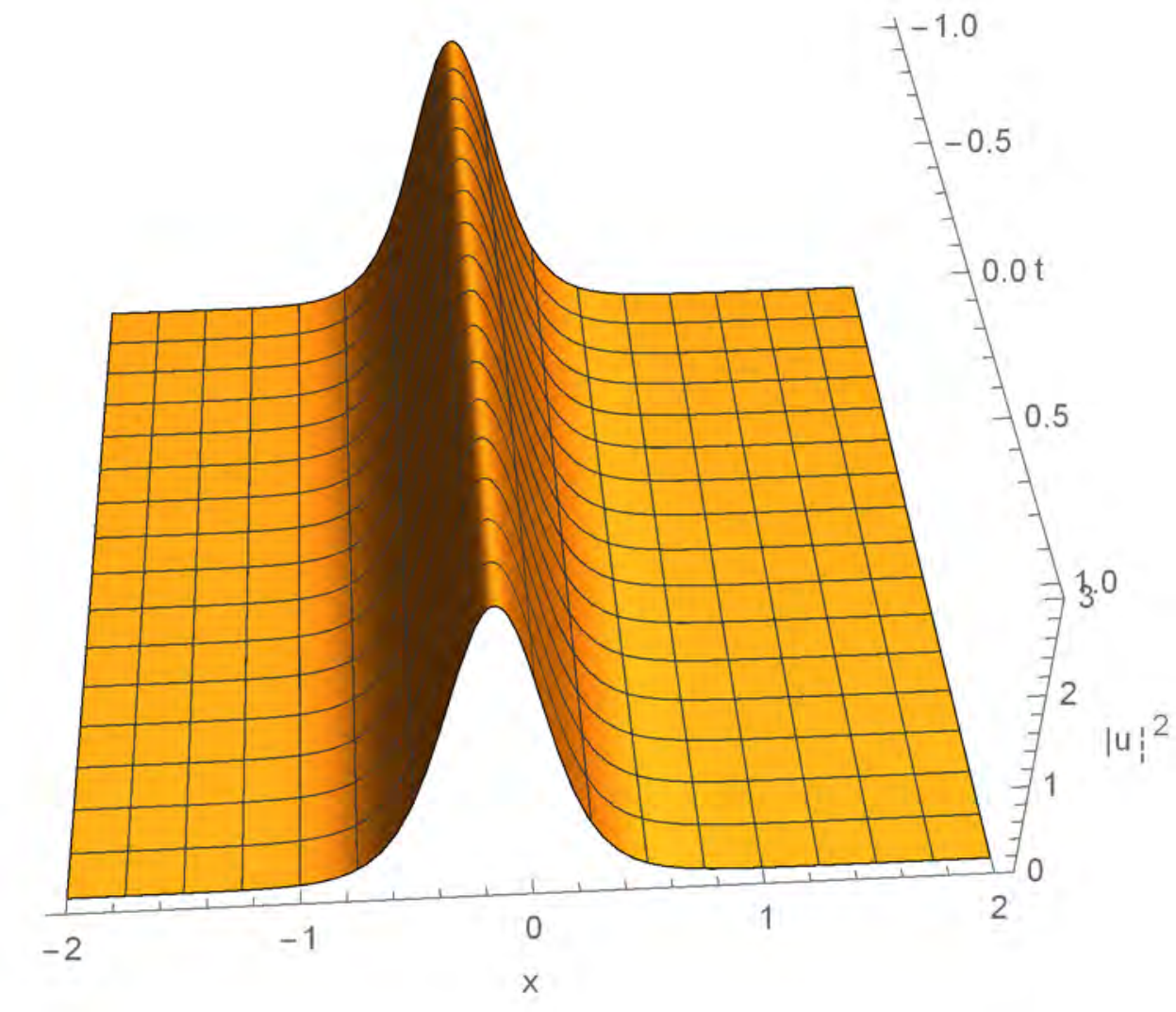}}}
\put(10,-23){\resizebox{!}{3.5cm}{\includegraphics{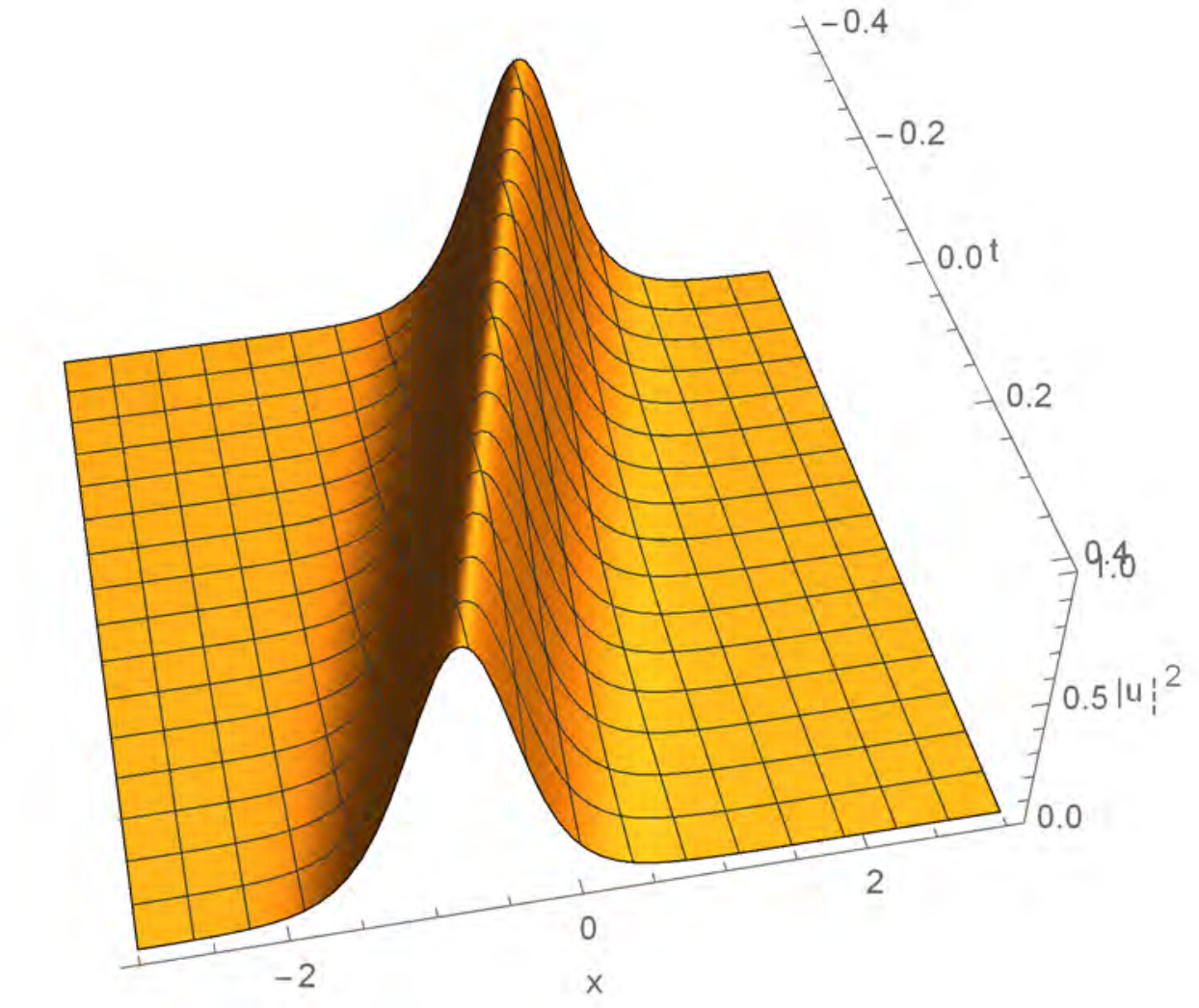}}}
\put(150,-23){\resizebox{!}{3.5cm}{\includegraphics{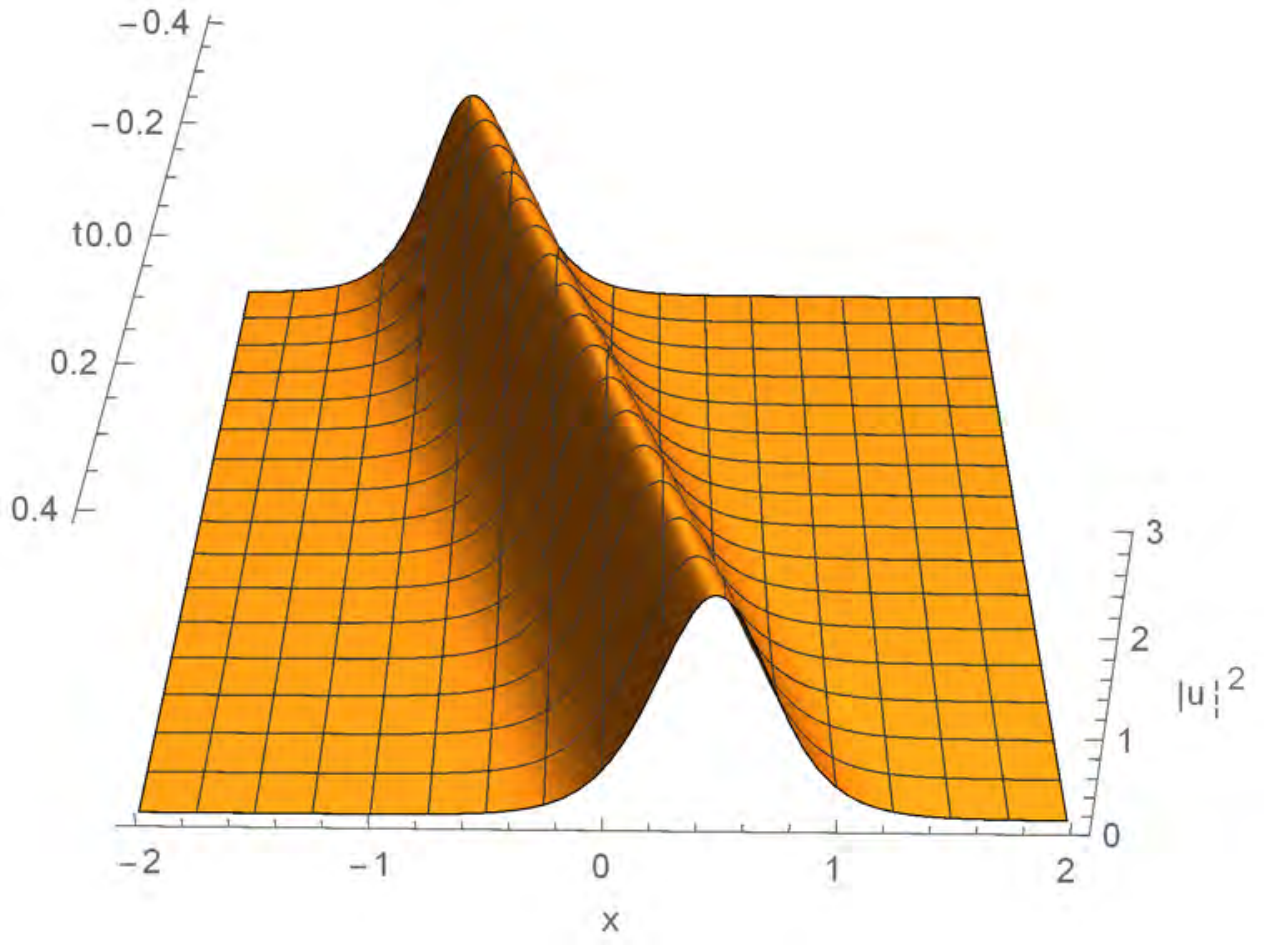}}}
\end{picture}
\end{center}
\vskip 20pt
\begin{center}
\begin{minipage}{16cm}{\footnotesize
\quad\qquad\qquad\qquad(a)\qquad\qquad\qquad\qquad \qquad\quad \qquad \qquad \qquad (b) \qquad\qquad \qquad \qquad\qquad\qquad\qquad  (c)\\
{\bf Fig. 1} shape and motion of 1-soliton solution \eqref{var1si1-mKdV} for $c_1=d_1=1+i$: (a)
a stationary wave for $k_1=\sqrt{3}+i$. (b) a moving wave for $k_1=1+i$.
(c) a moving wave for $k_1=1.5+0.5i$.}
\end{minipage}
\end{center}


To continue, we next consider solution \eqref{cmKdV-soli-21}.
We substitute \eqref{k1-com} into solution \eqref{cmKdV-soli-21} and have
\begin{align}
\label{var1si-1-mKdV}
|u_{11}|^2_{(\varepsilon=1,\sigma=-1)}=\frac{16|\beta^2\vartheta_1|^2e^{4\alpha(\alpha^2t-3\beta^2t-x)}}
{|\vartheta_1|^4+16\beta^4-8|\beta \vartheta_1|^2\cos[4\beta(x-3\alpha^2t+\beta^2t)]}.
\end{align}
Since the involvement of cosine function in denominator, there is a quasi-periodic phenomenon.
When $|\vartheta_1|^2=4\beta^2$, solution \eqref{var1si-1-mKdV} has singularities along with straight lines
\begin{align}
x(t)=(3\alpha^2-\beta^2)t+\frac{\kappa\pi}{2\beta}, \quad \kappa\in \mathbb{Z}.
\end{align}
While when $|\vartheta_1|^2\neq 4\beta^2$, \eqref{var1si-1-mKdV} is nonsingular and reaches its extrema along with straight lines
\begin{align}
\label{xt-ext}
x(t)=(3\alpha^2-\beta^2)t+\frac{1}{4\beta}\left(\gamma+2\kappa\pi-
\arcsin\frac{\alpha(|\vartheta_1|^4+16\beta^4)}{8|\beta\vartheta_1|^2\sqrt{\alpha^2+\beta^2}}\right), \quad \kappa\in \mathbb{Z},
\end{align}
where $\sin \gamma=\frac{\alpha}{\sqrt{\alpha^2+\beta^2}}$.
The velocity is $3\alpha^2-\beta^2$.
When $3\alpha^2=\beta^2$, \eqref{var1si-1-mKdV} is a stationary solution. For a given $t$, $|u_{11}|^2_{(\varepsilon=1,\sigma=-1)}$
tends to zero for $(\alpha>0, x\rightarrow \infty)$ and  $(\alpha<0, x\rightarrow -\infty)$, respectively.
We depict this solution in Fig. 2.


\begin{center}
\begin{picture}(120,100)
\put(-150,-23){\resizebox{!}{3.5cm}{\includegraphics{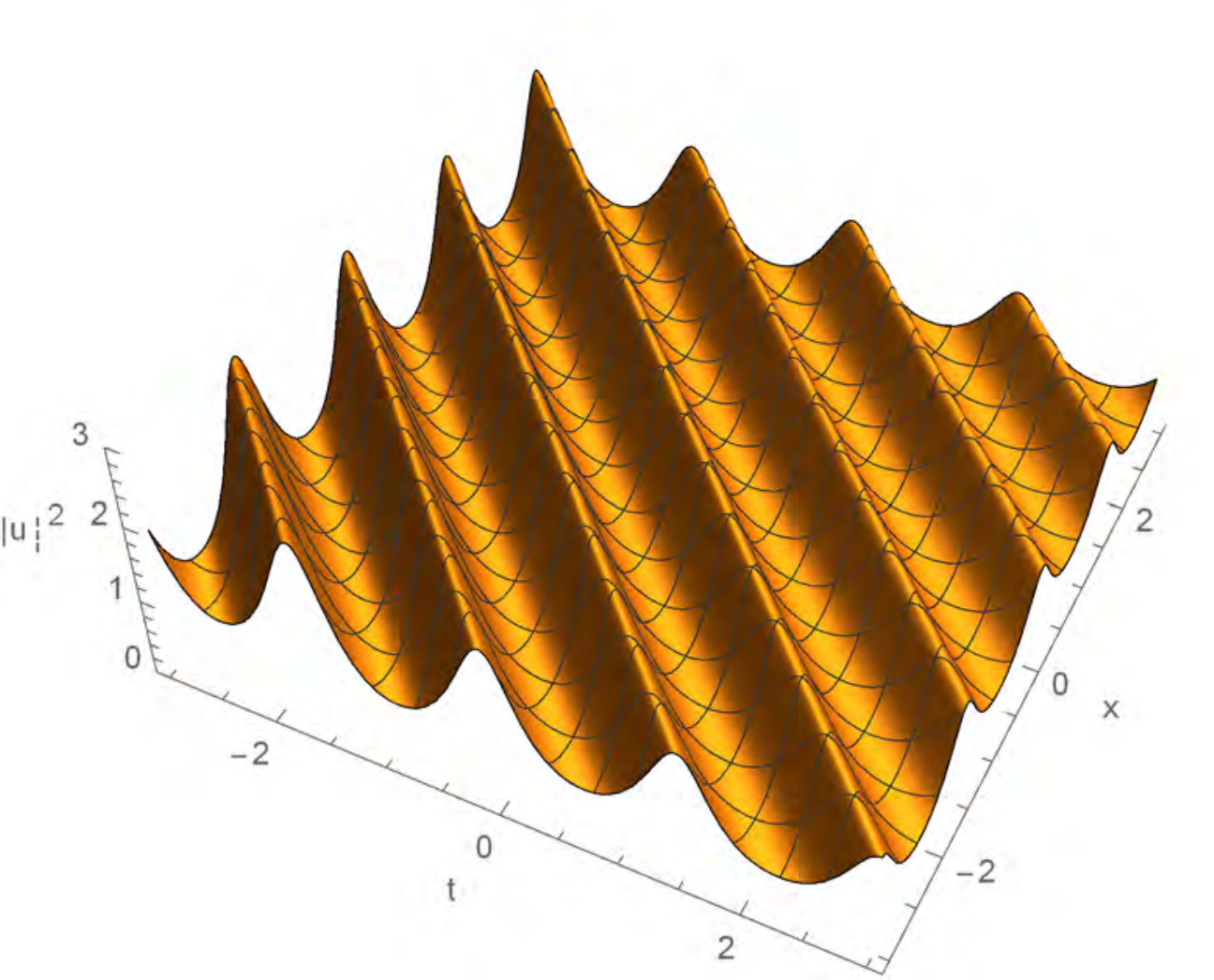}}}
\put(10,-23){\resizebox{!}{3.5cm}{\includegraphics{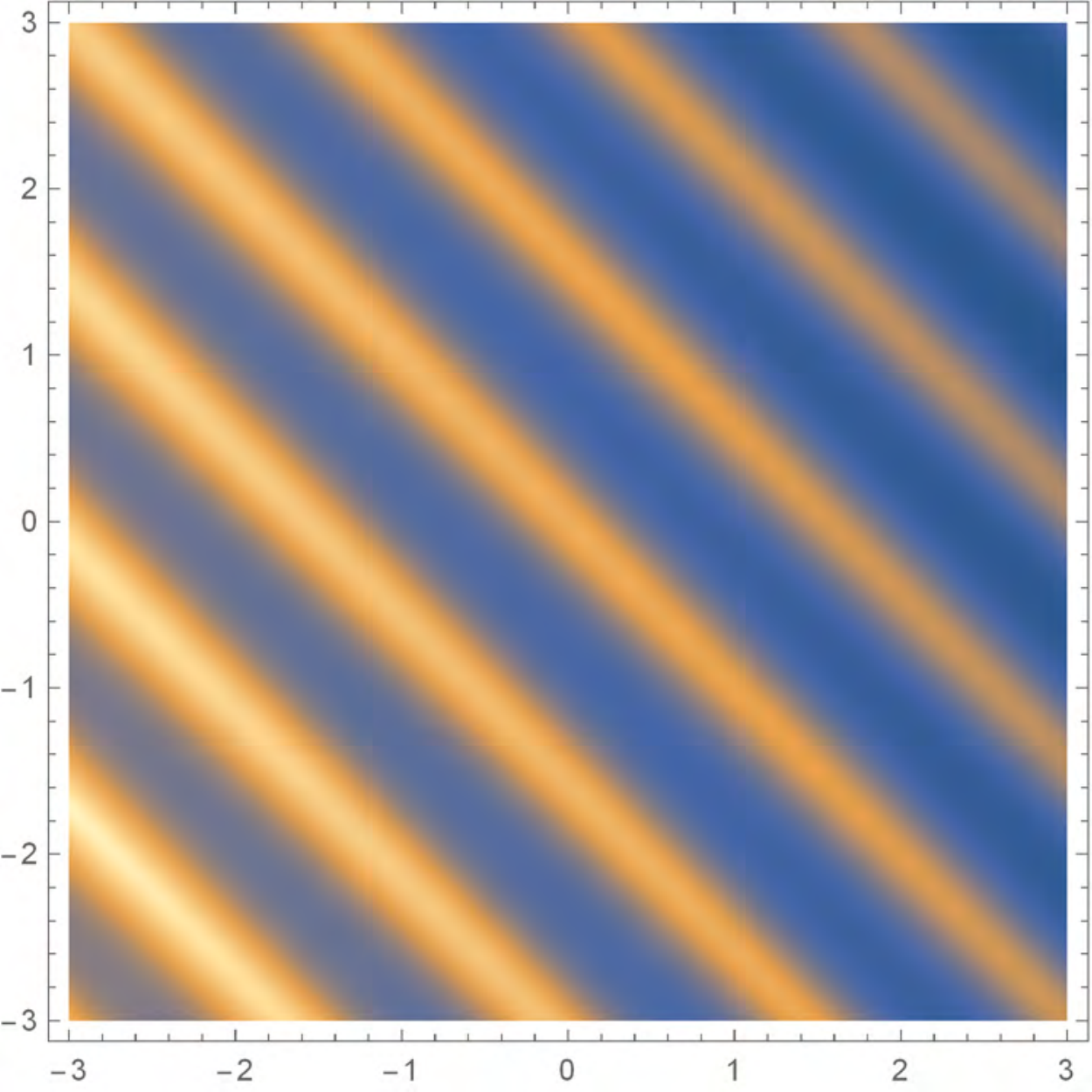}}}
\put(150,-23){\resizebox{!}{3.5cm}{\includegraphics{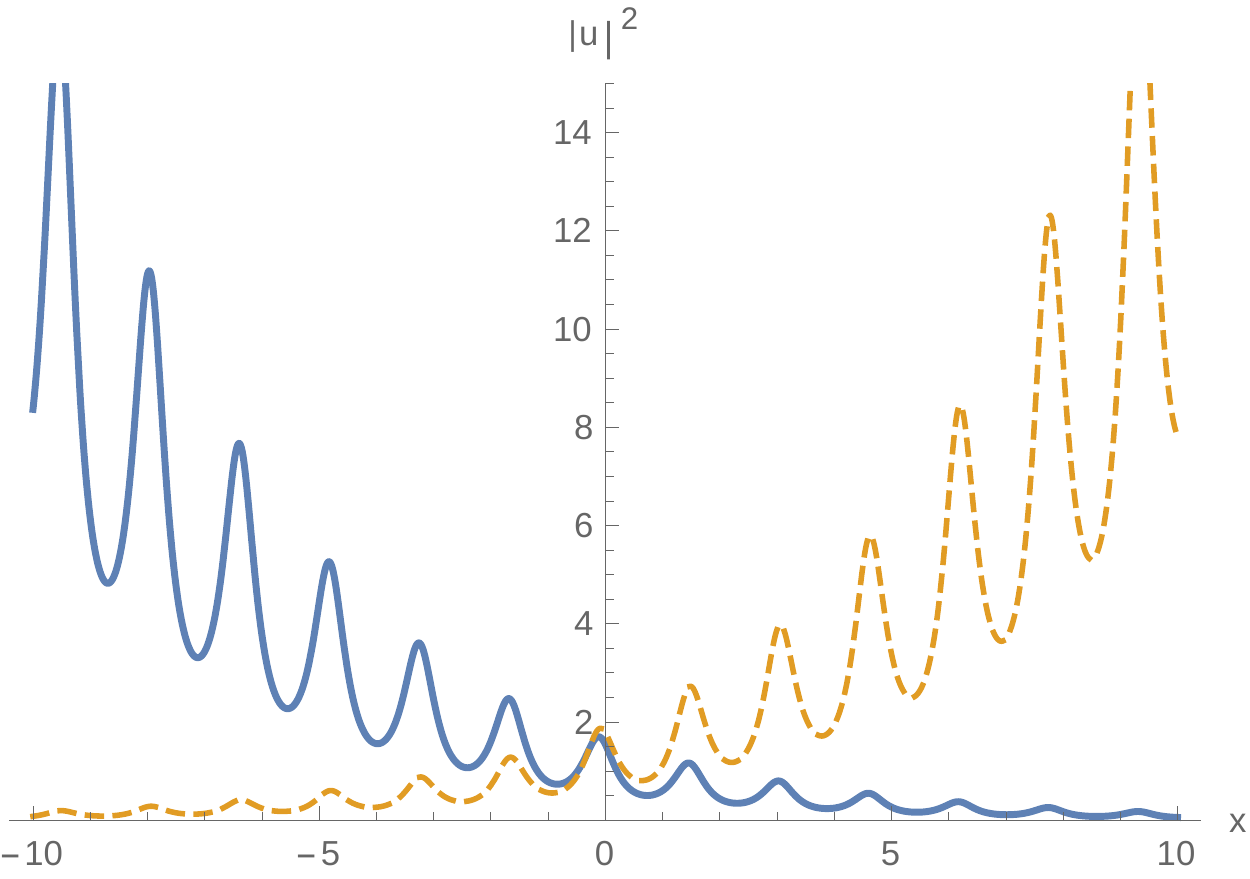}}}
\end{picture}
\end{center}
\vskip 20pt
\begin{center}
\begin{minipage}{16cm}{\footnotesize
\quad\qquad\qquad\qquad(a)\qquad\qquad\qquad\qquad \qquad\qquad\quad \qquad \quad  (b) \qquad\qquad \qquad\qquad\qquad\qquad\qquad\quad  (c)\\
{\bf Fig. 2} shape and motion of 1-soliton solution \eqref{var1si-1-mKdV} for $c_1=d_1=1$: (a)
3D-plot for $k_1=0.01+i$. (b) a contour plot of (a) with range
$x\in [-3, 3]$ and $t\in [-3,3]$. (c) waves in solid line and dotted line stand for curves at $t=0.1$ with $k_1=0.06+i$ and $k_1=-0.06+i$, respectively.}
\end{minipage}
\end{center}


When $N=2$, with no loss of generality we just list 2-soliton solutions of \eqref{cmKdV} in the case of $(\varepsilon=1,\sigma=1)$ and $(\varepsilon=1,\sigma=-1)$.
They are $u_{12,(\varepsilon=1,\sigma=1)}=\frac{u_{\ty{D},2}}{u_{\ty{D},1}}$ with
\begin{subequations}
\label{q12-mKdV-2ss}
\begin{align}
& u_{\ty{D},1}=1+\sum_{i=1}^2\sum_{j=1}^2\vartheta^*_i\vartheta_je^{\xi^*_{i}+\xi_{j}+\theta_{ij}}+
|\vartheta_1\vartheta_2|^2|k_1-k_2|^4e^{\xi_{1}+\xi_{1}^*+\xi_{2}+\xi_{2}^*+\theta_{11}+\theta_{12}+\theta_{21}+\theta_{22}}, \\
& u_{\ty{D},2}=\sum_{i=1}^2\vartheta_ie^{\xi_{i}}+\vartheta_1\vartheta_2(k_1-k_2)^2e^{\xi_{1}+\xi_{2}}\left(\sum_{i=1}^2\vartheta_i^*e^{\xi_{i}^*+\theta_{i1}+\theta_{i2}}\right),
\end{align}
\end{subequations}
and $u_{12,(\varepsilon=1,\sigma=-1)}=\frac{u_{\ty{D},4}}{u_{\ty{D},3}}$ with
\begin{subequations}
\label{q34-mKdV-2ss}
\begin{align}
& u_{\ty{D},3}=1+\sum_{i=1}^2\sum_{j=1}^2\vartheta^*_i\vartheta_je^{-\xi^*_{i}+\xi_{j}+\epsilon_{ij}}+
|\vartheta_1\vartheta_2|^2|k_1-k_2|^4e^{\xi_{1}-\xi_{1}^*+\xi_{2}-\xi_{2}^*+\epsilon_{11}+\epsilon_{12}+\epsilon_{21}+\epsilon_{22}}, \\
& u_{\ty{D},4}=\sum_{i=1}^2\vartheta_ie^{\xi_{i}}+\vartheta_1\vartheta_2(k_1-k_2)^2e^{\xi_{1}+\xi_{2}}\left(\sum_{i=1}^2\vartheta_i^*e^{-\xi_{i}^*+\epsilon_{i1}+\epsilon_{i2}}\right).
\end{align}
\end{subequations}
We depict, respectively, these two solitary waves in Fig. 3 and Fig. 4.


\begin{center}
\begin{picture}(120,100)
\put(-80,-23){\resizebox{!}{4.0cm}{\includegraphics{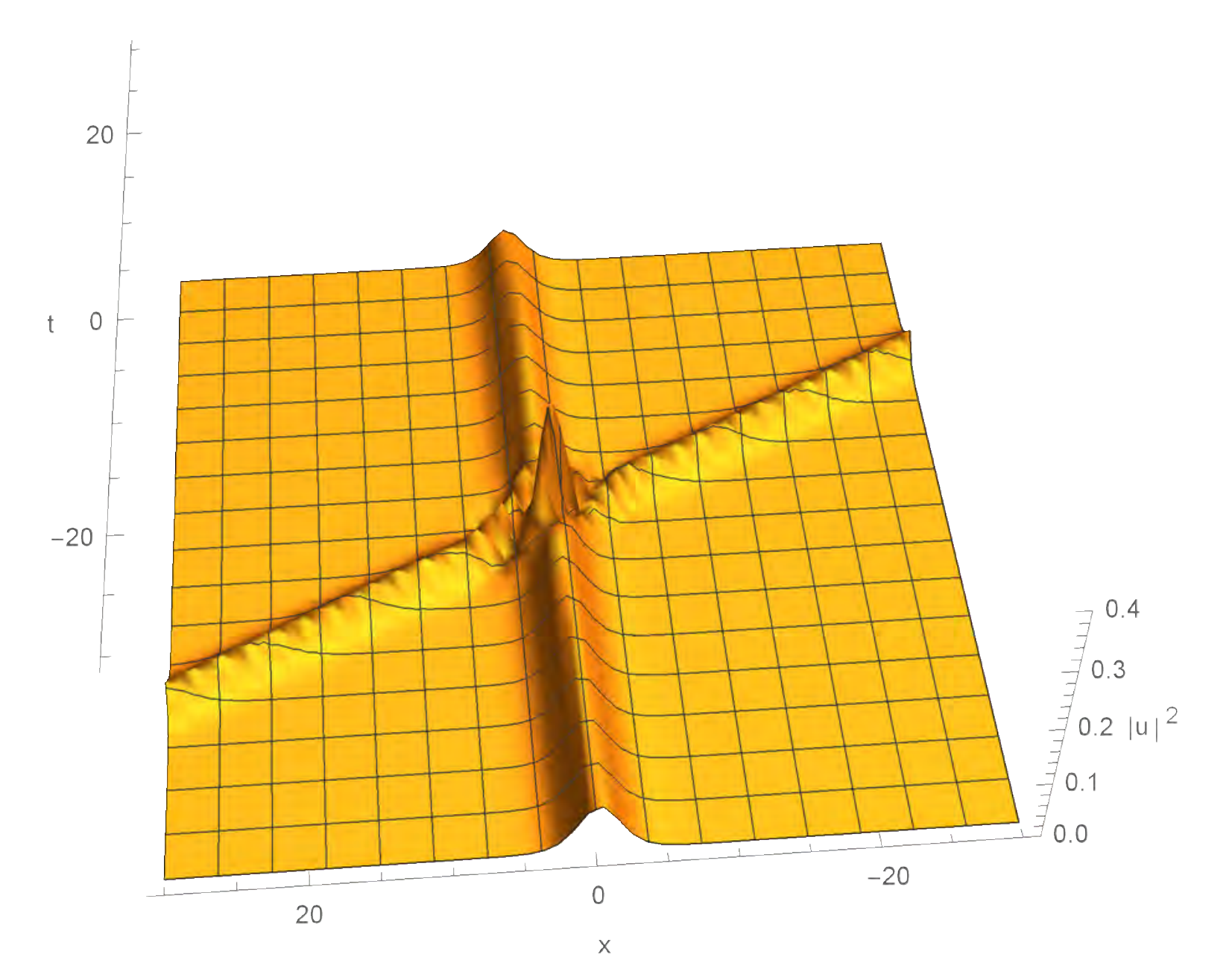}}}
\put(100,-23){\resizebox{!}{3.5cm}{\includegraphics{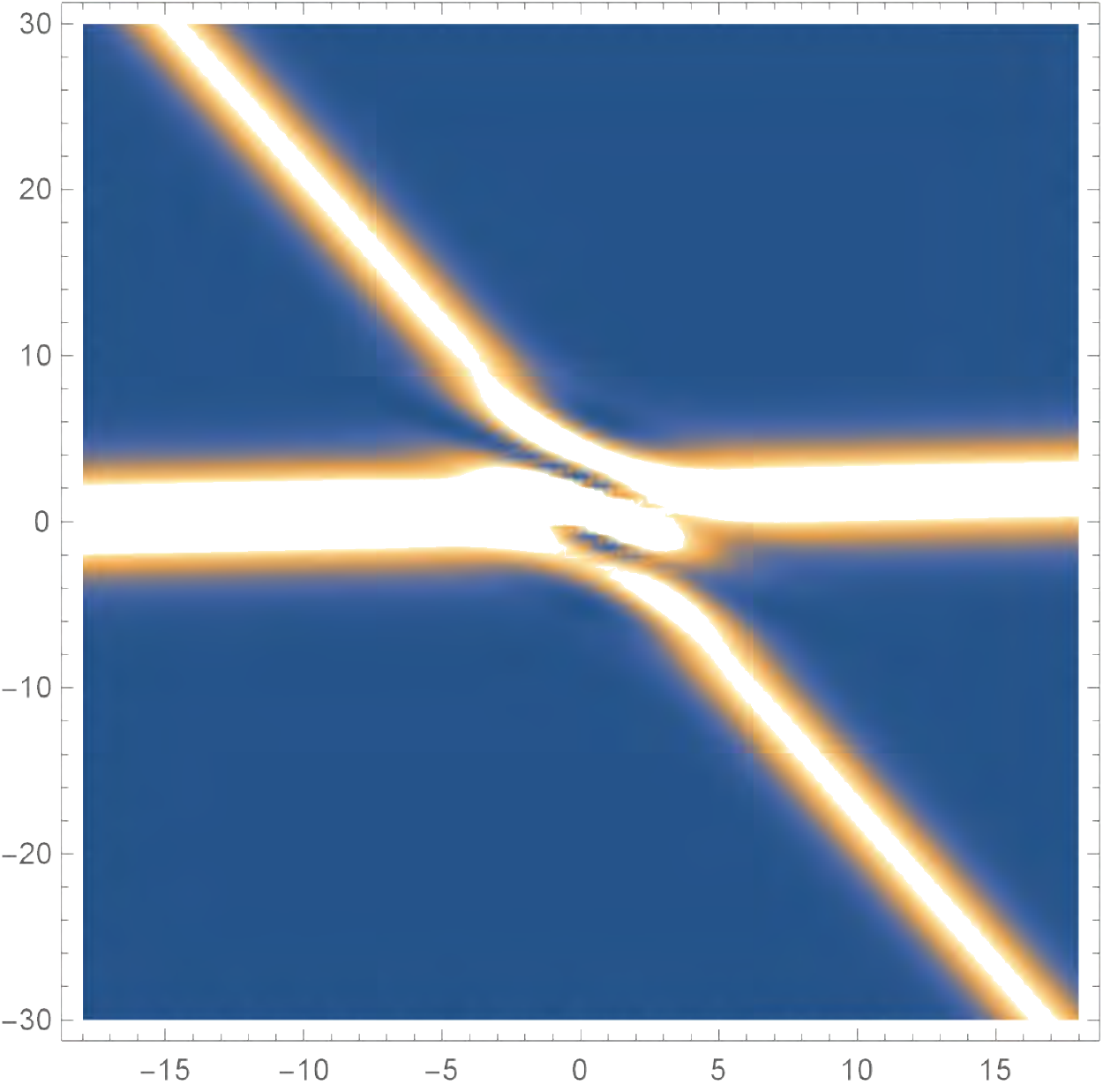}}}
\end{picture}
\end{center}
\vskip 20pt
\begin{center}
\begin{minipage}{16cm}{\footnotesize
\quad\qquad\qquad\qquad\qquad\qquad\qquad\qquad \qquad(a)\quad \qquad \qquad\qquad\qquad \qquad \qquad\qquad\qquad(b)\qquad \qquad}
\end{minipage}
\end{center}
\vskip 10pt
\begin{center}
\begin{picture}(120,80)
\put(-150,-23){\resizebox{!}{3.5cm}{\includegraphics{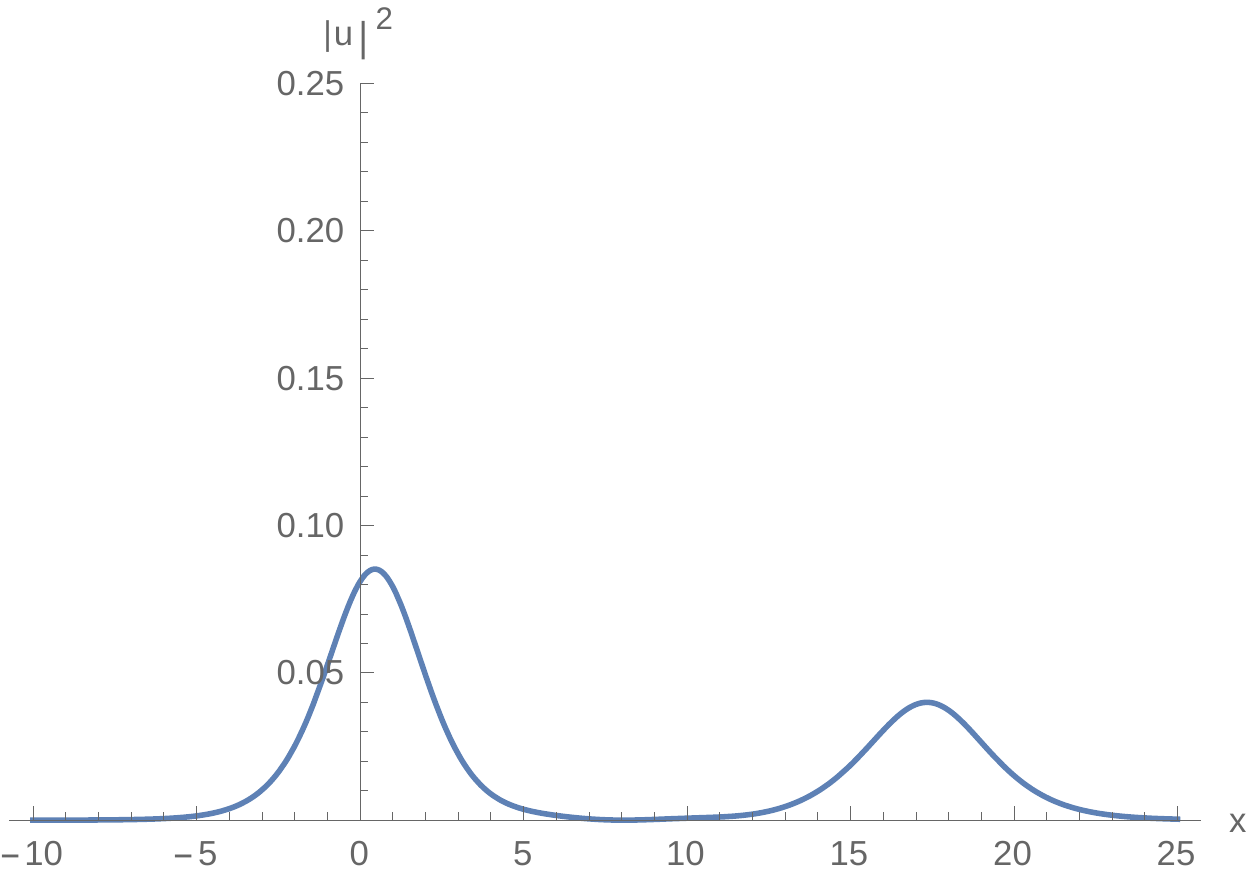}}}
\put(10,-23){\resizebox{!}{3cm}{\includegraphics{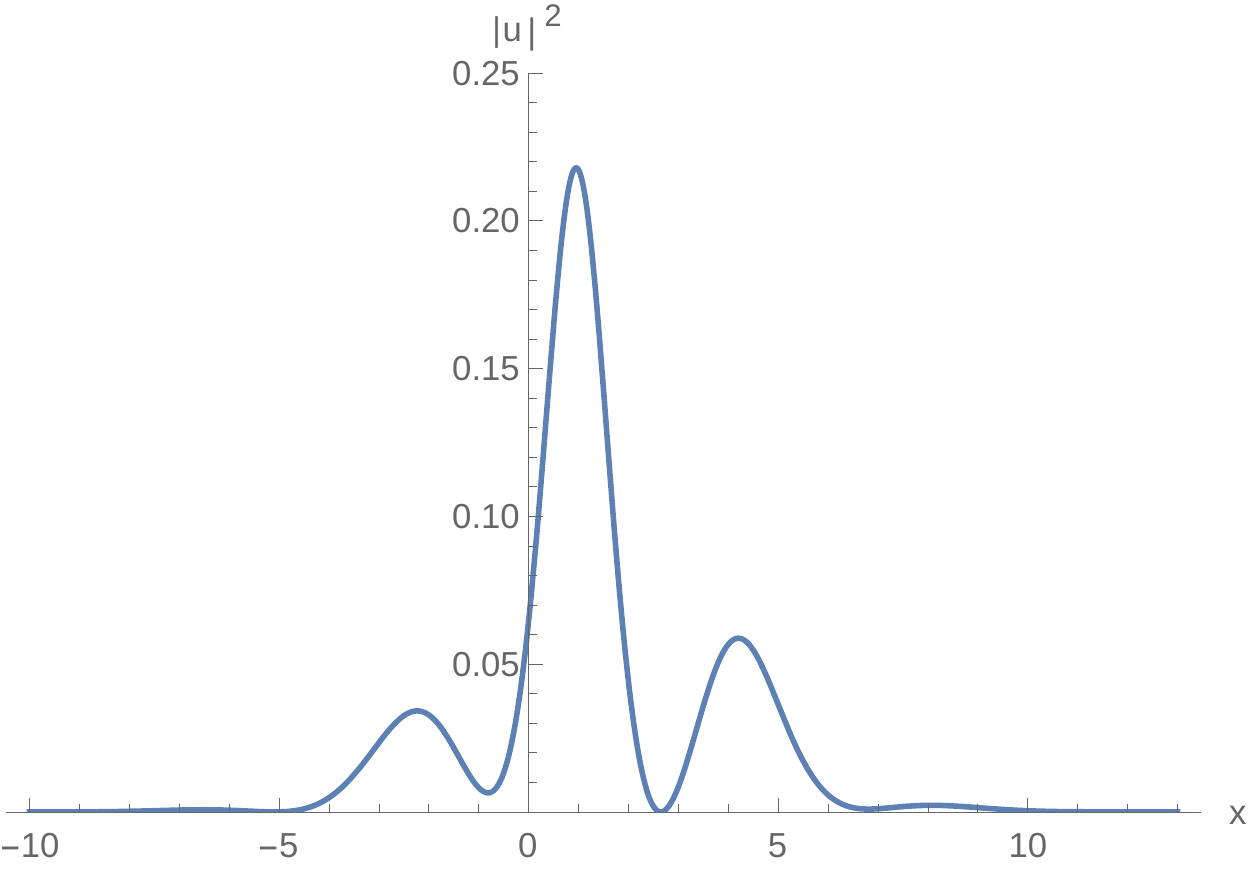}}}
\put(150,-23){\resizebox{!}{3cm}{\includegraphics{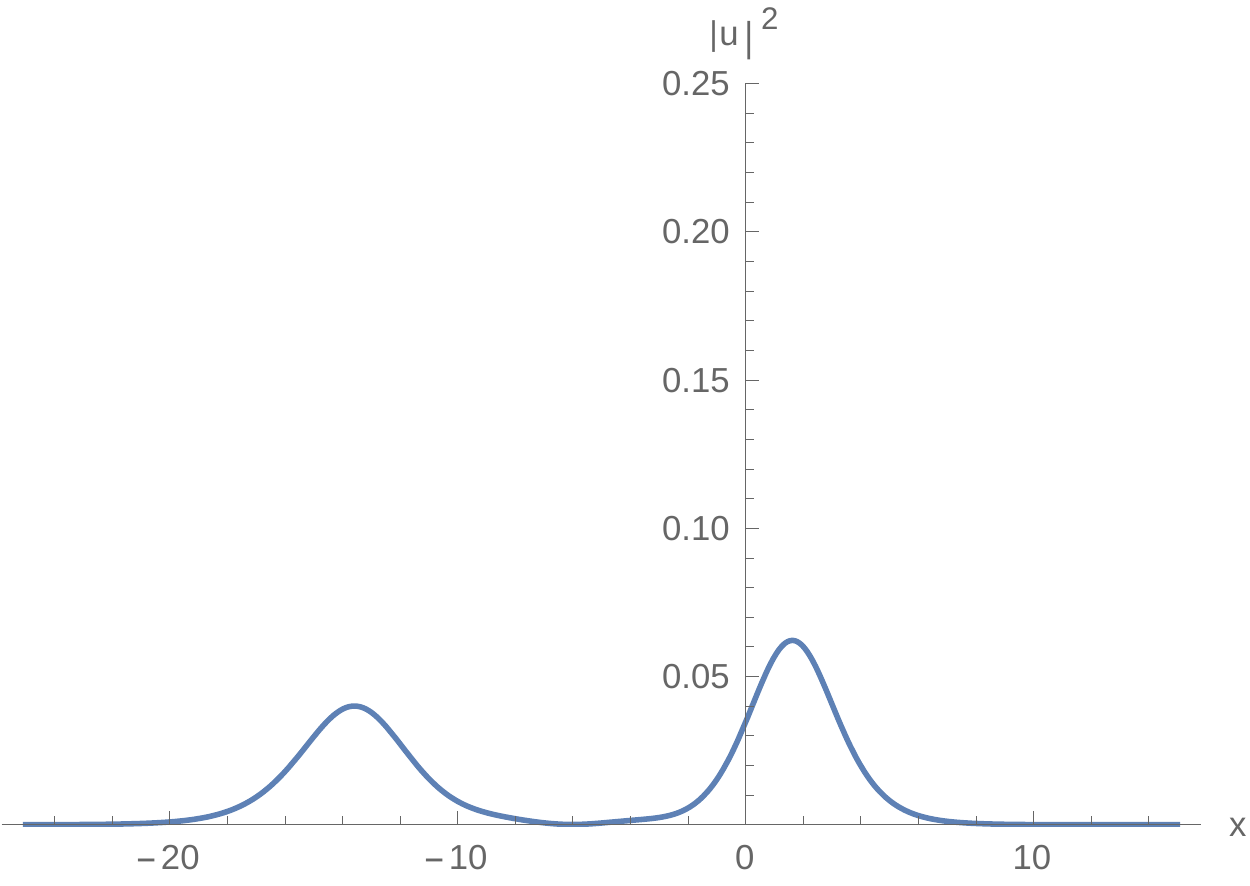}}}
\end{picture}
\end{center}
\vskip 20pt
\begin{center}
\begin{minipage}{16cm}{\footnotesize
\quad\qquad\qquad\qquad(c)\qquad\qquad\qquad\qquad \qquad\quad \qquad \qquad \quad (d) \qquad\qquad \qquad \qquad\qquad\qquad\qquad \quad (e)\\
{\bf Fig. 3} (a) shape and motion of 2-soliton solutions $|u_{12}|^2_{(\varepsilon=1,\sigma=1)}$ given by \eqref{q12-mKdV-2ss}
for $k_1=0.2+0.8i,~k_2=0.25+0.1i$ and $c_1=c_2=d_1=d_2=1$. (b) a contour plot of (a) with range $x\in [-30, 30]$ and $t\in [-18,18]$.
(c) 2D-plot of (a) at $t=-8$. (d) 2D-plot of (a) at $t=0$. (e) 2D-plot of (a) at $t=8$.}
\end{minipage}
\end{center}



\begin{center}
\begin{picture}(120,100)
\put(-150,-23){\resizebox{!}{4.5cm}{\includegraphics{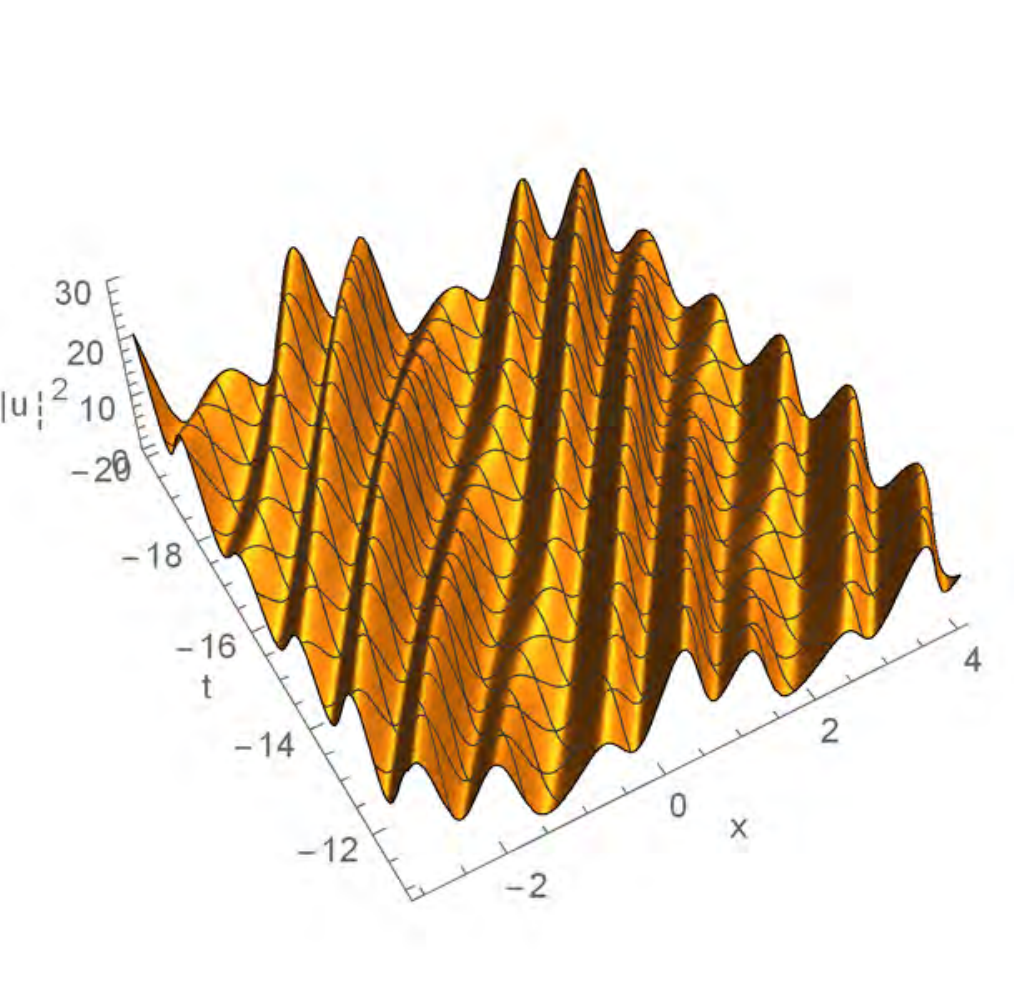}}}
\put(10,-23){\resizebox{!}{3.5cm}{\includegraphics{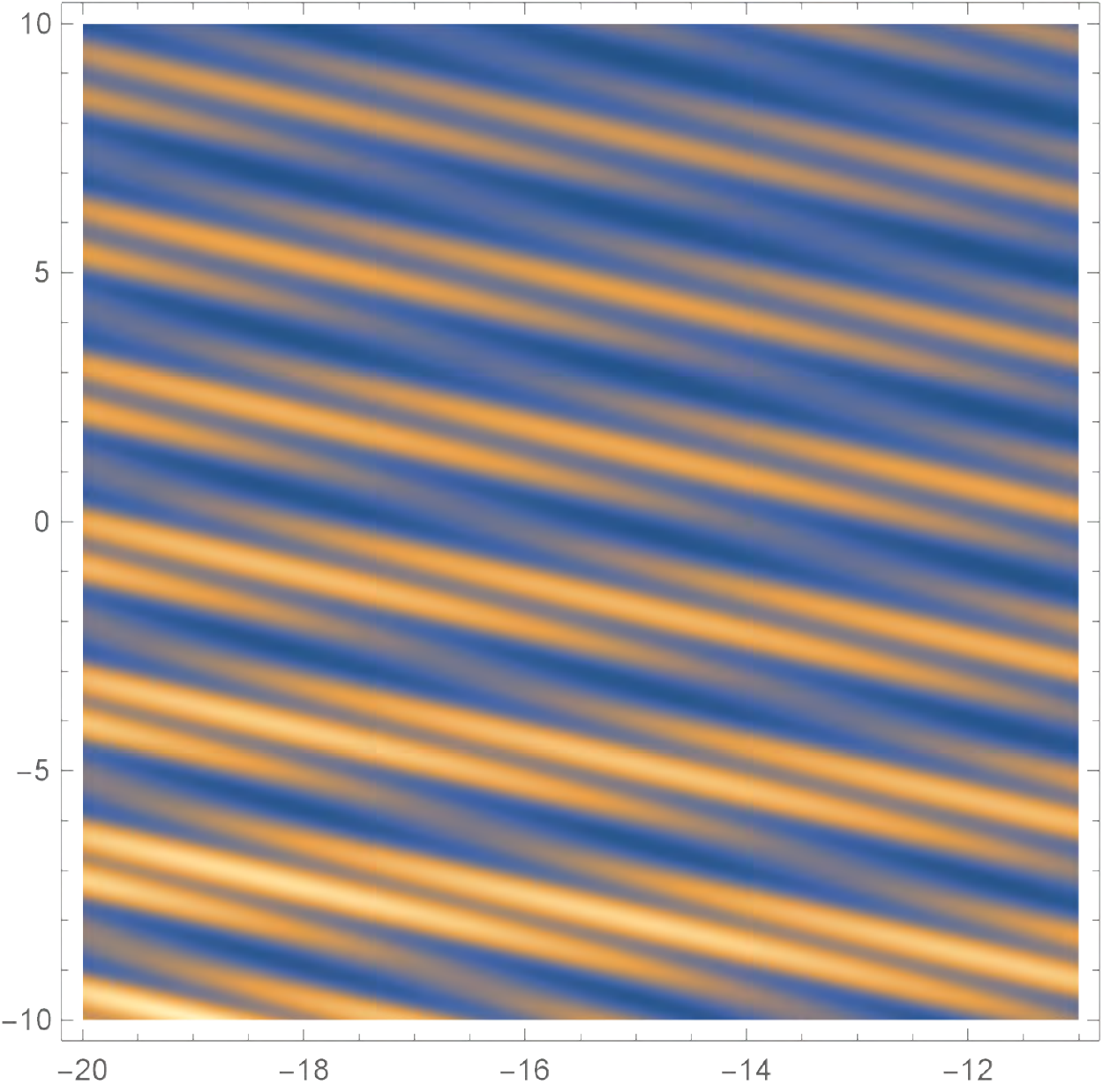}}}
\put(150,-23){\resizebox{!}{3.5cm}{\includegraphics{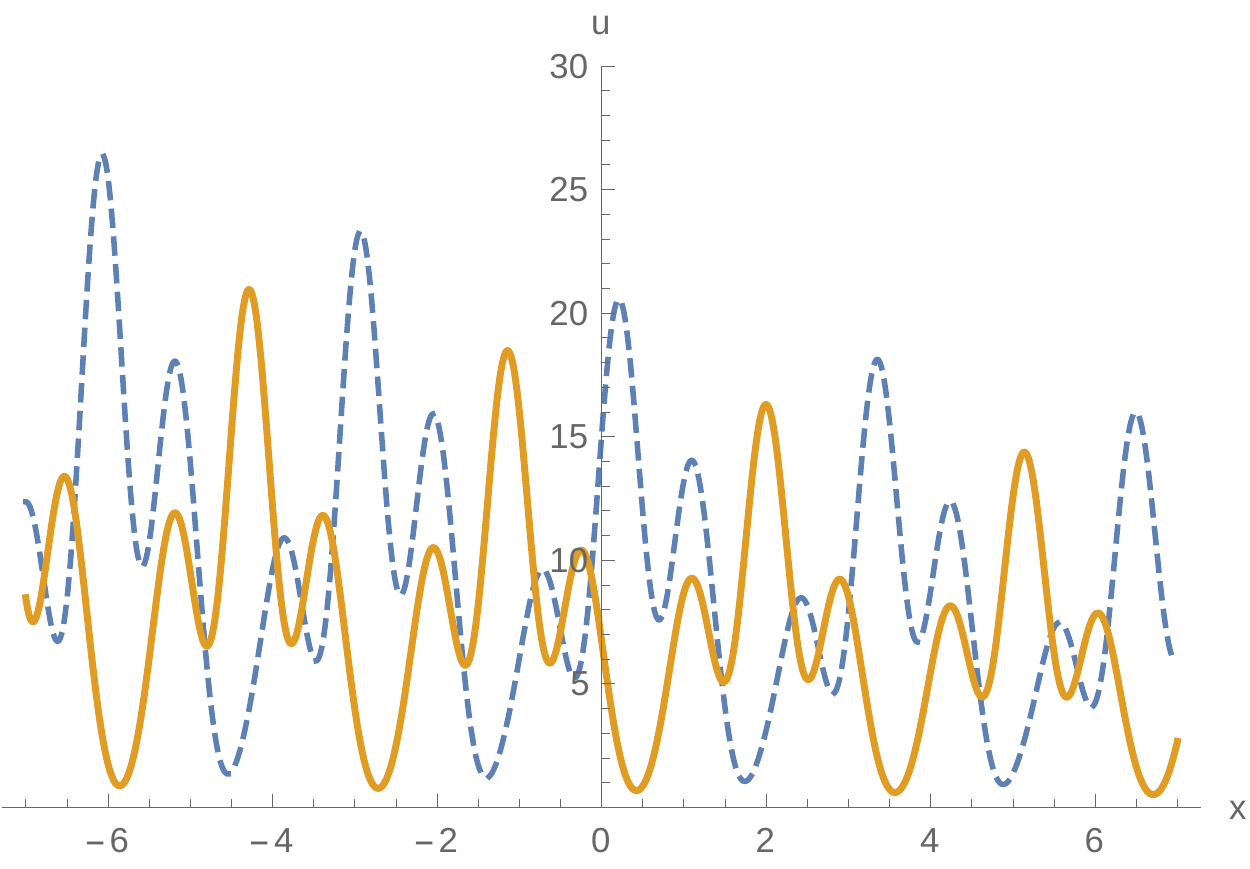}}}
\end{picture}
\end{center}
\vskip 20pt
\begin{center}
\begin{minipage}{16cm}{\footnotesize
\quad\qquad\qquad\qquad(a)\qquad\qquad\qquad\qquad \qquad\quad \qquad \qquad \quad (b) \qquad \qquad \qquad\qquad\qquad\qquad \qquad\quad (c)\\
{\bf Fig. 4} (a) shape and motion of 2-soliton solutions $|u_{12}|^2_{(\varepsilon=1,\sigma=-1)}$ given by \eqref{q34-mKdV-2ss} for $k_1=0.3+i,~k_2=0.01+0.5i,~
c_1=2$ and $c_2=d_1=d_2=1$. (b) a contour plot of (a) with range
$x\in [-10, 10]$ and $t\in [-20,-11]$. (c) waves in soild and dotted line stand for plot (a) at $t=-5$ and $t=-11$, respectively.}
\end{minipage}
\end{center}


We finally pay attention to the simplest Jordan block solutions ($N=2$). We still take $(\varepsilon=1,\sigma=1)$ and $(\varepsilon=1,\sigma=-1)$ as two examples.
The Jordan block
solution to equation \eqref{cmKdV} for $(\varepsilon=1,\sigma=1)$ reads $u_{13,(\varepsilon=1,\sigma=1)}=\frac{u_{\ty{J},2}}{u_{\ty{J},1}}$, in which
\begin{subequations}
\label{u12-cmKdV}
\begin{align}
& u_{\ty{J},1}=(|\vartheta_1|^2e^{\xi_1+\xi_1^*}+e^{-2\theta_{11}})^2+|\vartheta_1|^2e^{\xi_1+\xi_1^*-3\theta_{11}}
\big((1-2e^{\theta_{11}})^2\nn \\
&\qquad\quad+(1-e^{\theta_{11}})(\tau^2+\tau^{*^2})-2e^{\theta_{11}}|\tau|^2+|\tau|^4\big), \\
&u_{\ty{J},2}=\vartheta_1e^{\xi_1-4\theta_{11}}\big(1-|\vartheta_1|^2e^{\xi_1+\xi_1^*+2\theta_{11}}\big(1-4e^{\theta_{11}}+\tau^{*}(\tau+\tau^{*}
+|\tau|^2e^{-\frac{1}{2}\theta_{11}})\nn\\
&\qquad\quad-2\sqrt{2}\tau\sinh((\theta_{11}+\ln 2)/2)\big)\big),
\end{align}
\end{subequations}	
where $\tau=3k_1^2t-x$. While, in the case of $(\varepsilon=1,\sigma=-1)$, the Jordan block
solution to equation \eqref{cmKdV} reads $u_{13,(\varepsilon=1,\sigma=-1)}=\frac{u_{\ty{J},4}}{u_{\ty{J},3}}$ with
\begin{subequations}
\label{u34-cmKdV}
\begin{align}
& u_{\ty{J},3}=(|\vartheta_1|^2e^{\xi_{1}-\xi^*_1}+e^{-2\epsilon_{11}})^2+|\vartheta_1|^2e^{\xi_{1}-\xi^*_1-3\epsilon_{11}}
\big(4e^{2\epsilon_{11}}+e^{\epsilon_{11}}((\tau+\tau^*)^2+4|\tau|^2)\nn \\
&\qquad\quad+|1+\tau^2|^2+2e^{\frac{1}{2}\epsilon_{11}}(1+2e^{\epsilon_{11}}+|\tau|^2)(\tau+\tau^*)\big), \\
&u_{\ty{J},4}=\vartheta_1e^{\xi_1-4\epsilon_{11}}\big(1+|\vartheta_1|^2e^{\xi_{1}-\xi^*_1+2\epsilon_{11}}
\big(1+4e^{\epsilon_{11}}+2e^{\frac{1}{2}\epsilon_{11}}(\tau+2\tau^*)+3|\tau|^2\nn \\
&\qquad\quad+\tau^{*^2}+\tau e^{-\frac{1}{2}\epsilon_{11}}(1+\tau^{*^2})\big)\big).
\end{align}
\end{subequations}
We depict these two solutions in Fig. 5 and Fig. 6, respectively.


\begin{center}
\begin{picture}(120,100)
\put(-80,-23){\resizebox{!}{4.0cm}{\includegraphics{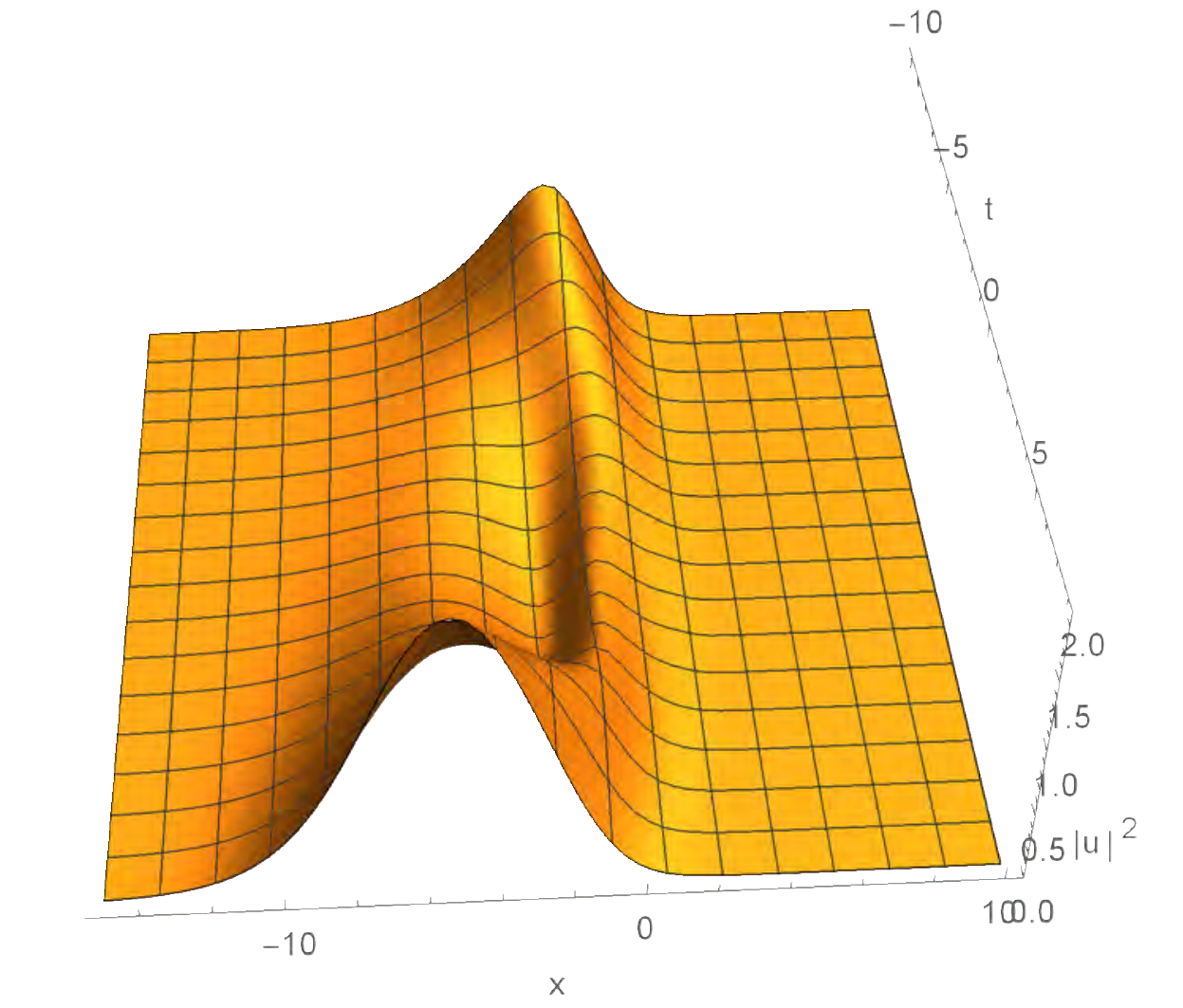}}}
\put(100,-23){\resizebox{!}{3.5cm}{\includegraphics{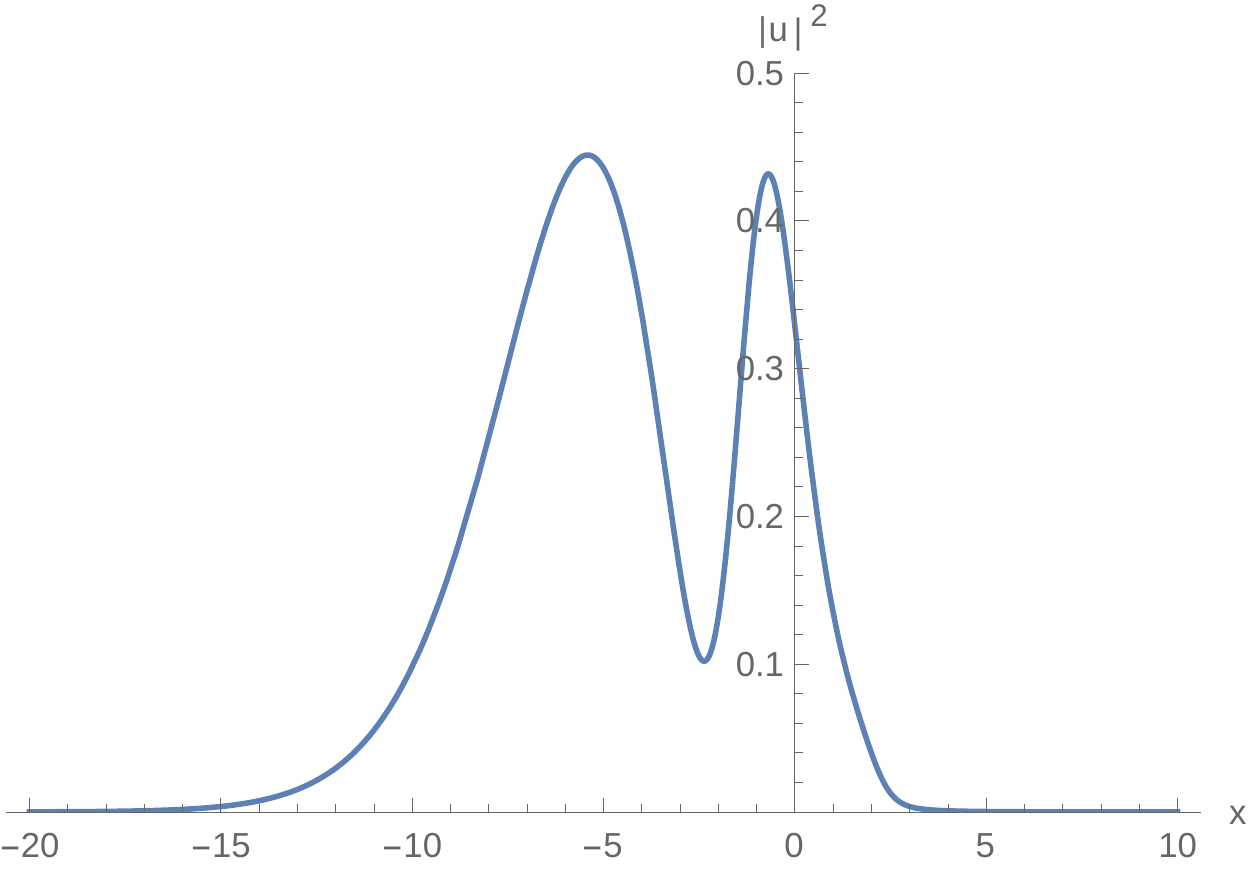}}}
\end{picture}
\end{center}
\vskip 20pt
\begin{center}
\begin{minipage}{16cm}{\footnotesize\quad\qquad\qquad\qquad\qquad\qquad\qquad\qquad (a)\quad \qquad \qquad\qquad\qquad \qquad \qquad\qquad\qquad\qquad \qquad (b)\\
{\bf Fig. 5} (a) shape and motion of Jordan block solution $|u_{13}|^2_{(\varepsilon=1,\sigma=1)}$ given by \eqref{u12-cmKdV} for $k_1=0.28-0.28i$
and $c_1=d_1=1$. (b) 2D-plot of (a) at $t=2$. }
\end{minipage}
\end{center}



\begin{center}
\begin{picture}(120,100)
\put(-80,-23){\resizebox{!}{4.0cm}{\includegraphics{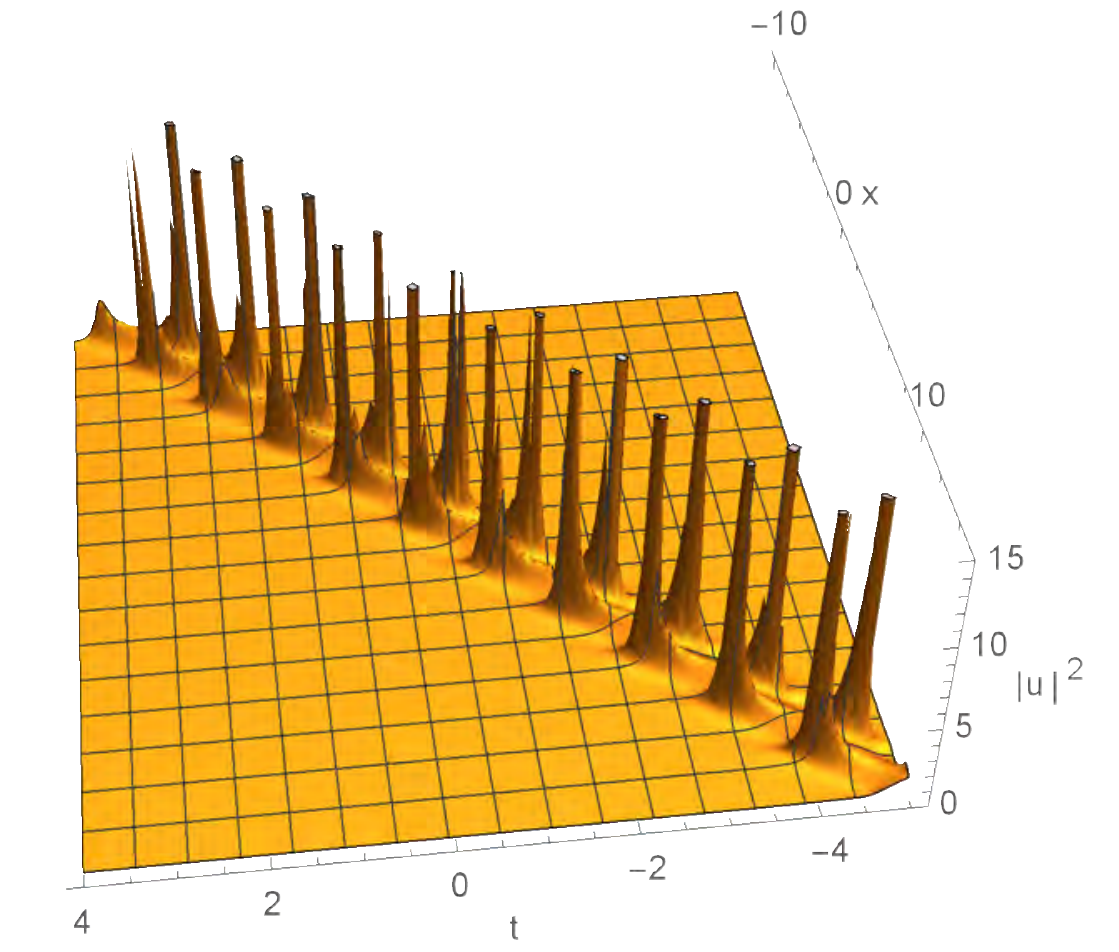}}}
\put(100,-23){\resizebox{!}{3.5cm}{\includegraphics{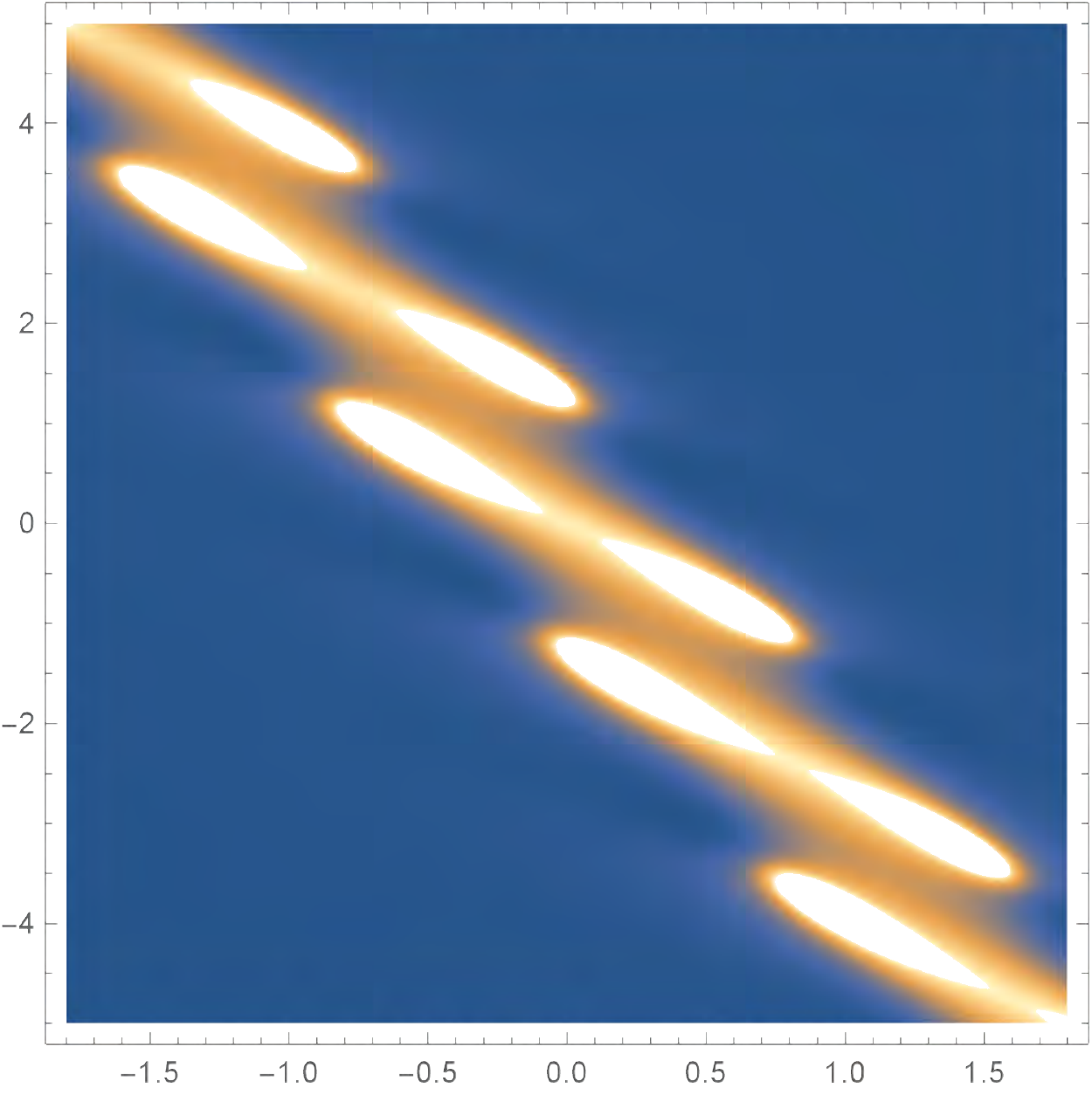}}}
\end{picture}
\end{center}
\vskip 20pt
\begin{center}
\begin{minipage}{16cm}{\footnotesize
\quad\qquad\qquad\quad\qquad\qquad\qquad\qquad \qquad(a)\quad \qquad \qquad\qquad\qquad \qquad \qquad\qquad\qquad(b)\qquad \qquad}
\end{minipage}
\end{center}
\vskip 10pt
\begin{center}
\begin{picture}(120,80)
\put(-150,-23){\resizebox{!}{3.5cm}{\includegraphics{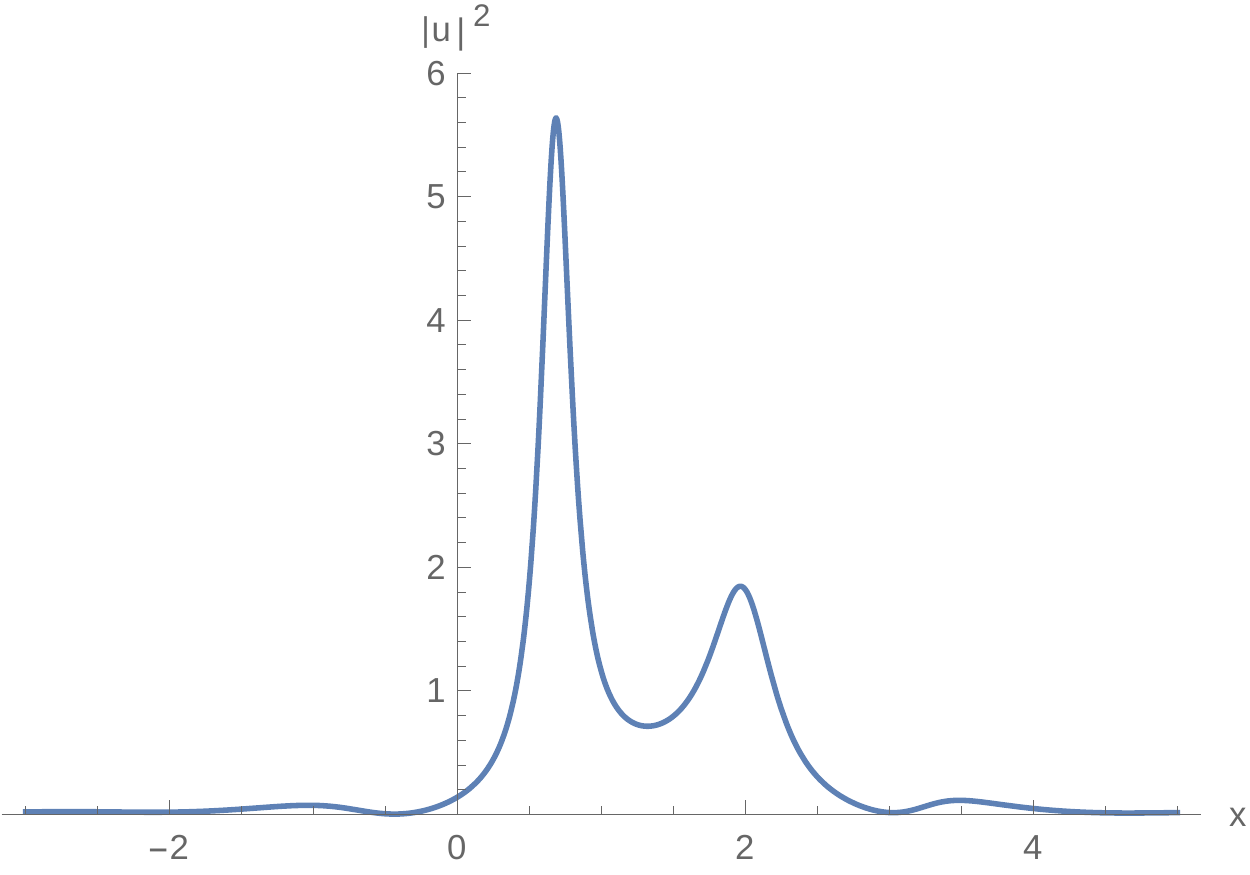}}}
\put(10,-23){\resizebox{!}{3.5cm}{\includegraphics{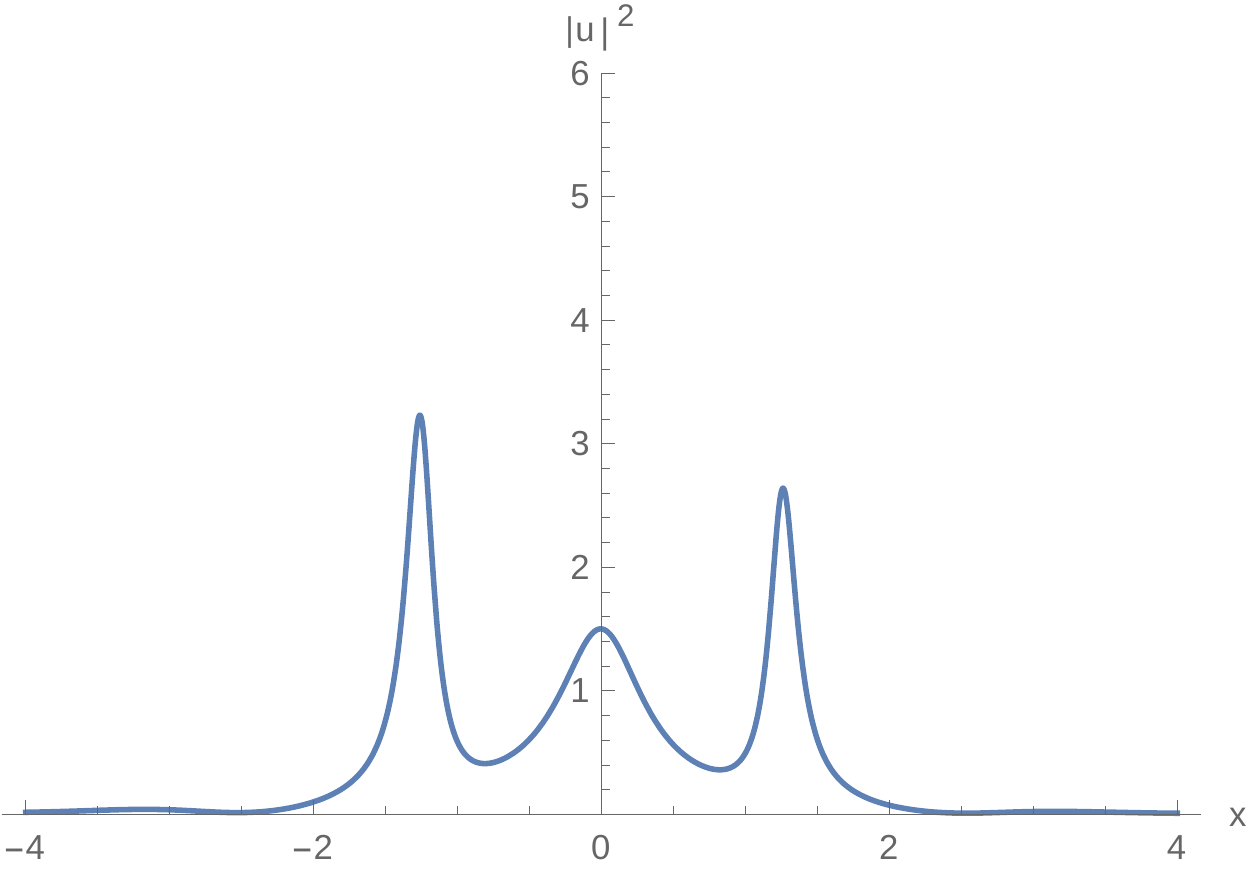}}}
\put(150,-23){\resizebox{!}{3.5cm}{\includegraphics{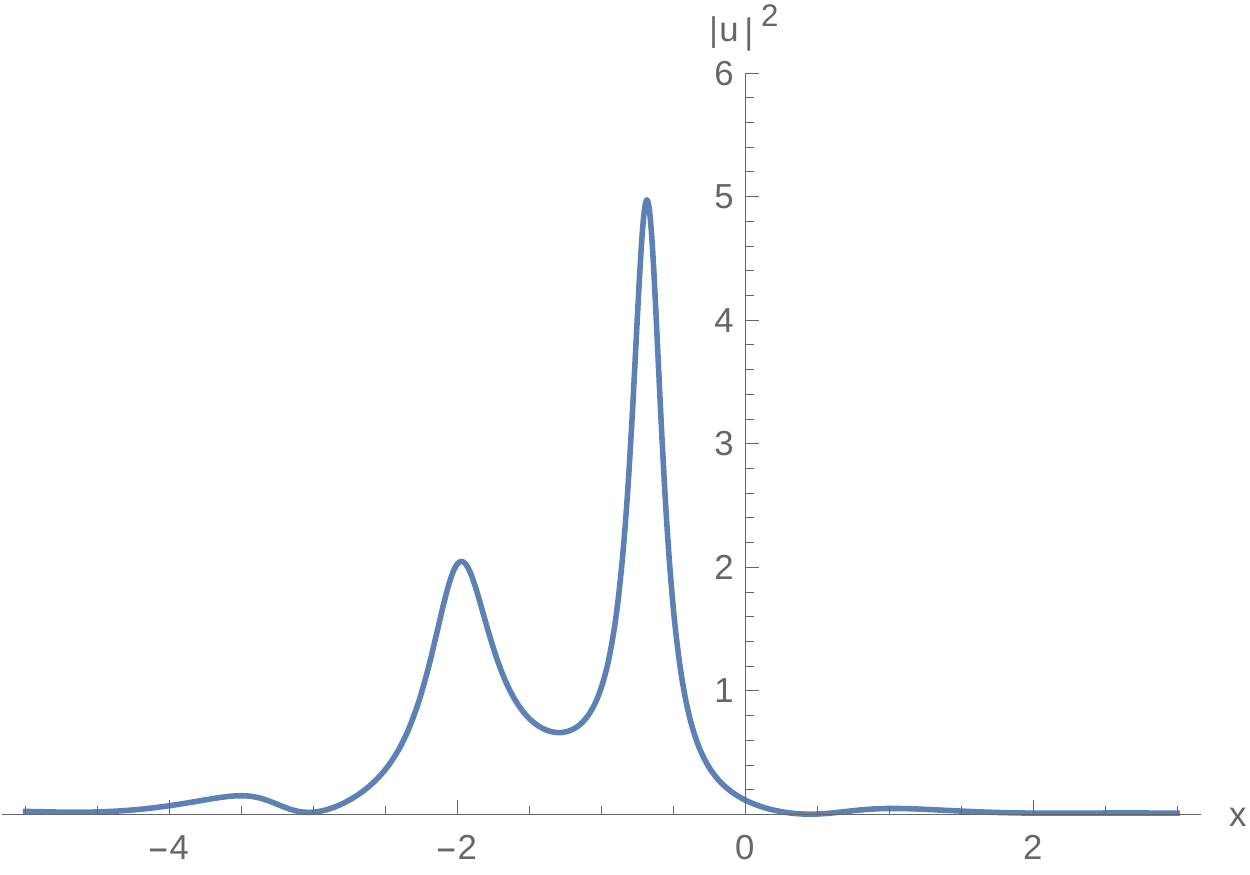}}}
\end{picture}
\end{center}
\vskip 20pt
\begin{center}
\begin{minipage}{16cm}{\footnotesize\quad\qquad\qquad\qquad(c)\qquad\qquad\qquad\qquad \qquad\quad \qquad \qquad \qquad\quad
(d)\qquad \qquad \qquad\qquad\qquad\qquad\qquad\quad (e)\\
{\bf Fig. 6} (a) shape and motion of Jordan block solution $|u_{13}|^2_{(\varepsilon=1,\sigma=-1)}$
given by \eqref{u34-cmKdV} for $k_1=0.02+i$ and $c_1=d_1=1$. (b) a contour plot of (a) with range $x\in [-5, 5]$ and $t\in [-1.8,1.8]$.
(c) 2D-plot of (a) at $t=-0.5$. (d) 2D-plot of (a) at $t=0$. (e) 2D-plot of (a) at $t=0.5$.}
\end{minipage}
\end{center}

\subsection{Local and nonlocal csG equation}

Let us now consider the Cauchy matrix solutions to local and nonlocal csG equation \eqref{n-sG}. We set $n=-1$ in \eqref{SE-sys}.
Analogous to the previous analysis, the Cauchy matrix solution for equation \eqref{n-sG} can be summerized by the following theorem.
Here we skip the proof since it is similar to the one for Theorem \ref{so-nlcmKdV}.
\begin{Thm}
\label{so-nlcsG}
The function
\begin{equation}\label{nlcsG-u-solu}
u(x,t)=\bs^{\st}_2(\bI-\bM_2\bM_1)^{-1}\br_2,
\end{equation}
solves the local and nonlocal csG equation \eqref{n-sG},
where the entities satisfy \eqref{DES-C}  $(n=-1)$ and the constraints
\eqref{nlcmKdV-M1M12}, in which $\bT\in \mathbb{C}_{N\times N}$ is a constant matrix satisfying
\eqref{nlcmKdV-at-eq}.
\end{Thm}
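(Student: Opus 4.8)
The plan is to mirror the verification carried out for Theorem~\ref{so-nlcmKdV}, changing only the time-evolution exponent from $n=3$ to $n=-1$ and invoking Theorem~\ref{Thm-1} for the AKNS($-1$) equation instead of the AKNS(3) equation. Concretely, I would start from the explicit dispersion relations \eqref{rsj} with $n=-1$, i.e. $\br_j=\exp((-1)^{j-1}(\Og_jx-\Og_j^{-1}t))\bC_j^+$ and similarly for $\bs_j$, and use the constraint \eqref{nlcmKdV-at-eq}, in particular $\Og_1=-\sigma\bT\Og_2^*\bT^{-1}$ together with $\bC_1^+=\varepsilon\bT\bC_2^{+^*}$. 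Since $\Og_1^{-1}=-\sigma\bT\Og_2^{*^{-1}}\bT^{-1}$ (the map $\bK\mapsto\bT\bK\bT^{-1}$ commutes with inversion, and $(-\sigma)^{-1}=-\sigma$ as $\sigma=\pm1$), the exponent $\Og_1x-\Og_1^{-1}t$ transforms exactly as $\Og_1x-\Og_1^{3}t$ did in the $n=3$ case: one pulls $\bT$ out of the matrix exponential and lands on $\varepsilon\bT\br_2^*(\sigma x,\sigma t)$. The computation $\bs_1(x,t)=\varepsilon\bT^{\st^{-1}}\bs_2^*(\sigma x,\sigma t)$ is identical after transposing.

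Next I would address $\bM_1$. Substituting $\Og_1=-\sigma\bT\Og_2^*\bT^{-1}$ into the first Sylvester equation of \eqref{DES-M12-C} and using the already-established forms of $\br_1,\bs_2$, one checks that $-\delta\sigma\bT\bM_2^*(\sigma x,\sigma t)\bT^*$ satisfies the same Sylvester equation as $\bM_1$; by the uniqueness guaranteed by $\mathcal{E}(\Og_1)\cap\mathcal{E}(\Og_2)=\varnothing$ (hypothesis \eqref{con-Og}, which is unaffected by the value of $n$), the two coincide. This step uses only \eqref{DES-M12-C}, which is the $x$- and $t$-independent part of the determining system, so it is word-for-word the same as in Theorem~\ref{so-nlcmKdV}. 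Having the three relations \eqref{nlcmKdV-M1M12} in hand, the conjugation identity $v(x,t)=\delta u^*(\sigma x,\sigma t)$ follows by the same three-line manipulation: $\varepsilon^2=\delta$ supplies the scalar factor, and $(\bI-\bM_1\bM_2)^{-1}$ turns into $(\bI-\bM_2^*(\sigma x,\sigma t)\bM_1^*(\sigma x,\sigma t))^{-1}$ conjugated, which is exactly what appears in $u^*(\sigma x,\sigma t)$ read off from \eqref{AKNS-solu}. Since the entities satisfy \eqref{DES-C} with $n=-1$, Theorem~\ref{Thm-1} (the $n=-1$ branch) says $u=\bs_2^{\st}(\bI-\bM_2\bM_1)^{-1}\br_2$ and $v=\bs_1^{\st}(\bI-\bM_1\bM_2)^{-1}\br_1$ solve the AKNS($-1$) equation \eqref{-1th-AKNS}, and the reduction $v=\delta u^*(\sigma x,\sigma t)$ is precisely \eqref{-1AKNS-red}, which collapses \eqref{-1th-AKNS} to the local and nonlocal csG equation \eqref{n-sG}. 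This completes the argument.

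The only place where anything genuinely new could go wrong — and hence the step I would double-check — is the behavior of the inverse power $\Og_j^{-1}$ under the similarity-plus-conjugation transformation. For the cmKdV case the exponent involved $\Og_j^3$, a positive power, and the cube commutes with $\bK\mapsto\bT\bK\bT^{-1}$ and with complex conjugation in an obvious way; for csG the relevant object is $\Og_j^{-1}$, which also commutes with similarity and conjugation, but one must make sure $\Og_1,\Og_2$ are invertible — this is guaranteed by $0\notin\mathcal{E}(\Og_1\cup\Og_2)$ in \eqref{con-Og}, and one also needs $0\notin\mathcal{E}(\Og_2^*)$, which is automatic since $\mathcal{E}(\Og_2^*)=\{\bar\mu:\mu\in\mathcal{E}(\Og_2)\}$. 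Apart from verifying this sign/inversion bookkeeping, no obstacle arises; the structural backbone (the Sylvester uniqueness argument and the $\varepsilon^2=\delta$ identity) is independent of $n$, which is exactly why the authors remark that the proof may be skipped.
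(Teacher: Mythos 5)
Your proposal is correct and follows exactly the route the paper intends: the paper omits the proof of Theorem~\ref{so-nlcsG} precisely because it is the proof of Theorem~\ref{so-nlcmKdV} with $n=3$ replaced by $n=-1$, and you have carried out that substitution faithfully, including the one genuinely new check that $\Og_1^{-1}=-\sigma\bT\Og_2^{*^{-1}}\bT^{-1}$ (using $\sigma^{-1}=\sigma$) and that invertibility of $\Og_1,\Og_2$ is guaranteed by \eqref{con-Og}. The Sylvester-uniqueness step for $\bM_1$ and the final $\varepsilon^2=\delta$ computation are indeed independent of $n$, so no gap remains.
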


\subsubsection{Exact solutions}

From the Theorem \ref{so-nlcsG}, one knows that solution to the local and nonlocal csG equation \eqref{n-sG} is still
given by \eqref{cmKdV-u-solu-1}, where the components $\br_2(x,t)$ and $\bs_2(x,t)$ are of form
\begin{align}
\label{rsj-sG}
\br_{2}=\mbox{exp}(-\Og_2x+\Og^{-1}_2t)\bC^+_2, \quad \bs_{2}=\mbox{exp}(-\Og^{\st}_2x+(\Og_2^{\st})^{-1}t)\bD^+_2,
\end{align}
and $\bM_2$ and $\bT$ are determined by \eqref{DES-M12-C-mKdV}. Table 1 and Table
2 also, respectively, provide the soliton solutions and the Jordan block solutions for the
local and nonlocal csG equation \eqref{n-sG}, where in Table 1
\begin{subequations}
\begin{align}
& \br_{2}=\mbox{exp}(-\Og_{\ty{D},2} x+\Og^{-1}_{\ty{D},2}t)\bC^+_{\ty{D},2},\quad \bs_{2}=\mbox{exp}(-\Og_{\ty{D},2}x+\Og^{-1}_{\ty{D},2}t)\bD^+_{\ty{D},2},\\
& \bM^{(1)}_{2}=-i\bM^{(2)}_{2}=\mbox{exp}(-\Og_{\ty{D},2}x+\Og^{-1}_{\ty{D},2}t)\bC^-_{\ty{D},2}\cdot
\bG_{\ty{D}}^{(12)}\cdot\bD^{-^*}_{\ty{D},2}\mbox{exp}(-\Og^{*}_{\ty{D},2}x+\Og^{*^{-1}}_{\ty{D},2}t),\\
& \bM^{(3)}_{2}=-i\bM^{(4)}_{2}=\mbox{exp}(-\Og_{\ty{D},2}x+\Og^{-1}_{\ty{D},2}t)\bC^-_{\ty{D},2}\cdot
\bG_{\ty{D}}^{(34)}\cdot\bD^{-^*}_{\ty{D},2}\mbox{exp}(\Og^*_{\ty{D},2}x-\Og^{*^{-1}}_{\ty{D},2}t),
\end{align}
\end{subequations}
respectively, in Table 2
\begin{subequations}
\begin{align}
& \br_{2}=\mbox{exp}(-\Og_{\ty{J},2} x+\Og_{\ty{J},2}^{-1}t)\bC_{\ty{J},2}^+, \quad \bs_{2}=\mbox{exp}(-\Og_{\ty{J},2}^{\st} x+(\Og_{\ty{J},2}^{\st})^{-1}t)\bD_{\ty{J},2}^+,\\
& \wh{\bM}^{(1)}_2=-i\wh{\bM}^{(2)}_2=\mbox{exp}(-\Og_{\ty{J},2} x+\Og_{\ty{J},2}^{-1} t)\bC_{\ty{J},2}^-
\cdot\wh{\bG}_{\ty{J}}^{\st}\cdot\bD_{\ty{J},2}^{-^*}\mbox{exp}(-\Og_{\ty{J},2}^{*^{\st}} x+(\Og_{\ty{J},2}^{*^{\st}})^{-1} t) , \\
& \wt{\bM}^{(1)}_2=-i\wt{\bM}^{(2)}_2=\mbox{exp}(-\Og_{\ty{J},2} x+\Og_{\ty{J},2}^{-1} t)\bC_{\ty{J},2}^-
\cdot\wt{\bG}_{\ty{J}}^{\st}\cdot\bD_{\ty{J},2}^{-^*}\mbox{exp}(\Og_{\ty{J},2}^{*^{\st}}x-(\Og_{\ty{J},2}^{*^{\st}})^{-1} t).
\end{align}
\end{subequations}

We write down the 1-soliton solutions for equation \eqref{n-sG} as follows
\begin{subequations}
\label{sG-u-soli}
\begin{align}
& \label{sG-soli-11} u_{21,(\varepsilon=1,\sigma=1)}=\frac{\vartheta_1}{
 e^{-\zeta_1}+|\vartheta_1|^{2}e^{\zeta_1^*+\theta_{11}}}, \\
& \label{sG-soli-12} u_{21,(\varepsilon=i,\sigma=1)}=\frac{\vartheta_1}{
 e^{-\zeta_1}-|\vartheta_1|^{2}e^{\zeta_1^*+\theta_{11}}}, \\
& \label{sG-soli-21} u_{21,(\varepsilon=1,\sigma=-1)}=\frac{\vartheta_1}{
e^{-\zeta_1}+|\vartheta_1|^{2}e^{-\zeta_1^*+\epsilon_{11}}}, \\
& \label{sG-soli-22} u_{21,(\varepsilon=i,\sigma=-1)}=\frac{\vartheta_1}{
e^{-\zeta_1}-|\vartheta_1|^{2}e^{-\zeta_1^*+\epsilon_{11}}},
\end{align}
\end{subequations}
where $\zeta_1=2(k_1^{-1}t-k_1x)$.

\subsubsection{Dynamics}

We now identify the dynamics of 1-soliton solutions \eqref{sG-u-soli}.
As two examples, we consider \eqref{sG-soli-11} and \eqref{sG-soli-21}.
Substituting \eqref{k1-com} into \eqref{sG-soli-11} leads to
\begin{align}
\label{var1si1-sG}
|u_{21}|^2_{(\varepsilon=1,\sigma=1)}=\alpha^2\mbox{sech}^2\left[2\alpha\left(x-\frac{t}{\alpha^2+\beta^2}+\hbar\right)\right].
\end{align}
This wave travels with fixed amplitude $\alpha^2$ and
constant velocity $\frac{1}{\alpha^2+\beta^2}$. The top trajectory is $x=\frac{t}{\alpha^2+\beta^2}-\hbar$.
There is no stationary wave because of the positive velocity.
Fig. 7 depicts a moving soliton wave.


\begin{center}
\begin{picture}(120,100)
\put(-80,-23){\resizebox{!}{3.5cm}{\includegraphics{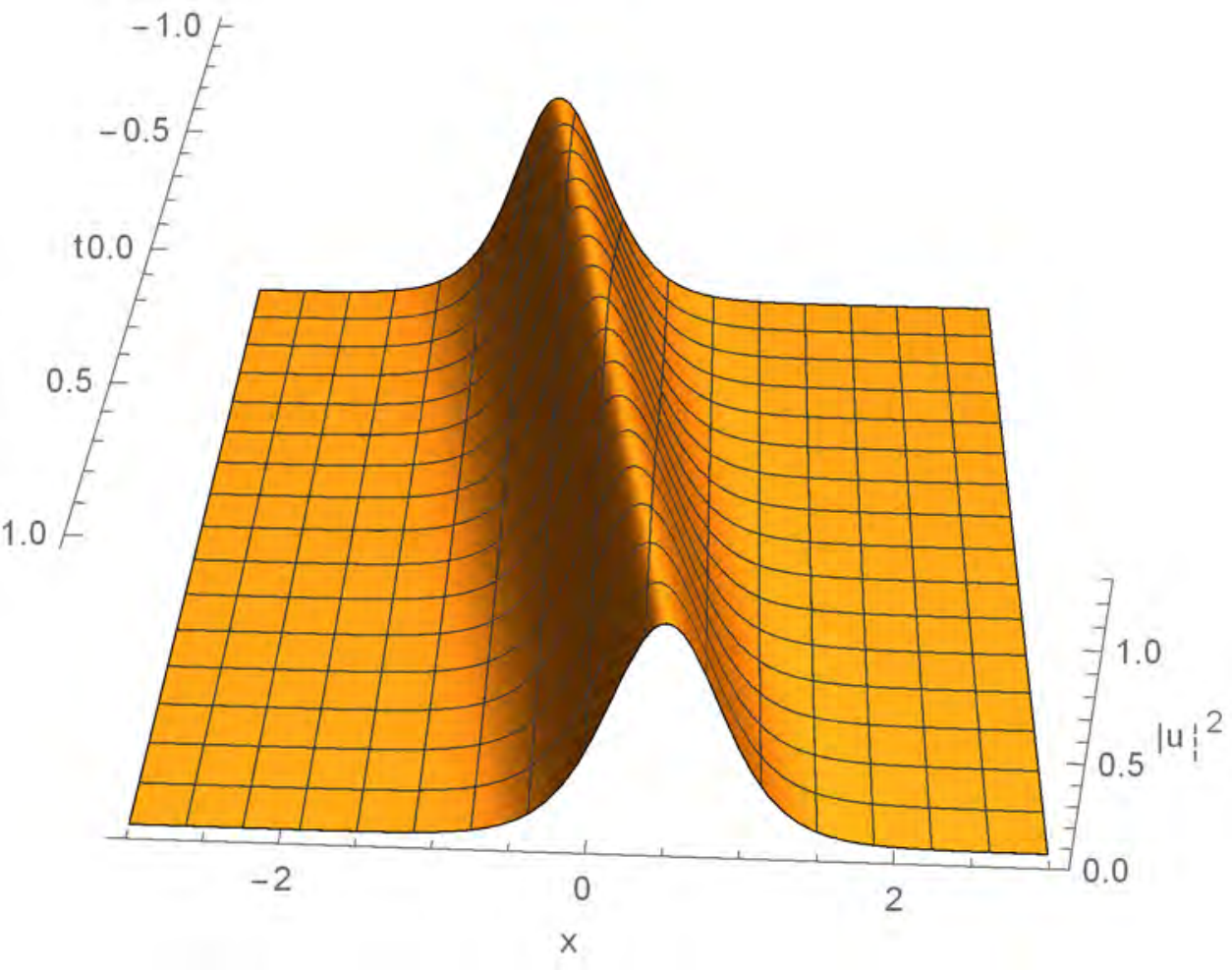}}}
\put(100,-23){\resizebox{!}{3.5cm}{\includegraphics{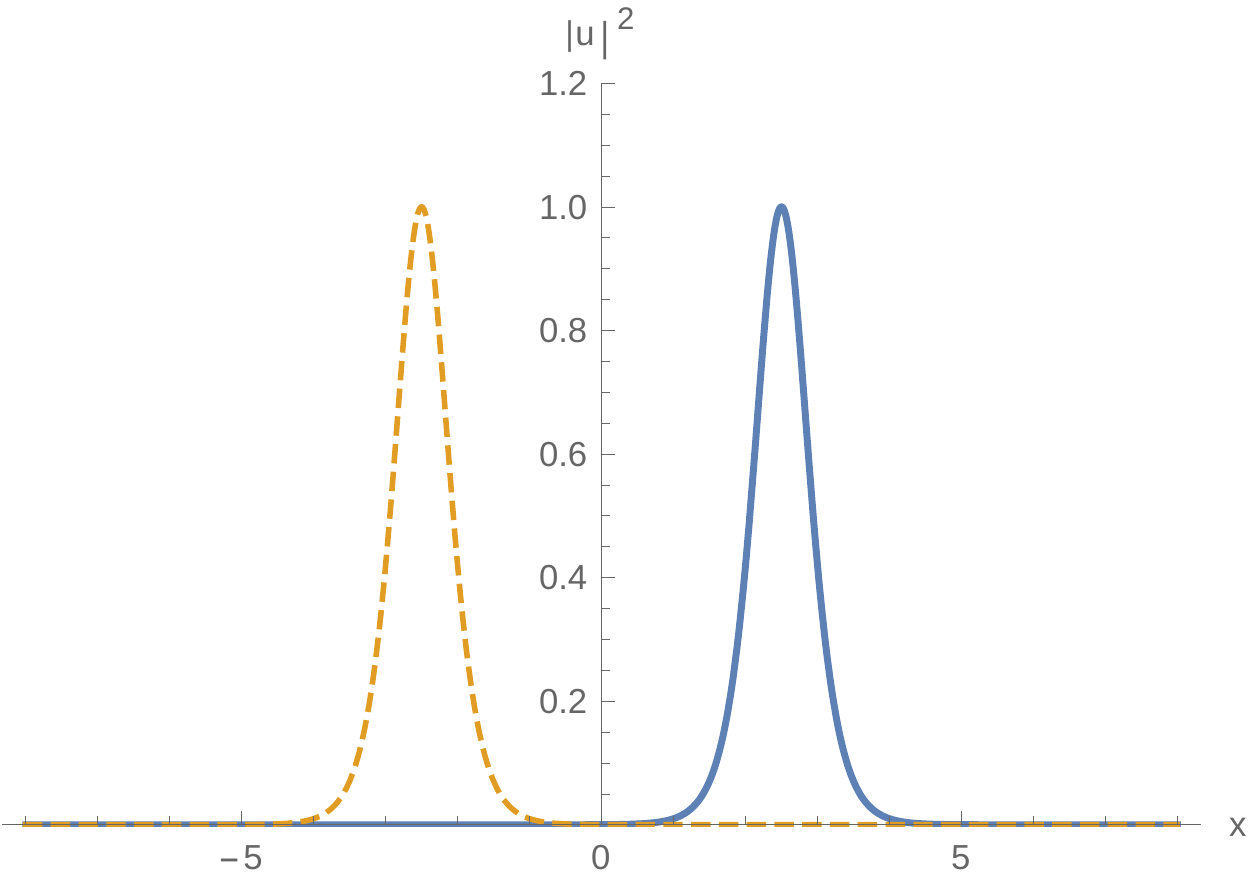}}}
\end{picture}
\end{center}
\vskip 20pt
\begin{center}
\begin{minipage}{16cm}{\footnotesize
\quad\qquad\qquad\qquad\qquad\qquad\qquad \qquad (a)\qquad\qquad\qquad\qquad\qquad\qquad\qquad\qquad \qquad\quad (b) \\
{\bf Fig. 7} (a) shape and motion of 1-soliton solution \eqref{var1si1-sG} for $k_1=c_1=d_1=1+i$.
(b) waves in solid line and dotted line stand for plot (a) at $t=5$ and $t=-5$, respectively.}
\end{minipage}
\end{center}


For the solution \eqref{sG-soli-21}, we have
\begin{align}
\label{var1si-1-sG}
|u_{21}|^2_{(\varepsilon=1,\sigma=-1)}=\frac{16|\beta^2\vartheta_1|^2e^{4\alpha\left(\frac{t}{\alpha^2+\beta^2}-x\right)}}
{|\vartheta_1|^4+16\beta^4-8|\beta\vartheta_1|^2\cos[4\beta(x+\frac{t}{\alpha^2+\beta^2})]}.
\end{align}
In quite a similar fashion, there is a quasi-periodic phenomenon for \eqref{var1si-1-sG} because of the involvement of cosine function.
When $|\vartheta_1|^2=4\beta^2$, solution \eqref{var1si-1-sG} has singularities along with straight lines
\begin{align}
x(t)=-\frac{t}{\alpha^2+\beta^2}+\frac{\kappa\pi}{2\beta}, \quad \kappa\in \mathbb{Z}.
\end{align}
While when $|\vartheta_1|^2\neq 4\beta^2$, \eqref{var1si-1-sG} is nonsingular and reaches its extrema along with straight lines
\begin{align}
\label{xt-ext-sG}
x(t)=-\frac{t}{\alpha^2+\beta^2}+\frac{1}{4\beta}\left(\gamma+2\kappa\pi-
\arcsin\frac{\alpha(|\vartheta_1|^4+16\beta^4)}{8|\beta \vartheta_1|^2\sqrt{\alpha^2+\beta^2}}\right), \quad \kappa\in \mathbb{Z},
\end{align}
where $\gamma$ is same as \eqref{xt-ext}.
The velocity is $-\frac{1}{\alpha^2+\beta^2}$. For a given $t$, $|u_{21}|^2_{(\varepsilon=1,\sigma=-1)}$
tends to zero for $(\alpha>0, x\rightarrow \infty)$ and  $(\alpha<0, x\rightarrow -\infty)$, respectively.
We depict this solution in Fig. 8.


\begin{center}
\begin{picture}(120,100)
\put(-150,-23){\resizebox{!}{3.5cm}{\includegraphics{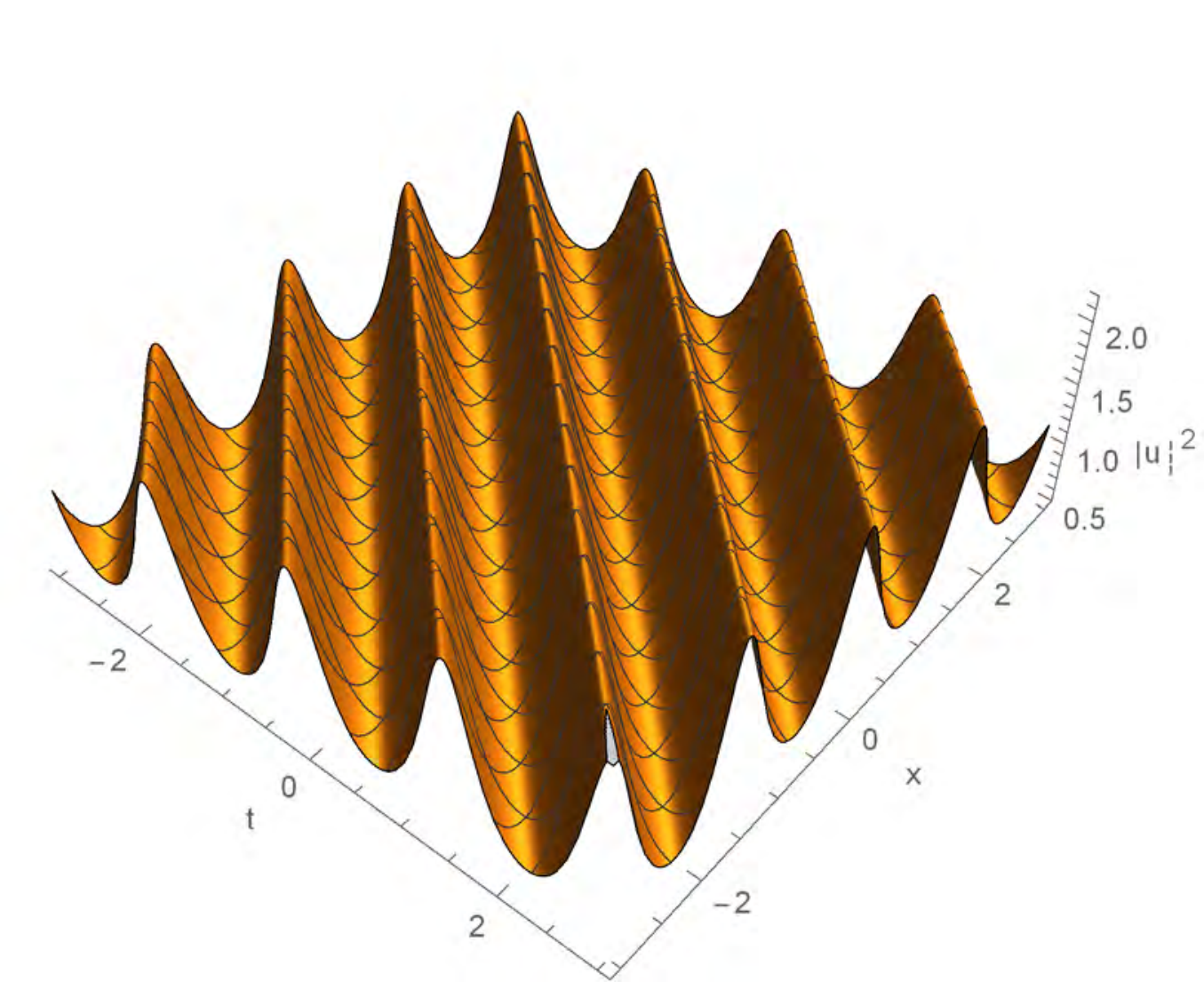}}}
\put(10,-23){\resizebox{!}{3.5cm}{\includegraphics{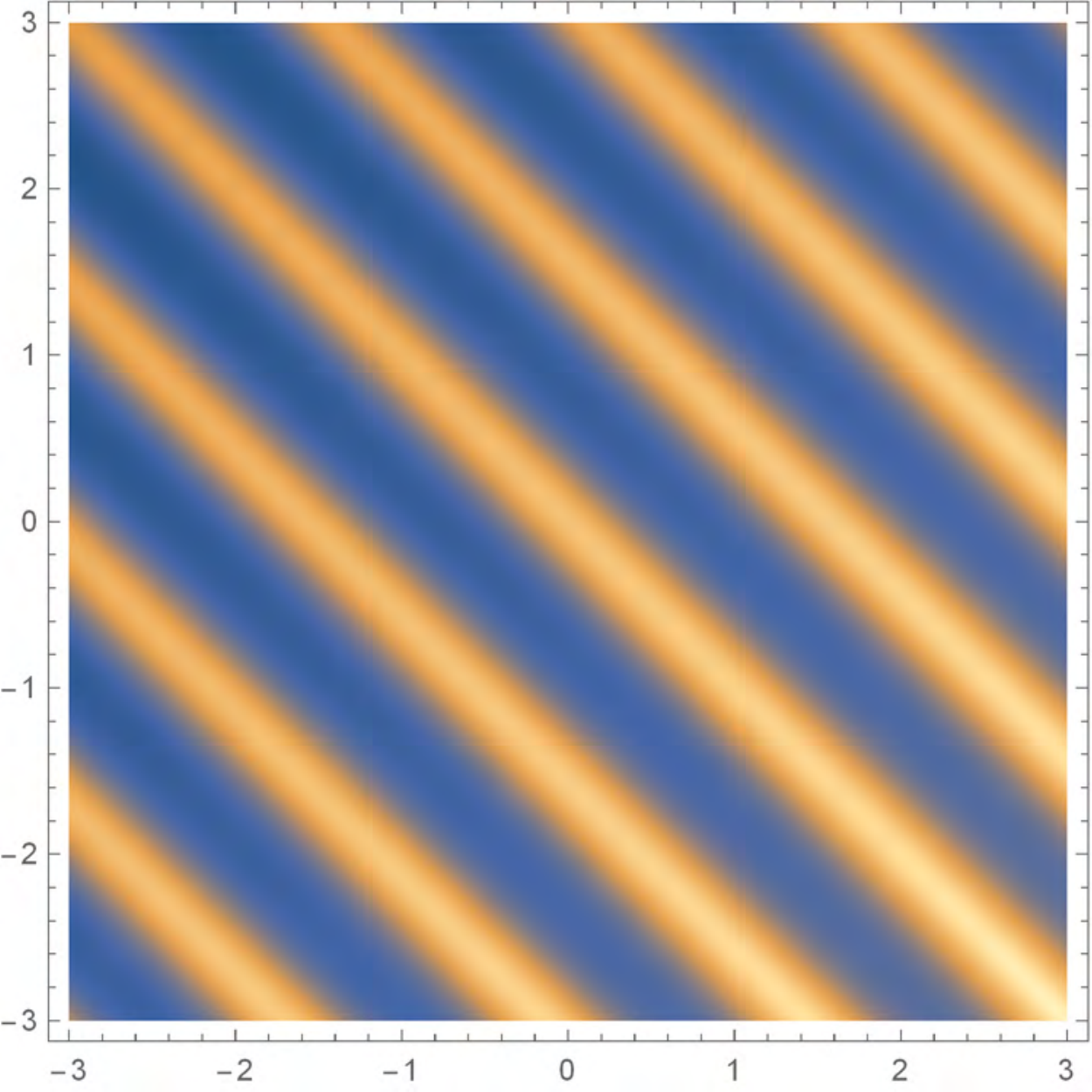}}}
\put(150,-23){\resizebox{!}{3.5cm}{\includegraphics{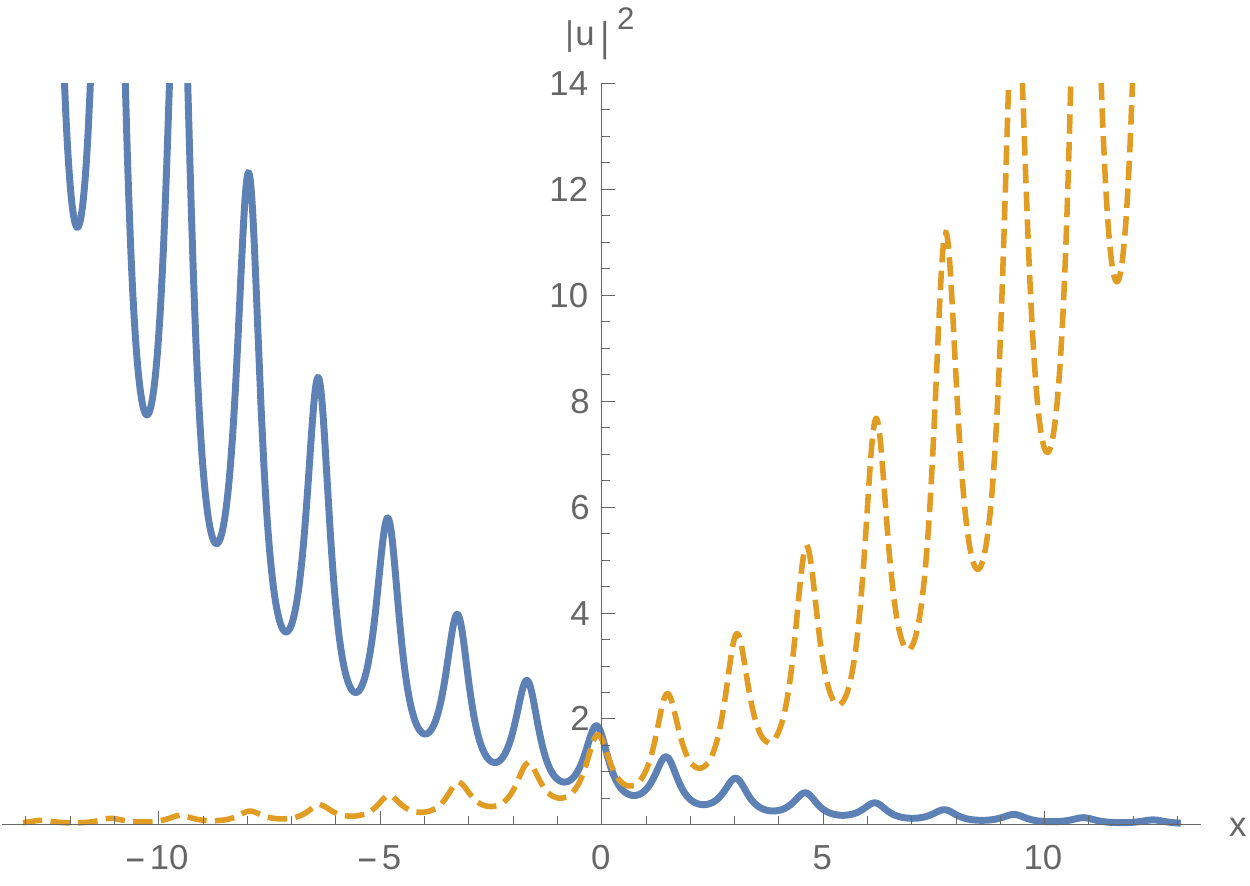}}}
\end{picture}
\end{center}
\vskip 20pt
\begin{center}
\begin{minipage}{16cm}{\footnotesize
\quad\qquad\qquad\qquad(a)\qquad\qquad\qquad\qquad \qquad\quad \qquad \qquad \quad (b) \qquad\qquad \qquad \qquad\qquad\qquad\qquad \quad (c)\\
{\bf Fig. 8} shape and motion of 1-soliton solution \eqref{var1si-1-sG} for $c_1=d_1=1$: (a) 3D-plot for
$k_1=0.01+i$. (b) a contour plot of (a) with range
$x\in [-3, 3]$ and $t\in [-3,3]$. (c) waves in solid line and dotted line stand for curves at $t=0.1$ with $k_1=0.06+i$ and $k_1=-0.06+i$, respectively.}
\end{minipage}
\end{center}


The 2-soliton solutions of \eqref{n-sG} in the case of $(\varepsilon=1,\sigma=1)$ and $(\varepsilon=1,\sigma=-1)$ are
$u_{22,(\varepsilon=1,\sigma=1)}=\frac{u'_{\ty{D},2}}{u'_{\ty{D},1}}$ with
\begin{subequations}
\label{q12-csG-2ss}
\begin{align}
& u'_{\ty{D},1}=1+\sum_{i=1}^2\sum_{j=1}^2\vartheta^*_i\vartheta_je^{\zeta^*_{i}+\zeta_{j}+\theta_{ij}}+
|\vartheta_1\vartheta_2|^2|k_1-k_2|^4e^{\zeta_{1}+\zeta_{1}^*+\zeta_{2}+\zeta_{2}^*+\theta_{11}+\theta_{12}+\theta_{21}+\theta_{22}}, \\
& u'_{\ty{D},2}=\sum_{i=1}^2\vartheta_ie^{\zeta_{i}}+\vartheta_1\vartheta_2(k_1-k_2)^2e^{\zeta_{1}+\zeta_{2}}
\left(\sum_{i=1}^2\vartheta_i^*e^{\zeta_{i}^*+\theta_{i1}+\theta_{i2}}\right),
\end{align}
\end{subequations}
and $u_{22,(\varepsilon=1,\sigma=-1)}=\frac{u'_{\ty{D},4}}{u'_{\ty{D},3}}$ with
\begin{subequations}
\label{q34-csG-2ss}
\begin{align}
& u'_{\ty{D},3}=1+\sum_{i=1}^2\sum_{j=1}^2\vartheta^*_i\vartheta_je^{-\zeta^*_{i}+\zeta_{j}+\epsilon_{ij}}+
|\vartheta_1\vartheta_2|^2|k_1-k_2|^4e^{\zeta_{1}-\zeta_{1}^*+\zeta_{2}-\zeta_{2}^*+\epsilon_{11}+\epsilon_{12}+\epsilon_{21}+\epsilon_{22}}, \\
& u'_{\ty{D},4}=\sum_{i=1}^2\vartheta_ie^{\zeta_{i}}+\vartheta_1\vartheta_2(k_1-k_2)^2e^{\zeta_{1}
+\zeta_{2}}\left(\sum_{i=1}^2\vartheta_i^*e^{-\zeta_{i}^*+\epsilon_{i1}+\epsilon_{i2}}\right).
\end{align}
\end{subequations}
We depict $|u_{22}|^2_{(\varepsilon=1,\sigma=1)}$ and $|u_{22}|^2_{(\varepsilon=1,\sigma=-1)}$ in Fig. 9 and Fig. 10, respectively.


\begin{center}
\begin{picture}(120,100)
\put(-80,-23){\resizebox{!}{4.0cm}{\includegraphics{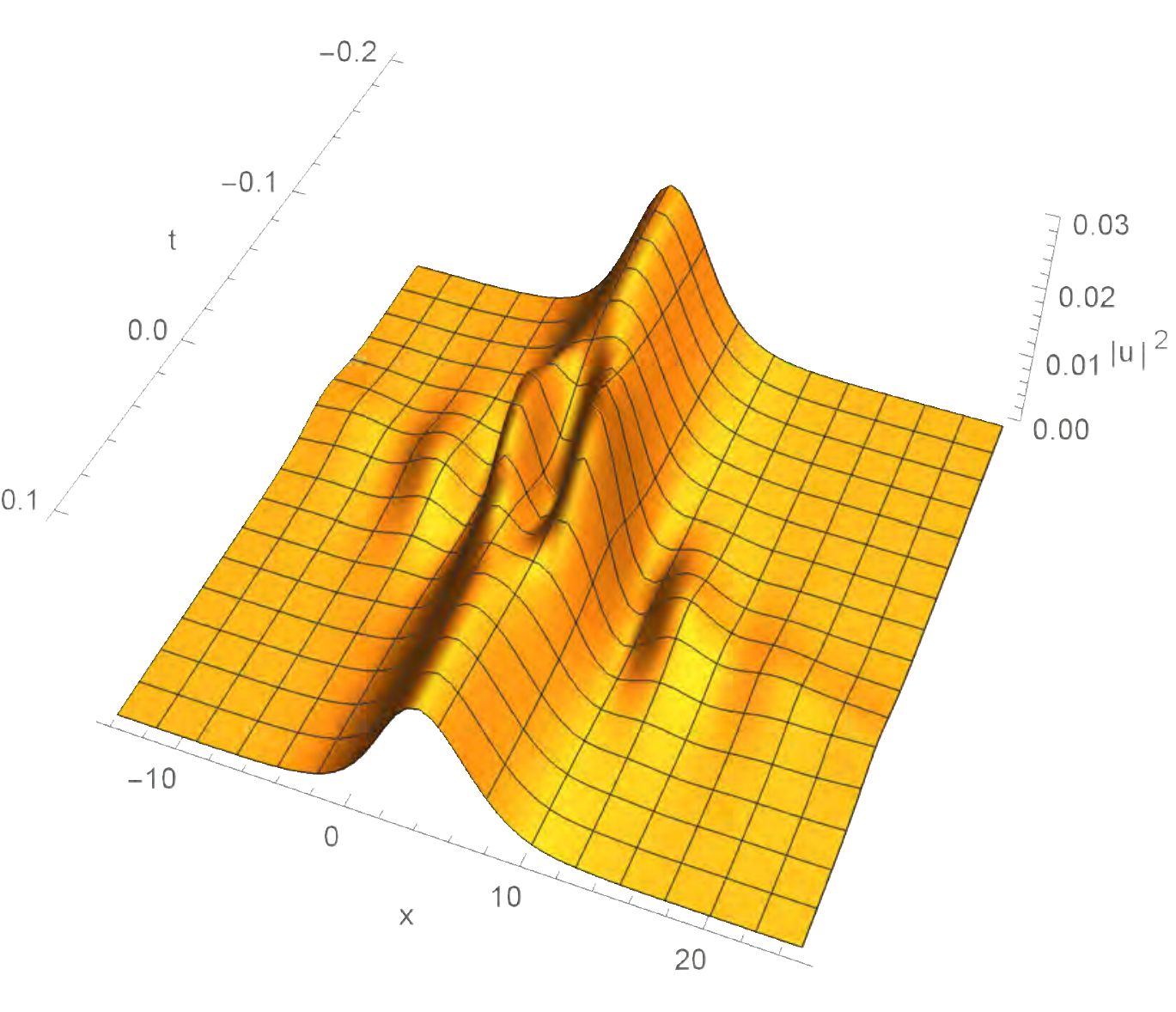}}}
\put(100,-23){\resizebox{!}{3.5cm}{\includegraphics{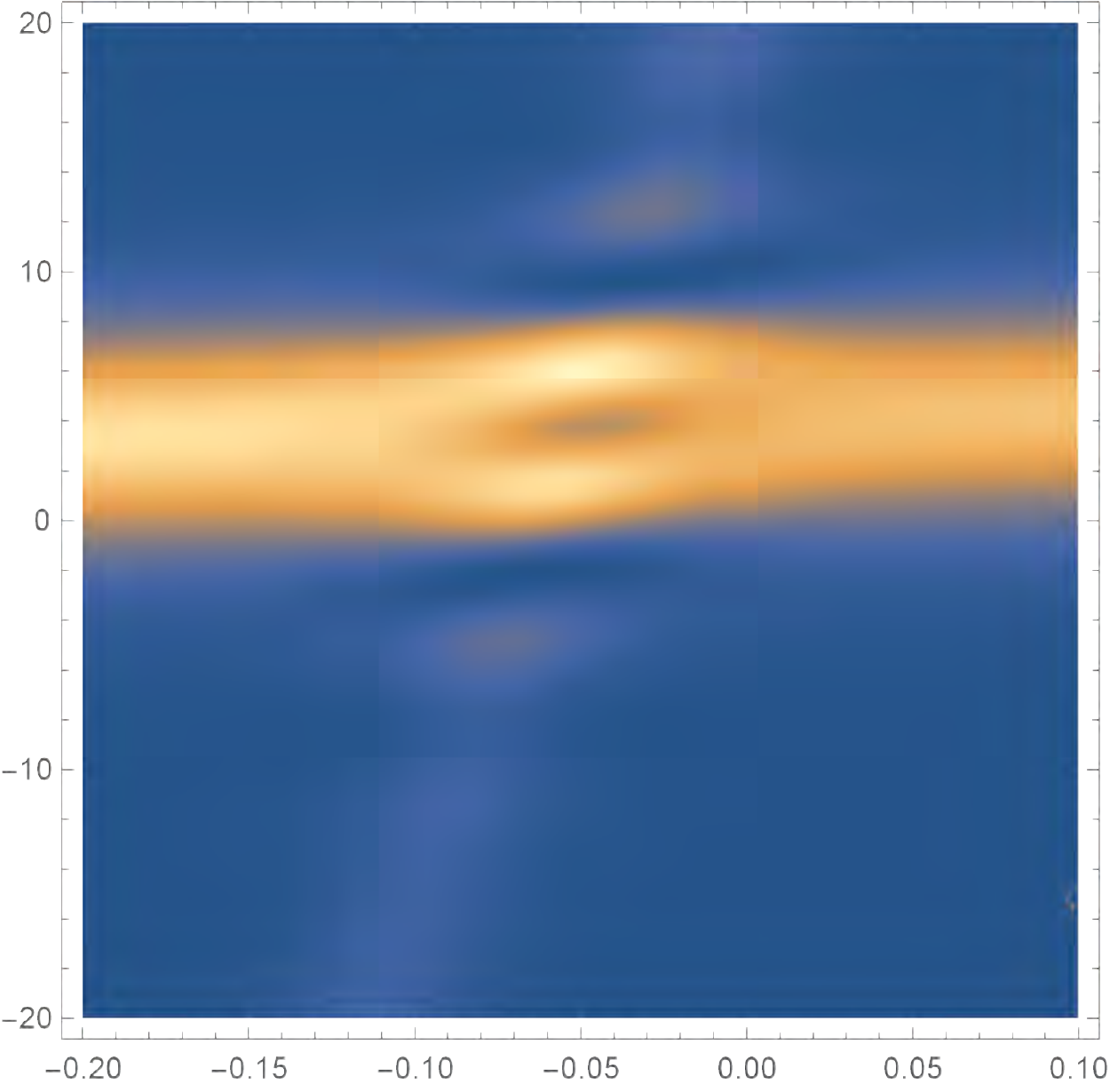}}}
\end{picture}
\end{center}
\vskip 20pt
\begin{center}
\begin{minipage}{16cm}{\footnotesize
\quad\qquad\qquad\qquad\qquad\qquad\qquad\qquad \qquad(a)\quad \qquad \qquad\qquad\qquad \qquad \qquad\qquad\quad(b)}
\end{minipage}
\end{center}
\vskip 10pt
\begin{center}
\begin{picture}(120,80)
\put(-150,-23){\resizebox{!}{3.5cm}{\includegraphics{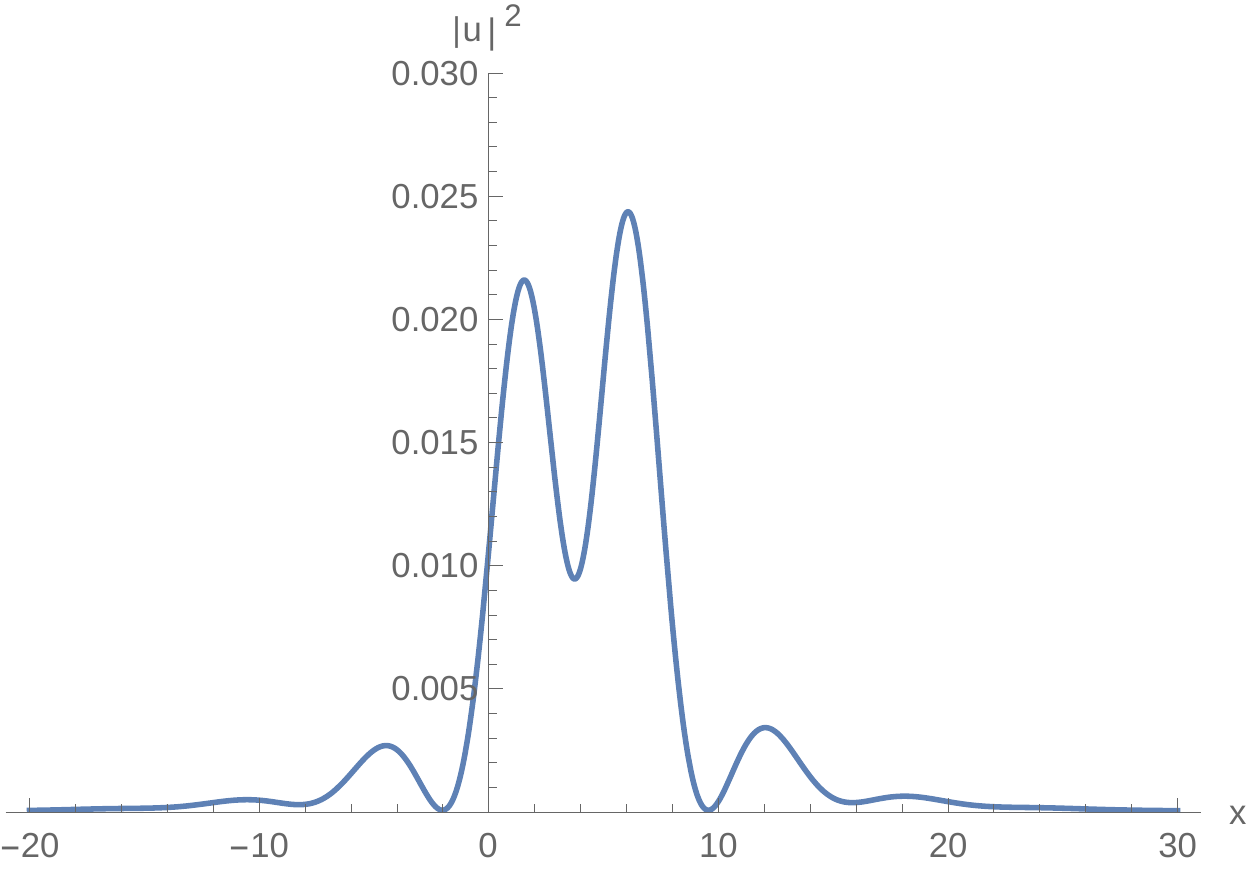}}}
\put(10,-23){\resizebox{!}{3cm}{\includegraphics{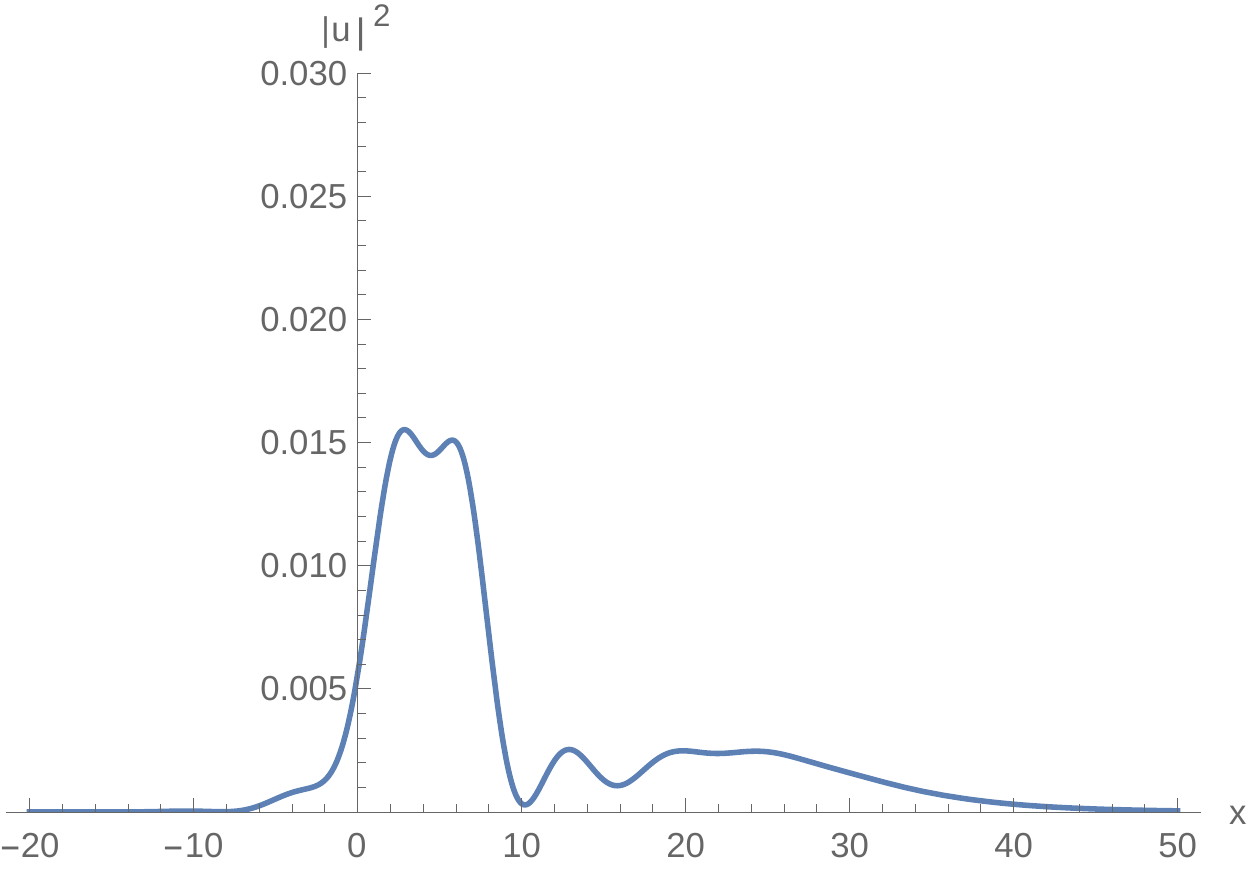}}}
\put(150,-23){\resizebox{!}{3cm}{\includegraphics{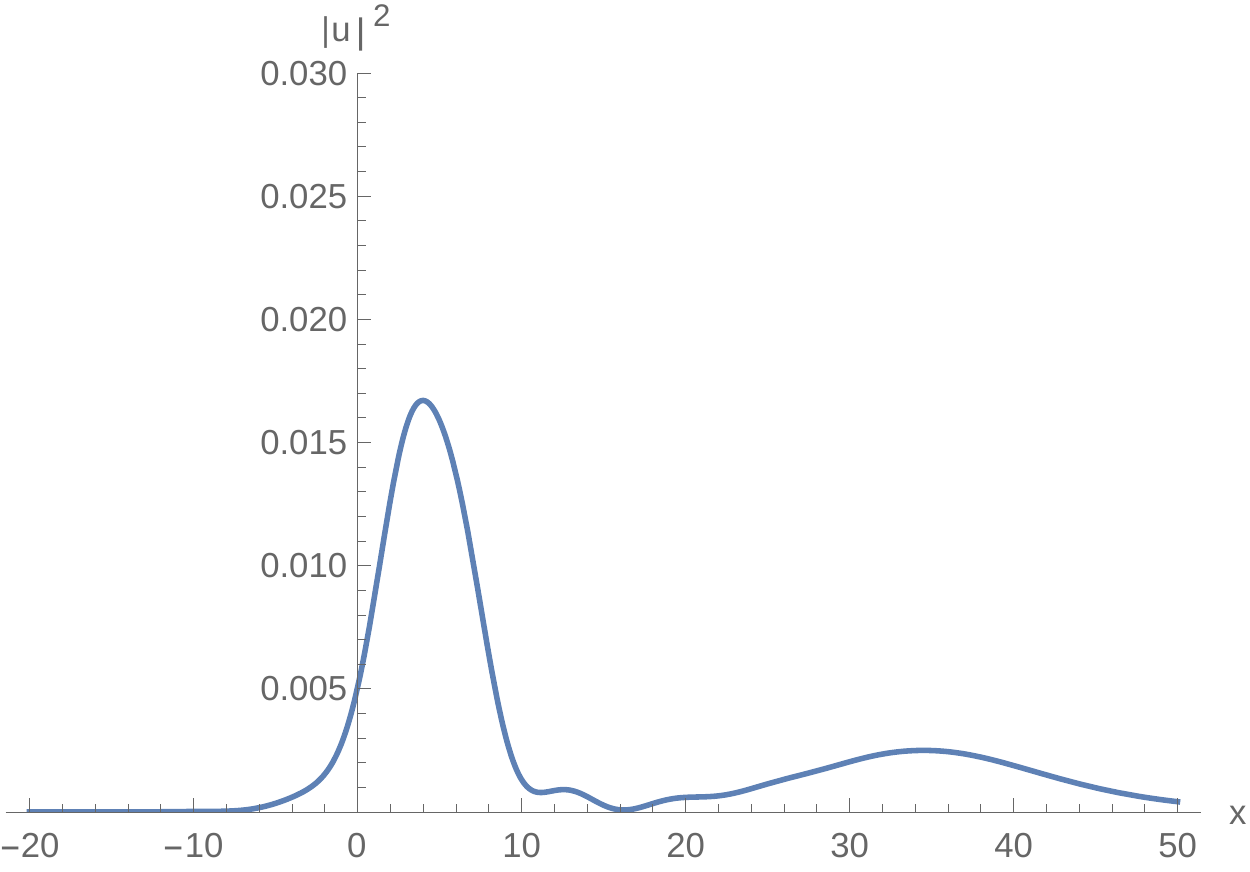}}}
\end{picture}
\end{center}
\vskip 20pt
\begin{center}
\begin{minipage}{16cm}{\footnotesize
\quad\qquad\qquad\qquad(c)\qquad\qquad\qquad\qquad \qquad\quad \qquad \qquad  (d) \qquad\qquad \qquad \qquad\qquad\qquad\qquad (e)\\
{\bf Fig. 9} (a) shape and motion of 2-soliton solutions $|u_{22}|^2_{(\varepsilon=1,\sigma=1)}$ given by \eqref{q12-csG-2ss} for $k_1=0.15+0.5i,~k_2=0.05+0.01i$
and $c_1=c_2=d_1=d_2=1$. (b) a contour plot of (a) with range $x\in [-20, 20]$ and $t\in [-0.2,0.1]$.
(c) 2D-plot of (a) at $t=-0.05$. (d) 2D-plot of (a) at $t=0$. (e) 2D-plot of (a) at $t=0.03$.}
\end{minipage}
\end{center}



\begin{center}
\begin{picture}(120,100)
\put(-150,-23){\resizebox{!}{4.5cm}{\includegraphics{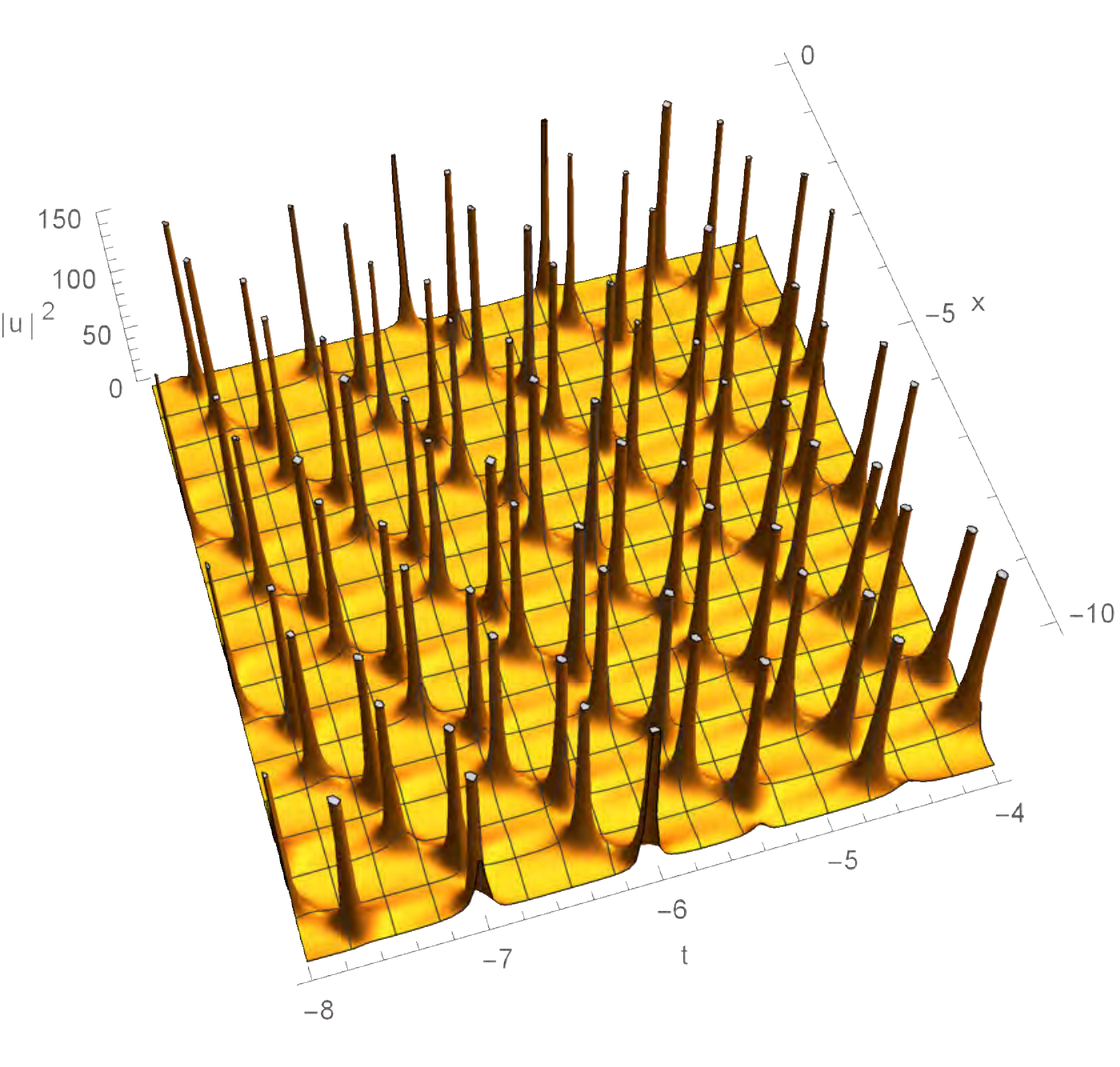}}}
\put(10,-23){\resizebox{!}{3.5cm}{\includegraphics{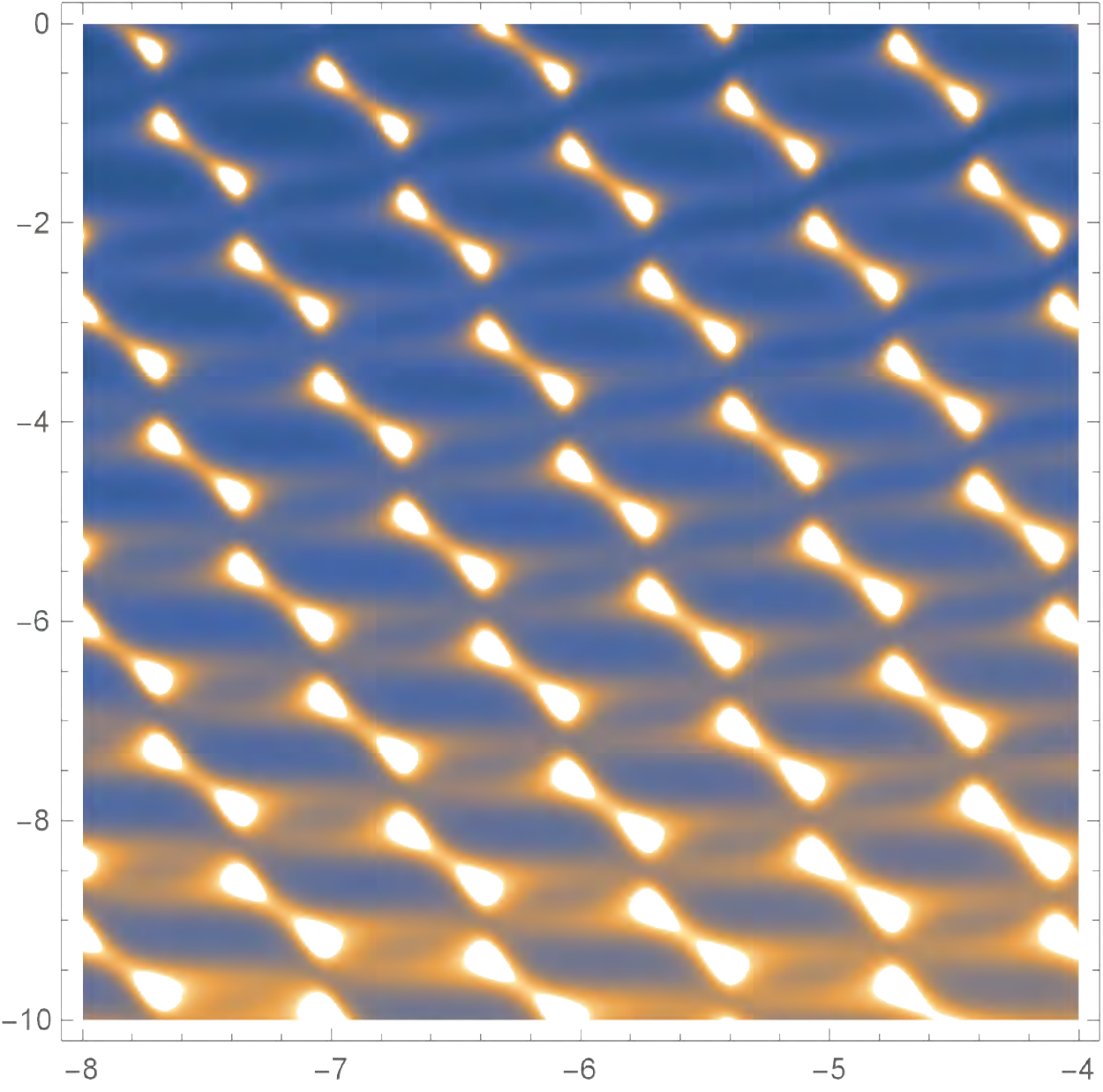}}}
\put(150,-23){\resizebox{!}{3.5cm}{\includegraphics{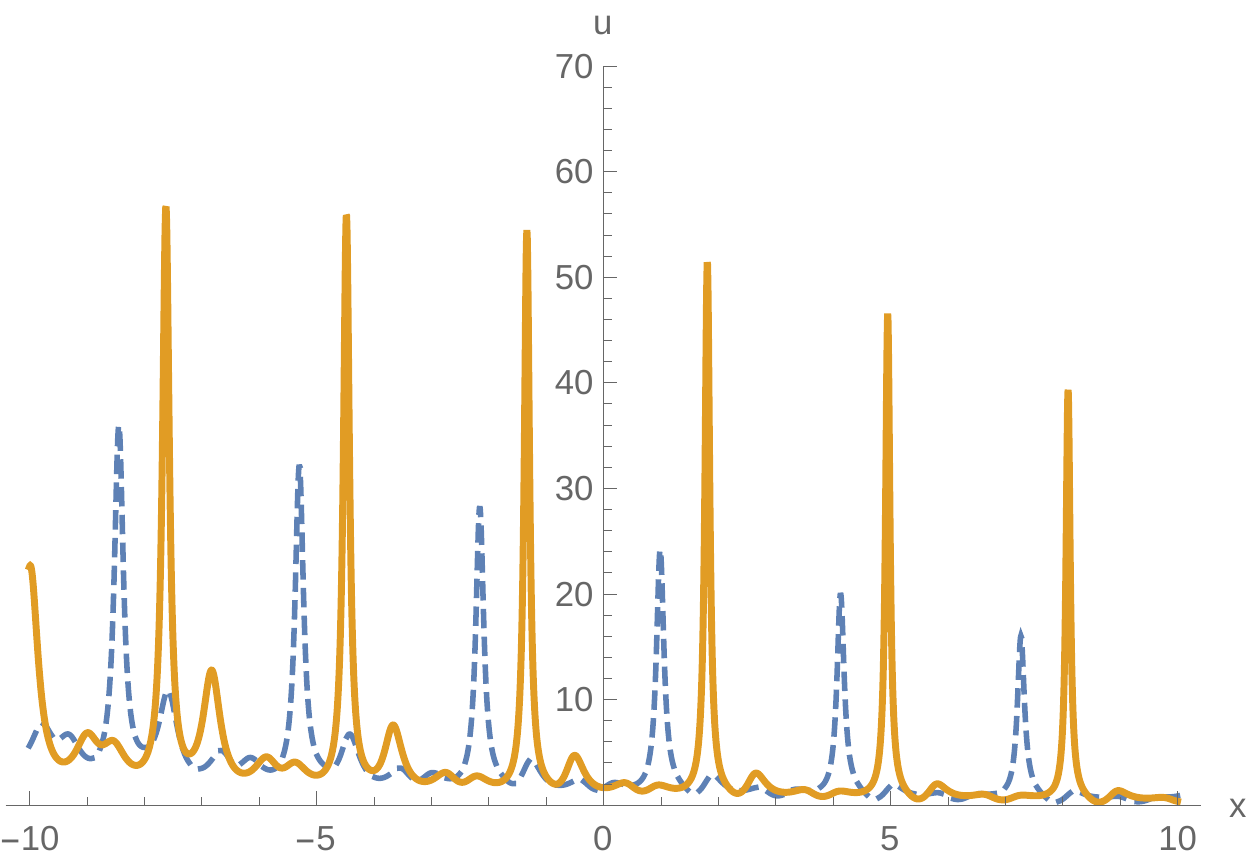}}}
\end{picture}
\end{center}
\vskip 20pt
\begin{center}
\begin{minipage}{16cm}{\footnotesize
\quad\quad\qquad\qquad\qquad\quad(a)\quad\qquad\qquad\qquad \qquad\quad \qquad \qquad (b) \quad\qquad \qquad \qquad\qquad\qquad\qquad \qquad (c)\\
{\bf Fig. 10} (a) shape and motion of 2-soliton solutions $|u_{22}|^2_{(\varepsilon=1,\sigma=-1)}$
 given by \eqref{q34-csG-2ss} for $k_1=0.03+2.5i,~k_2=0.01+0.5i,~c_1=2$ and $c_2=d_1=d_2=1$. (b) a contour plot of (a) with range
$x\in [-10,0]$ and $t\in [-8,-4]$. (c) waves in solid and dotted line stand for plot (a) at $t=-6$ and $t=-5$, respectively.}
\end{minipage}
\end{center}


For the simplest Jordan block solutions, we start with the case $(\varepsilon=1,\sigma=1)$. In this case, solution is
$u_{23,(\varepsilon=1,\sigma=1)}=\frac{u'_{\ty{J},2}}{u'_{\ty{J},1}}$, in which
\begin{subequations}
\label{u12-csG}
\begin{align}
& u'_{\ty{J},1}=(|\vartheta_1|^2e^{\zeta_1+\zeta_1^*}+e^{-2\theta_{11}})^2+|\vartheta_1|^2e^{\zeta_1+\zeta_1^*-3\theta_{11}}
\big(1+(t-x)^2-2e^{\theta_{11}}\big)^2, \\
& u'_{\ty{J},2}=\vartheta_1e^{\zeta_1-4\theta_{11}}\big(1-|\vartheta_1|^2e^{\zeta_1+\zeta_1^*+2\theta_{11}}\big(1-4e^{\theta_{11}}+(e^{-\frac{1}{2}\theta_{11}}(t-x)+2)(t-x)^2\nn\\
&\qquad\quad-2\sqrt{2}(t-x)\sinh((\theta_{11}+\ln 2)/2)\big)\big).
\end{align}
\end{subequations}
We depict $|u_{23}|^2_{(\varepsilon=1,\sigma=1)}$ in Fig. 11.

	
\begin{center}
\begin{picture}(120,100)
\put(-80,-23){\resizebox{!}{4.0cm}{\includegraphics{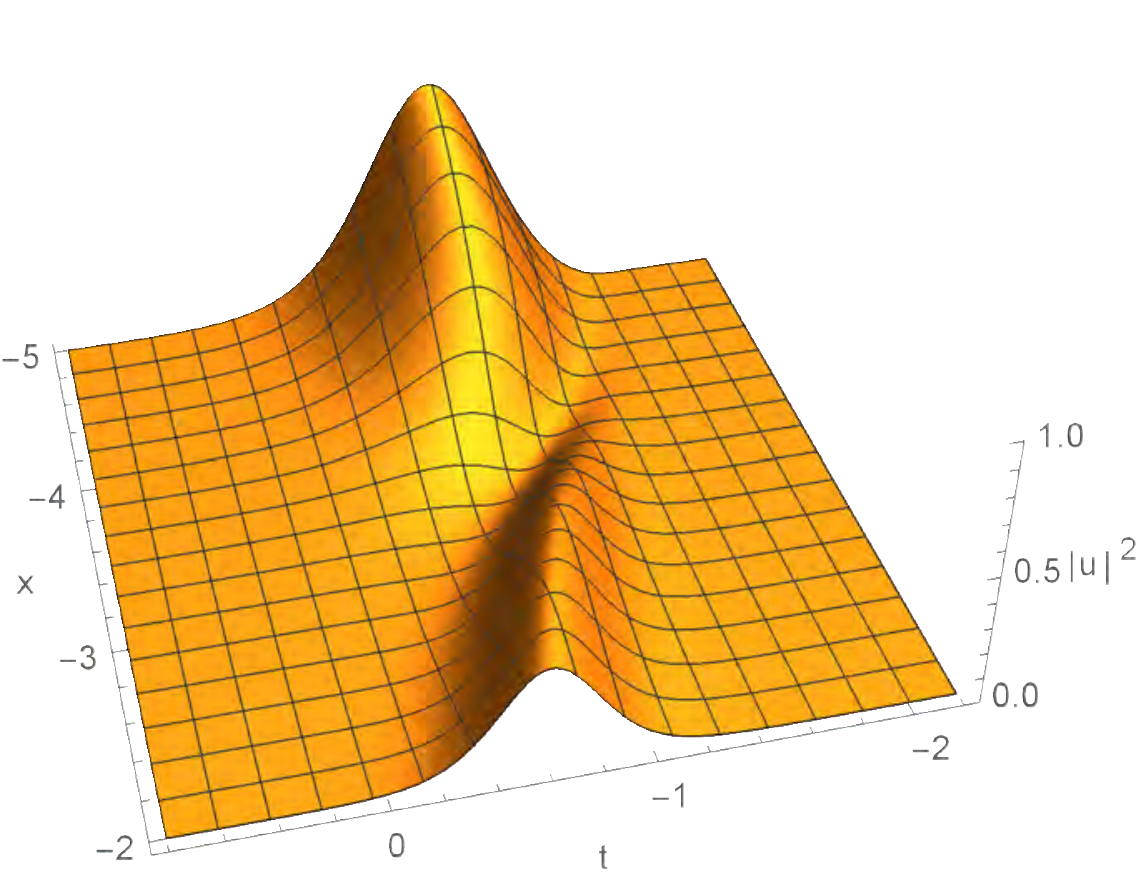}}}
\put(100,-23){\resizebox{!}{3.5cm}{\includegraphics{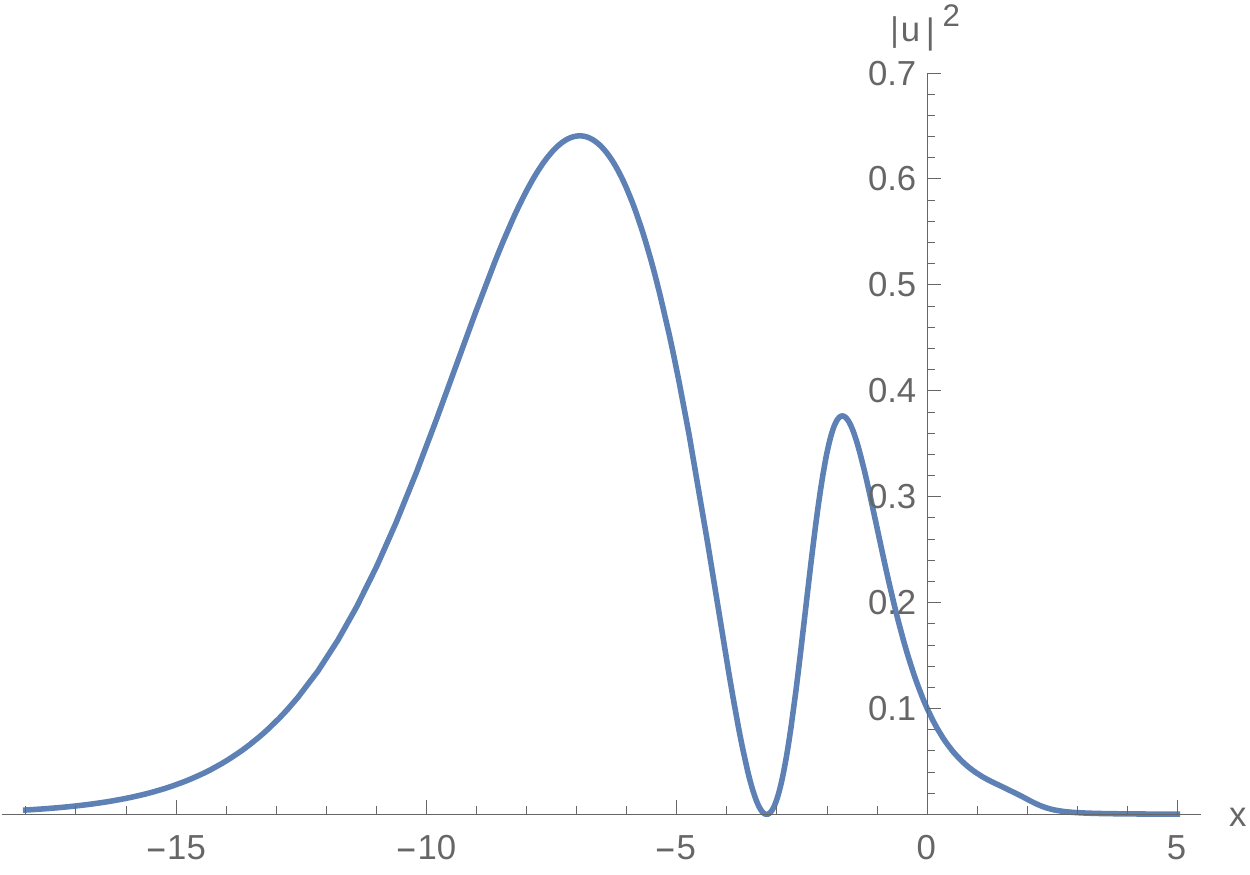}}}
\end{picture}
\end{center}
\vskip 20pt
\begin{center}
\begin{minipage}{16cm}{\footnotesize\quad\qquad\qquad\qquad\qquad\qquad\qquad\qquad \qquad(a)\quad \qquad\qquad\qquad \qquad \qquad\qquad\qquad\qquad \qquad (b)\\
{\bf Fig. 11} (a) shape and motion of Jordan block solution $|u_{23}|^2_{(\varepsilon=1,\sigma=1)}$ given by \eqref{u12-csG} for $k_1=0.25+0.3i,~
c_1=2i$ and $d_1=1$. (b) 2D-plot of (a) at $t=-0.5$.}
\end{minipage}
\end{center}


In the case of $(\varepsilon=1,\sigma=-1)$, the Jordan block
solution reads $u_{23,(\varepsilon=1,\sigma=-1)}=\frac{u'_{\ty{J},4}}{u'_{\ty{J},3}}$ with
\begin{subequations}
\label{u34-csG}
\begin{align}
& u'_{\ty{J},3}=(|\vartheta_1|^2e^{\zeta_1-\zeta_1^*}+e^{-2\epsilon_{11}})^2+|\vartheta_1|^2e^{\zeta_1-\zeta_1^*-3\epsilon_{11}}
\big(4e^{2\epsilon_{11}}+8e^{\epsilon_{11}}(t-x)^2 \nn \\
&\qquad\quad+(1+(t-x)^2)^2+4e^{\frac{1}{2}\epsilon_{11}}(1+(t-x)^2+2e^{\epsilon_{11}})(t-x)\big), \\
& u'_{\ty{J},4}=\vartheta_1e^{\zeta_1-4\epsilon_{11}}\big(1+|\vartheta_1|^2e^{\zeta_1-\zeta_1^*+2\epsilon_{11}}
\big(1+4(t-x)^2+4e^{\epsilon_{11}}+6(t-x)e^{\frac{1}{2}\epsilon_{11}}\nn \\
& \qquad\quad +e^{-\frac{1}{2}\epsilon_{11}}(t-x)(1+(t-x)^2)\big)\big).
\end{align}
\end{subequations}
We depict $|u_{23}|^2_{(\varepsilon=1,\sigma=-1)}$ in Fig. 12.
	
\begin{center}
\begin{picture}(120,100)
\put(-80,-23){\resizebox{!}{4.5cm}{\includegraphics{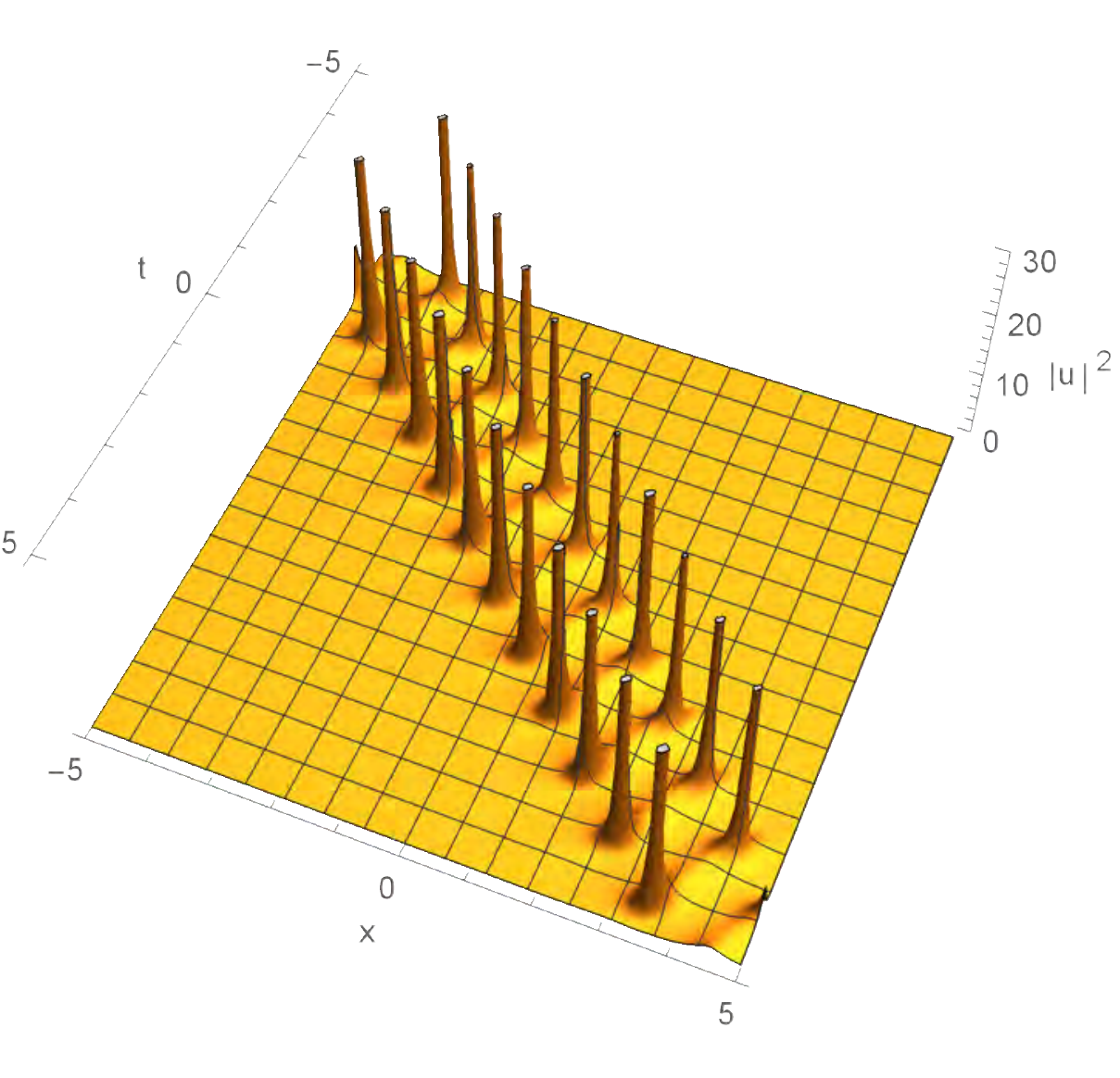}}}
\put(100,-23){\resizebox{!}{3.5cm}{\includegraphics{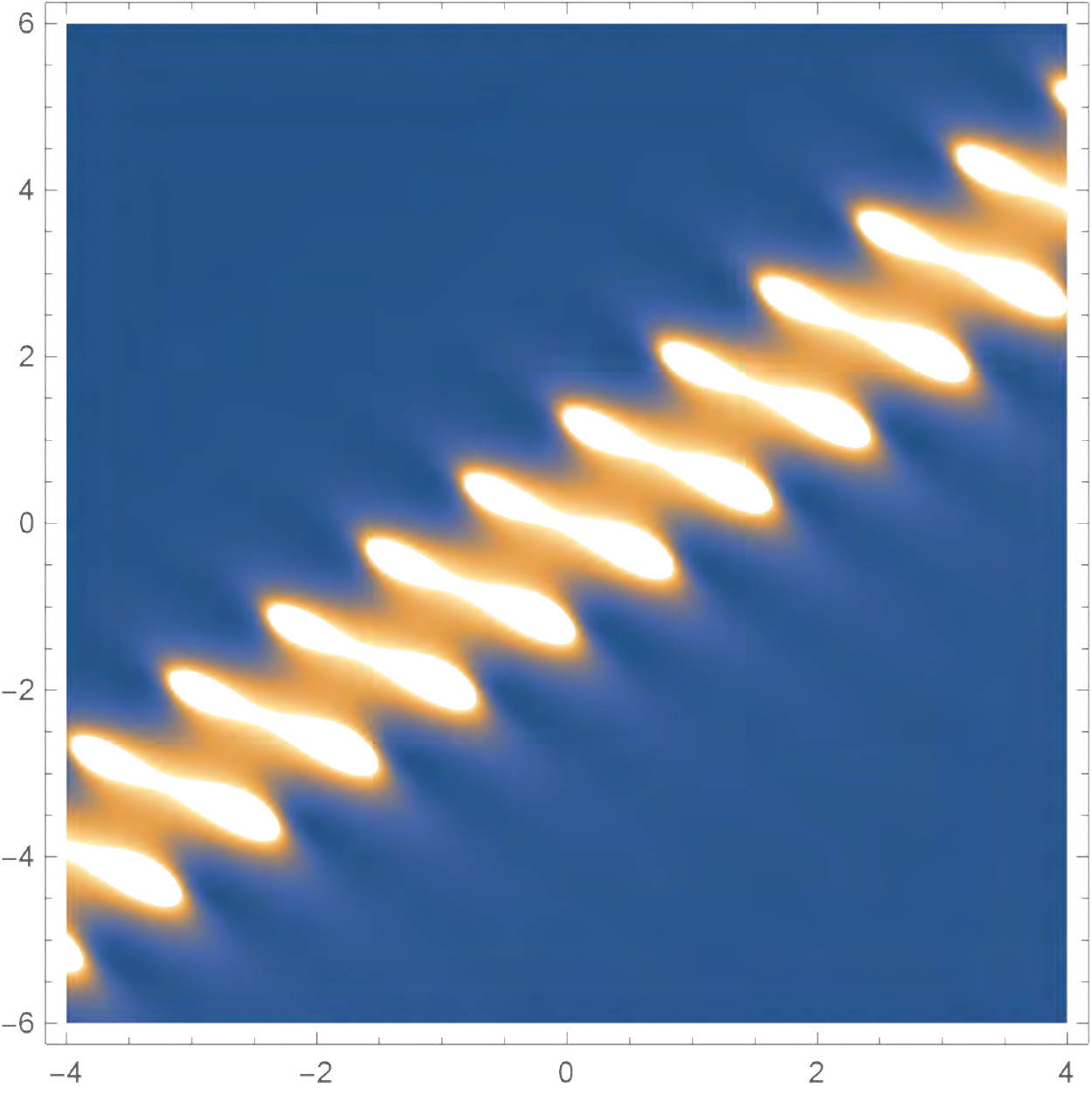}}}
\end{picture}
\end{center}
\vskip 20pt
\begin{center}
\begin{minipage}{16cm}{\footnotesize\quad\qquad\qquad\qquad\qquad\qquad\qquad \qquad(a)\quad\quad\qquad  \qquad\qquad\qquad \qquad \qquad\qquad\qquad(b)\qquad \qquad}
\end{minipage}
\end{center}
\vskip 10pt
\begin{center}
\begin{picture}(120,80)
\put(-150,-23){\resizebox{!}{3.5cm}{\includegraphics{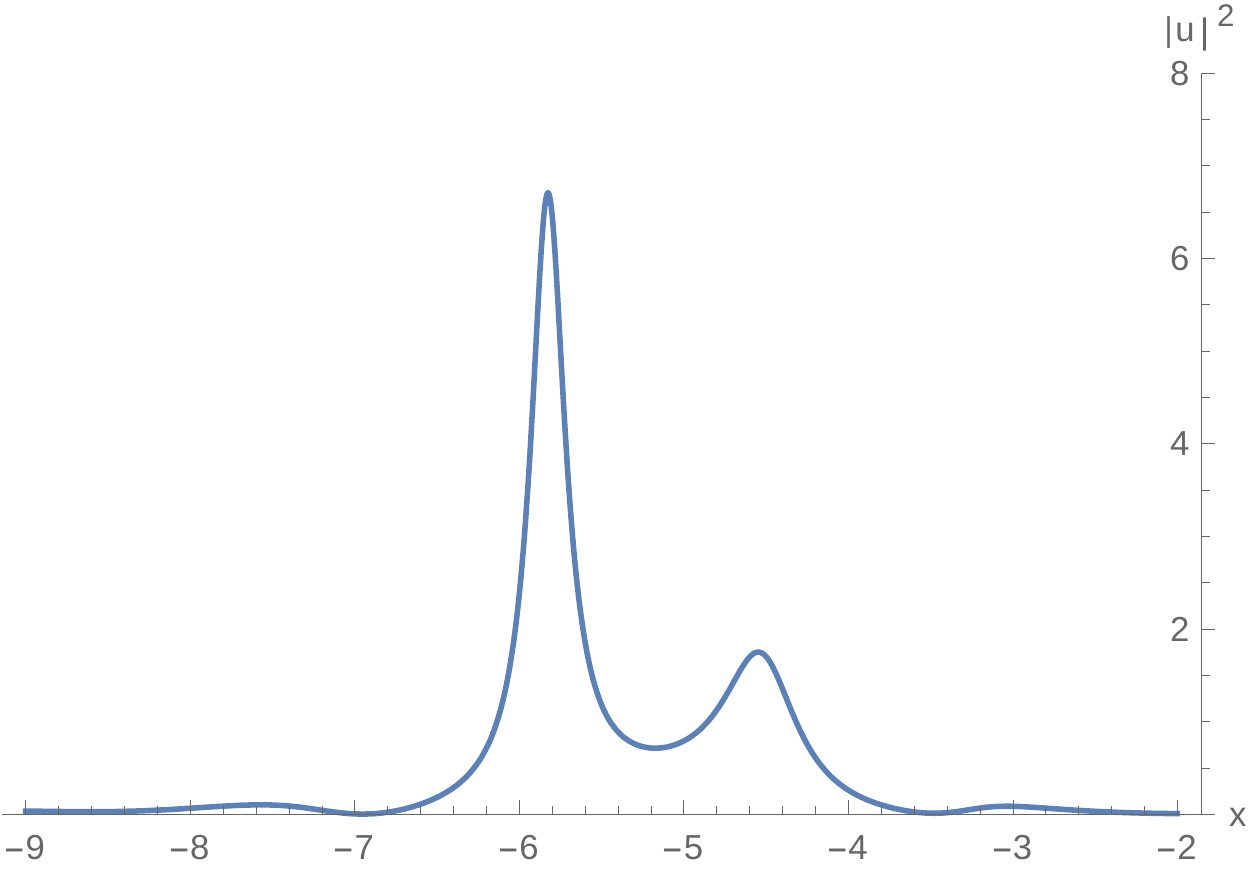}}}
\put(0,-23){\resizebox{!}{3.5cm}{\includegraphics{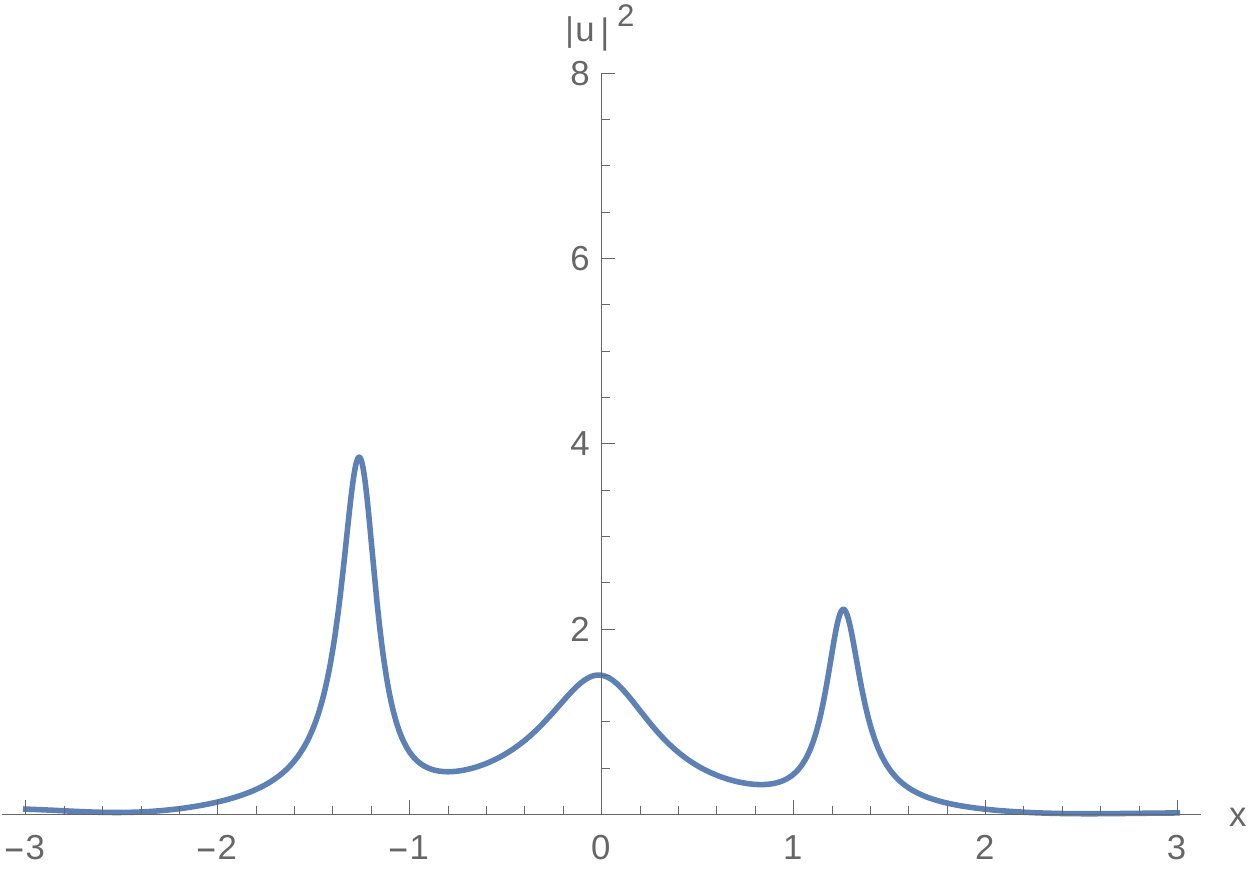}}}
\put(150,-23){\resizebox{!}{3.5cm}{\includegraphics{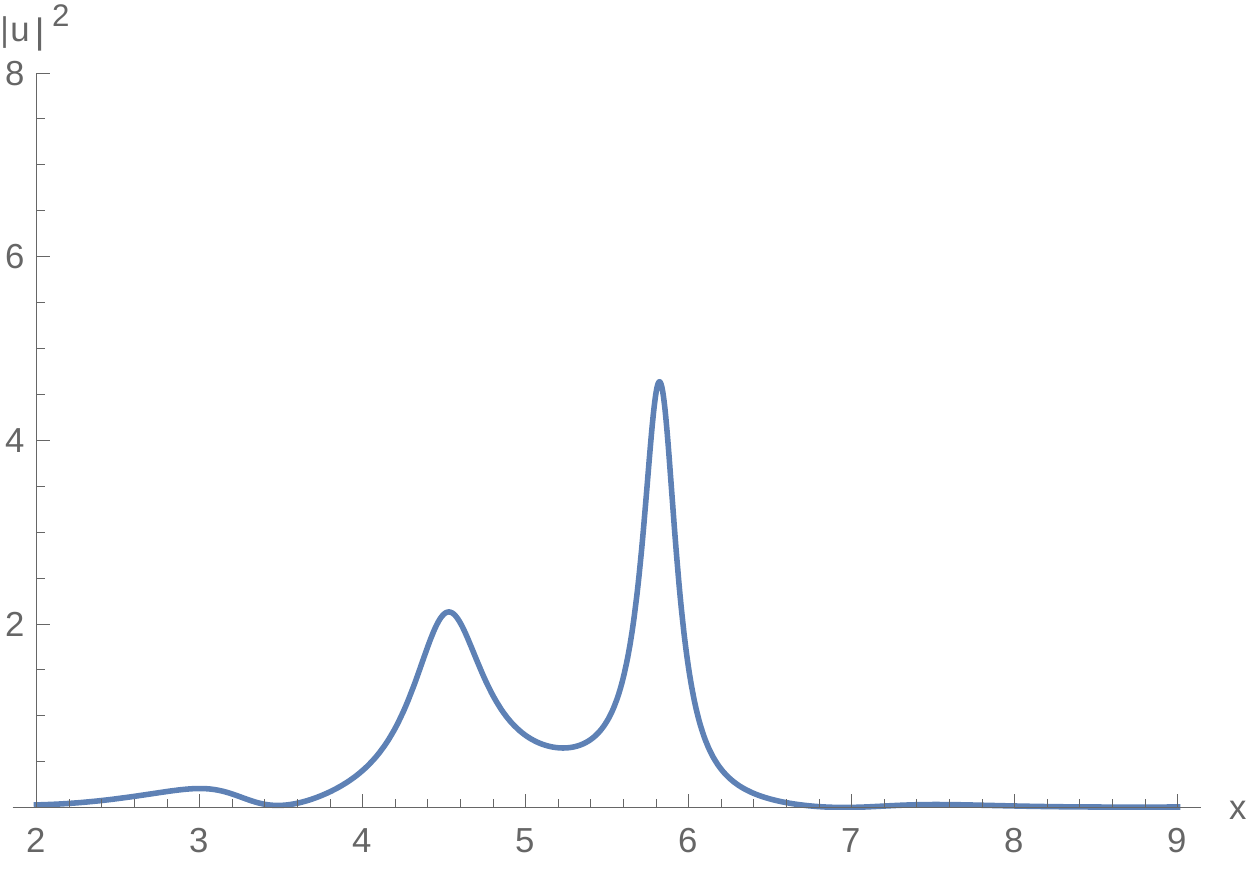}}}
\end{picture}
\end{center}
\vskip 20pt
\begin{center}
\begin{minipage}{16cm}{\footnotesize\quad\qquad\qquad\qquad\quad \quad (c)\quad\qquad\qquad \qquad \qquad \qquad \qquad\qquad(d)\qquad \qquad \qquad\qquad\qquad\qquad\qquad\quad (e)\\
{\bf Fig. 12} (a) shape and motion of Jordan block solution $|u_{23}|^2_{(\varepsilon=1,\sigma=-1)}$
given by \eqref{u34-csG} for $k_1=0.055+i$ and $c_1=d_1=1$.  (b) a contour plot of (a) with range $x\in [-6, 6]$ and $t\in [-4,4]$.
(c) 2D-plot of (a) at $t=-5$. (d) 2D-plot of (a) at $t=0$. (e) 2D-plot of (a) at $t=5$.}
\end{minipage}
\end{center}

\section{Cauchy matrix solutions for the equations \eqref{n-pNLS-1} and \eqref{n-pmKdV-1}}

In this section, we turn our attention to the Theorem \ref{Thm-2} and would like to construct
Cauchy matrix solutions to the local and nonlocal potential NLS equation \eqref{n-pNLS-1} and the local and nonlocal
potential cmKdV equation \eqref{n-pmKdV-1}. We shall use a similar strategy, in other words,
imposing suitable constraints on the pairs $(\Og_1,\Og_2)$, $(\br_1,\br_2)$, $(\bs_1,\bs_2)$
and $(\bM_1, \bM_2)$ in the determining equation set \eqref{DES-C} so that \eqref{2th-pAKNS-so}
coincides with the reductions \eqref{2pAKNS-red} and \eqref{3pAKNS-red}.

\subsection{Local and nonlocal potential NLS equation}

To find the solutions for local and nonlocal potential NLS equation
\eqref{n-pNLS-1} we reduce from the pAKNS(2) equation solution \eqref{2th-pAKNS-so}
led now by the requirement that, we choose $n=2$. The following theorem reveals
the Cauchy matrix solutions for local and nonlocal potential NLS equation \eqref{n-pNLS-1}.
\begin{Thm}
\label{so-nlpNLS}
The function
\begin{align}\label{nlpNLS-solu}
q=\bs^{\st}_2(\bI-\bM_2\bM_1)^{-1}\Og_2^{-1}\br_2-\bs^{\st}_2\bM_2(\bI-\bM_1\bM_2)^{-1}\Og_1^{-1}\br_1-1,
\end{align}
solves the local and nonlocal potential NLS equation \eqref{n-pNLS-1},
where the entities satisfy \eqref{DES-C} $(n=2)$ and simultaneously obey the constraints
\begin{align}\label{nlpNLS-M1M12}
\br_1=\varepsilon \bT \br^*_2(\sigma x),\quad \bs_1=\varepsilon \bT^{\st^{-1}}\bs_2^*(\sigma x),\quad
\bM_1=\bT \bM^*_2(\sigma x)\bT^*,
\end{align}
in which $\bT\in \mathbb{C}_{N\times N}$ is a constant matrix satisfying
\begin{align}\label{nlpNLS-at-eq}
\Og_1\bT+\sigma\bT\Og^*_2=0,\quad \bC^+_1=\varepsilon\bT\bC_2^{+^*},\quad
\bD^+_1=\varepsilon\bT^{\st^{-1}}\bD_2^{+^*},\quad \varepsilon^2=\varepsilon^{*^2}=-\sigma.
\end{align}
\end{Thm}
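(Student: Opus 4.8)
The plan is to mimic, essentially verbatim, the verification carried out for Theorem \ref{so-nlcmKdV}, since the structure of the argument is identical: one checks that the constraints \eqref{nlpNLS-M1M12}–\eqref{nlpNLS-at-eq} are consistent with the determining system \eqref{DES-C} taken at $n=2$, and then that they force the reduction relation \eqref{2pAKNS-red} on the pair $(q,r)$ given by Theorem \ref{Thm-2}. The only genuinely new feature compared with the cmKdV case is the presence of the extra factors $\Og_j^{-1}$ in the solution formula \eqref{nlpNLS-solu} and the different algebraic constraint $\varepsilon^2=\varepsilon^{*^2}=-\sigma$ together with $\bM_1=\bT\bM_2^*(\sigma x)\bT^*$ (note: no $\delta\sigma$ prefactor here, unlike \eqref{nlcmKdV-M1M12}); both of these need to be tracked carefully but neither changes the logical skeleton.

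First I would verify the dispersion relations. Starting from \eqref{rsj} with $n=2$, and using $\Og_1=-\sigma\bT\Og_2^*\bT^{-1}$ from \eqref{nlpNLS-at-eq}, I compute
\begin{align*}
\br_1(x,t)&=\exp\!\Bigl(\Og_1 x-\Og_1^2 t\Bigr)\bC_1^+
=\bT\exp\!\Bigl(-\sigma\Og_2^* x-\Og_2^{*^2} t\Bigr)\bT^{-1}\bC_1^+
=\varepsilon\,\bT\,\br_2^*(\sigma x),
\end{align*}
where the crucial point is that $(-\sigma\Og_2^*)^2=\Og_2^{*^2}$, so the $t$-evolution is consistent and, since the potential NLS reduction \eqref{2pAKNS-red} only reflects $x$ (through $\sigma x$) while absorbing the time dependence into $t\to -it$, the conjugation acts on $x$ only — this is why $\br_2^*(\sigma x)$ appears rather than $\br_2^*(\sigma x,\sigma t)$. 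An analogous computation gives $\bs_1(x,t)=\varepsilon\,\bT^{\st^{-1}}\bs_2^*(\sigma x)$. Next, substituting $\Og_1=-\sigma\bT\Og_2^*\bT^{-1}$ into the first Sylvester equation of \eqref{DES-M12-C} and using uniqueness of its solution (guaranteed by \eqref{con-Og}), I identify $\bM_1=\bT\bM_2^*(\sigma x)\bT^*$; here the constraint $\varepsilon^2=-\sigma$ is what makes the right-hand side $\br_1\bs_2^{\st}$ match up correctly after conjugation, producing the clean prefactor $1$ rather than $\delta\sigma$.

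Then I would plug these relations into the formula \eqref{nlpNLS-solu} for $q$ and the companion formula for $r$ from Theorem \ref{Thm-2}. Using $\bM_1=\bT\bM_2^*(\sigma x)\bT^*$, $\Og_1^{-1}=-\sigma^{-1}\bT\Og_2^{*^{-1}}\bT^{-1}=-\sigma\bT\Og_2^{*^{-1}}\bT^{-1}$ (since $\sigma=\pm1$), and the relations for $\br_1,\bs_1$, each of the three terms in the expression for $r$ transforms into the corresponding term of $q^*(\sigma x)$; the factors of $\bT,\bT^*,\bT^{-1},\bT^{\st^{-1}}$ telescope because they always occur in matching pairs around an $\bM$ or around an inverse $(\bI-\bM_1\bM_2)^{-1}$, and the scalar $\varepsilon^2=-\sigma$ combined with the explicit $-\sigma$ from $\Og_1^{-1}$ yields overall factor $+1$. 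The constant term $-1$ is real and self-conjugate, so it matches automatically. Hence $r(x,t)=q^*(\sigma x)$, which upon the substitution $t\to -it$ is precisely \eqref{2pAKNS-red}, and by Theorem \ref{Thm-2} the pair $(q,r)$ solves pAKNS(2), so $q$ solves \eqref{n-pNLS-1}.

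The main obstacle I anticipate is bookkeeping the $\Og_j^{-1}$ factors consistently under conjugation and reflection: one must be sure that $(\Og_1^n)^{-1}$, $\Og_1^{-1}$, and the similarity $\Og_1=-\sigma\bT\Og_2^*\bT^{-1}$ all interact so that no stray sign or stray $\sigma$ survives, and in particular that the sign choice forcing $\varepsilon^2=-\sigma$ (rather than $\varepsilon^2=\delta$ as in the cmKdV case) is exactly what is needed — this is the point where the $t\to -it$ feature of the reduction \eqref{2pAKNS-red} enters, since it is responsible for the factor $i$ that distinguishes \eqref{n-pNLS-1} from a "real-time" reduction. Everything else is a direct, if slightly lengthy, computation paralleling the proof of Theorem \ref{so-nlcmKdV}, so I would present it compactly and, as the authors do for Theorem \ref{so-nlcsG}, possibly omit the fully expanded version.
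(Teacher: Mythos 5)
Your proposal follows the paper's proof essentially verbatim: verify the dispersion relations from \eqref{nlpNLS-at-eq}, invoke uniqueness of the Sylvester solution to get $\bM_1=\bT\bM_2^*(\sigma x)\bT^*$, and then check term by term that $r(x,t)=q^*(\sigma x)$, with the scalar bookkeeping $-\sigma\varepsilon^2=1$ and the telescoping of the $\bT$-factors handled exactly as you describe. That part is sound and matches the paper.

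The one step that fails as displayed is the dispersion-relation computation. You start from the pre-substitution exponent $\exp(\Og_1 x-\Og_1^2 t)$ and conclude $\br_1=\varepsilon\bT\br_2^*(\sigma x)$ with $t$ unchanged. But with $\br_2=\exp(-\Og_2 x+\Og_2^2 t)\bC_2^+$ one has $\br_2^*(\sigma x,t)=\exp(-\Og_2^*\sigma x+\Og_2^{*^2}t)\bC_2^{+^*}$, whereas your chain produces $\exp(-\Og_2^*\sigma x-\Og_2^{*^2}t)$ — i.e.\ you have actually shown $\br_1(x,t)=\varepsilon\bT\br_2^*(\sigma x,-t)$, with the time variable reflected. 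The identity you want, with $t$ unreflected, only holds \emph{after} performing the substitution $t\to -it$ demanded by \eqref{2pAKNS-red}: then $\br_1=\exp(\Og_1 x+i\Og_1^2 t)\bC_1^+$ and $\br_2=\exp(-\Og_2 x-i\Og_2^2 t)\bC_2^+$ as in \eqref{rs-2-pNLS}, and since $(-i\Og_2^2)^*=+i\Og_2^{*^2}$ while $\Og_1^2=\bT\Og_2^{*^2}\bT^{-1}$, the time parts now agree under conjugation at the same real $t$. This is precisely how the paper's proof is arranged. Your surrounding prose shows you understand that $t\to -it$ is the mechanism, but the displayed equality chain should be rewritten with the post-substitution exponents; otherwise the claimed equality is false and the rest of the verification, which uses $\br_1=\varepsilon\bT\br_2^*(\sigma x)$ at equal times, would not be justified.
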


\begin{proof}
Again this is proven by direct computation.
In terms of the assumptions \eqref{nlpNLS-at-eq}, we have
\begin{align*}
\br_1(x)=& \exp{\Bigl(\Og_1 x+i \Og_1^{2}t \Bigr)}\bC_1^+ \nn \\
=& \exp{\Bigl(-\bT\Og^*_2 \sigma x\bT^{-1}-\bT(i\Og^{2}_2)^*t\bT^{-1}\Bigr)}\bC_1^+ \nn \\
=& \bT\exp{\Bigl(-\Og^*_2 \sigma x-(i\Og^{2}_2)^*t\Bigr)}\bT^{-1}\bC_1^+ \nn \\
=& \varepsilon \bT \br^*_2(\sigma x),
\end{align*}
and similarly one can get the rest relations in \eqref{nlpNLS-M1M12}.
Using the relations \eqref{nlpNLS-M1M12} and \eqref{nlpNLS-at-eq},
a straightforward calculation then yields
\begin{align*}
r(x)=& \bs^{\st}_1(\bI-\bM_1\bM_2)^{-1}\Og_1^{-1}\br_1
-\bs^{\st}_1\bM_1(\bI-\bM_2\bM_1)^{-1}\Og_2^{-1}\br_1-1 \nn \\
=& \bs^{\st^*}_2(\sigma x)(\bI-\bM^*_2(\sigma x)\bM^*_1(\sigma x))^{-1}\Og_2^{*^{-1}}\br^*_2(\sigma x) \nn \\
& \quad  -\bs^{*^{\st}}_2(\sigma x)\bM^*_2(\sigma x)(\bI-\bM^*_1(\sigma x)\bM^*_2(\sigma x))^{-1}\Og_1^{*^{-1}}\br^*_1(\sigma x)-1 \nn \\
=& q^*(\sigma x),
\end{align*}
which coincides with the reduction \eqref{2pAKNS-red}.
\end{proof}

\subsubsection{Exact solutions}

In terms of \eqref{nlpNLS-M1M12} and \eqref{nlpNLS-at-eq}, we write solution \eqref{nlpNLS-solu} in an equivalent form
\begin{subequations}
\label{nlpNLS-solu-r1}
\begin{align}
\label{nlpNLS-solu-1}
q=\bs^{\st}_2(\bI-\bM_2 \bM^*_2(\sigma x))^{-1}\Og_2^{-1}\br_2-\varepsilon^*\bs^{\st}_2\bM_2(\bI-\bM^*_2(\sigma x)\bM_2)^{-1}\Og_2^{*^{-1}}\br^*_2(\sigma x)-1,
\end{align}
where the components $\br_2$ and $\bs_2$ are described as
\begin{align}
\label{rs-2-pNLS}
\br_{2}=\mbox{exp}(-\Og_{2} x-i\Og^2_{2}t)\bC^+_{2},\quad \bs_{2}=\mbox{exp}(-\Og^{\st}_{2}x-i(\Og^{\st}_{2})^2t)\bD^+_{2},
\end{align}
and $\bM_2$ is determined by
\begin{align}
\label{DES-M12-C-pNLS}
\Og_2 \bM_2+\sigma\bM_2\Og^*_2=\varepsilon\br_2\, \bs^{*^{\st}}_2(\sigma x).
\end{align}
\end{subequations}
Here we have used the similar treatment of Remark 1, i.e., absorbing the matrix $\bT$ into $\bM_2$.

In what follows, we construct soliton solutions and Jordan block solutions for the
local and nonlocal potential NLS equation \eqref{n-pNLS-1}. In order to seek the soliton solutions,
we take $\Og_2=\Og_{\ty{D},2}$. In this case, \eqref{nlpNLS-solu-1} can be expressed as
\begin{align}
\label{nlpNLS-solu-so}
q_1=\bs^{\st}_2(\bI-\bM_2\bM^*_2(\sigma x))^{-1}\Og_{\ty{D},2}^{-1}\br_2
-\varepsilon^*\bs^{\st}_2\bM_2(\bI-\bM^*_2(\sigma x)\bM_2)^{-1}\Og_{\ty{D},2}^{*^{-1}}\br^*_2(\sigma x)-1.
\end{align}
We denote, respectively, $\bM^{(1)}_2(x)$, $\bM^{(2)}_2(x)$, $\bM^{(3)}_2(x)$ and $\bM^{(4)}_2(x)$ as
solutions for \eqref{DES-M12-C-pNLS} with $\varepsilon=i$, $\varepsilon=-i$, $\varepsilon=1$ and $\varepsilon=-1$.
It is worthy to note that solution $q_{1,(\varepsilon=-i)}$
is the same as solution $q_{1,(\varepsilon=i)}$ because of $\bM^{(2)}_2(x)=-\bM^{(1)}_2(x)$. Besides,
solution $q_{1,(\varepsilon=-1)}$ is the same as solution $q_{1,(\varepsilon=1)}$ because of $\bM^{(4)}_2(x)=-\bM^{(3)}_2(x)$.
Thus in the following we just present solutions for $\varepsilon=i$ and $\varepsilon=1$.

When $\varepsilon=i$, soliton solution for equation \eqref{n-pNLS-1} reads
\begin{align}\label{nlpNLS-solu-(i)}
q_{1,(\varepsilon=i)}=\bs^{\st}_2(\bI-\bM^{(1)}_2\bM^{(1)^*}_2)^{-1}\Og_2^{-1}\br_2+i\bs^{\st}_2\bM^{(1)}_2(\bI-\bM^{(1)^*}_2\bM^{(1)}_2)^{-1}\Og_2^{*^{-1}}\br^*_2-1,
\end{align}
in which
\begin{subequations}
\begin{align}
& \label{rs-2}
\br_{2}=\mbox{exp}(-\Og_{\ty{D},2} x-i\Og^2_{\ty{D},2}t)\bC^+_{\ty{D},2},\quad \bs_{2}=\mbox{exp}(-\Og_{\ty{D},2}x-i\Og^2_{\ty{D},2}t)\bD^+_{\ty{D},2},\\
& \bM^{(1)}_{2}=i\mbox{exp}(-\Og_{\ty{D},2}x-i\Og^2_{\ty{D},2}t)\bC^-_{\ty{D},2}\cdot
\bG_{\ty{D}}^{(12)}\cdot\bD^{-^*}_{\ty{D},2}\mbox{exp}(-\Og^*_{\ty{D},2}x+i\Og^{*^2}_{\ty{D},2}t),
\end{align}
\end{subequations}
where $\bC^{\pm}_{\ty{D},2},~\bD^{\pm}_{\ty{D},2}$ are given in \eqref{CD-GDD} and $\bG^{(12)}_{\ty{D}}$ is
defined by \eqref{G12}. Specially, 1-soliton solution is
\begin{align}
\label{q-solu-nls-(i)}
q_{1,(\varepsilon=i)}=\frac{k_1^*(\vartheta_1e^{-\varpi}-k_1)-|\vartheta_1|^2k_1^2e^{-(\varpi+\varpi^*)+\theta_{11}}}{
|k_1|^2(1-|\vartheta_1|^2e^{-(\varpi+\varpi^*)+\theta_{11}})},
\end{align}
where $\varpi=2k_1(x+ik_1t)$.

When $\varepsilon=1$, soliton solution of this case is given by
\begin{align}
\label{nlpNLS-solu-(1)}
q_{1,(\varepsilon=1)}=& \bs^{\st}_2(\bI-\bM^{(3)}_2\bM^{(3)^*}_2(-x))^{-1}\Og_2^{-1}\br_2 \nn \\
& -\bs^{\st}_2\bM^{(3)}_2(\bI-\bM^{(3)^*}_2(-x)\bM^{(3)}_2)^{-1}\Og_2^{*^{-1}}\br^*_2(-x)-1,
\end{align}
in which $\br_{2}$ and $\bs_{2}$ are the same as \eqref{rs-2} and
\begin{align}
\bM^{(3)}_{2}=\mbox{exp}(-\Og_{\ty{D},2}x-i\Og^2_{\ty{D},2}t)\bC^-_{\ty{D},2}\cdot
\bG_{\ty{D}}^{(34)}\cdot\bD^{-^*}_{\ty{D},2}\mbox{exp}(\Og^*_{\ty{D},2}x+i\Og^{*^2}_{\ty{D},2}t),
\end{align}
where $\bC^{\pm}_{\ty{D},2},~\bD^{\pm}_{\ty{D},2}$ are given in \eqref{CD-GDD} and
$\bG^{(34)}_{\ty{D}}$ is defined by \eqref{G34}. The 1-soliton solution is
\begin{align}
\label{q-solu-nls-(1)}
q_{1,(\varepsilon=1)}=\frac{k_1^*(\vartheta_1e^{-\varpi}-k_1)-|\vartheta_1|^2k_1^2e^{\ell-\varpi+\epsilon_{11}}}{
|k_1|^2(1+|\vartheta_1|^2e^{\ell-\varpi+\epsilon_{11}})},
\end{align}
where $\ell=2k_1^*(x+ik_1^{*}t)$.

For deriving the Jordan block solutions, we take $\Og_2=\Og_{\ty{J},2}$. Some solutions are listed as follows:
\begin{subequations}
\begin{align*}
& q_{2,(\varepsilon=i)}=\bs^{\st}_2(\bI-\wh{\bM}_2\wh{\bM}^*_2)^{-1}\Og^{-1}_{\ty{J},2}\br_2
+i\bs^{\st}_2\wh{\bM}_2(\bI-\wh{\bM}^*_2\wh{\bM}_2)^{-1}\Og_{\ty{J},2}^{*^{-1}}\br^*_2-1, \\
& q_{2,(\varepsilon=1)}=\bs^{\st}_2(\bI-\wt{\bM}_2\wt{\bM}^*_2(-x))^{-1}\Og^{-1}_{\ty{J},2}\br_2
-\bs^{\st}_2\wt{\bM}_2(\bI-\wt{\bM}^*_2(-x)\wt{\bM}_2)^{-1}\Og_{\ty{J},2}^{*^{-1}}\br^*_2(-x)-1,
\end{align*}
\end{subequations}
where
\begin{subequations}
\begin{align*}
& \br_{2}=\mbox{exp}(-\Og_{\ty{J},2} x-i\Og^2_{\ty{J},2}t)\bC^+_{\ty{J},2},\quad \bs_{2}=\mbox{exp}(-\Og^{\st}_{\ty{J},2}x-i(\Og^{\st}_{\ty{J},2})^2t)\bD^+_{\ty{J},2}, \\
& \wh{\bM}_2=i\mbox{exp}(-\Og_{\ty{J},2} x-i\Og^2_{\ty{J},2}t)\bC_{\ty{J},2}^- \cdot\wh{\bG}_{\ty{J}}^{\st}\cdot\bD_{\ty{J},2}^{-^*}
\mbox{exp}(-\Og_{\ty{J},2}^{*^{\st}}x+i(\Og_{\ty{J},2}^{*^{\st}})^2t), \\
& \wt{\bM}_2=\mbox{exp}(-\Og_{\ty{J},2} x-i\Og^2_{\ty{J},2}t)\bC_{\ty{J},2}^- \cdot\wt{\bG}_{\ty{J}}^{\st}\cdot\bD_{\ty{J},2}^{-^*}
\mbox{exp}(\Og_{\ty{J},2}^{*^{\st}}x+i(\Og_{\ty{J},2}^{*^{\st}})^2t).
\end{align*}
\end{subequations}

\subsubsection{Dynamics}

Now let us examine the dynamics of $|q_1|^2$ given by \eqref{q-solu-nls-(i)} and \eqref{q-solu-nls-(1)}.
Importing the expansion \eqref{k1-com} and
\begin{align}
\label{cd1-cmKdV}
c_1=c^{(1)}+i c^{(2)}, \quad d_1=d^{(1)}+i d^{(2)}
\end{align}
into solution \eqref{q-solu-nls-(i)} gives rise to $|q_1|^2_{(\varepsilon=i)}=\frac{q_{\ty{D},2}}{q_{\ty{D},1}}$ with
\begin{subequations}
\label{q12-pNLS}
\begin{align}
& q_{\ty{D},1}=(\alpha^2+\beta^2)^2(4\alpha^2e^{2\varsigma_1}-|\vartheta_1|^2e^{2\varsigma_2})^2, \\
& q_{\ty{D},2}=(\alpha^2+\beta^2)^2(16\alpha^4 e^{4\varsigma_1}+|\vartheta_1|^4e^{4\varsigma_2})+8\alpha^2
[2\alpha^2(\Theta_1^2+\Theta_2^2)+|\vartheta_1|^2(\alpha^4-\beta^4)] \nn \\
&\qquad \cdot e^{2(\varsigma_1+\varsigma_2)}-32 \alpha^4(\alpha^2+\beta^2)[\Theta_1\sin\iota_1+\Theta_2\cos\iota_1]
e^{3\varsigma_1+\varsigma_2}-8\alpha^2|\vartheta_1|^2[(2\alpha\beta \Theta_1 \nn \\
&\qquad +\Theta_2(\alpha^2-\beta^2))\cos\iota_1-(2\alpha\beta \Theta_2-\Theta_1(\alpha^2-\beta^2))\sin\iota_1]e^{\varsigma_1+3\varsigma_2},
\end{align}
\end{subequations}
where and whereafter $\varsigma_1=2\alpha x,~~\varsigma_2=4\alpha\beta t,~~\iota_1=2(\beta x+(\alpha^2-\beta^2)t)$ and
\begin{subequations}
\begin{align}
& \Theta_1=(c^{(1)}d^{(2)}+c^{(2)}d^{(1)})\alpha-(c^{(1)}d^{(1)}-c^{(2)}d^{(2)})\beta, \\
& \Theta_2=(c^{(1)}d^{(1)}-c^{(2)}d^{(2)})\alpha+(c^{(1)}d^{(2)}+c^{(2)}d^{(1)})\beta.
\end{align}
\end{subequations}
There is a quasi-periodic phenomenon because of the involvement
of sine function and cosine function in numerator.
It is obvious that $|q_1|^2_{(\varepsilon=i)}$ has singularity along with straight line
$x=2\beta t-\hbar$. The traveling speed is $2\beta$. It is worthy to notice
$|q_1|^2_{(\varepsilon=i)}(x,t=\frac{2\kappa \pi-\ln \beta}{4\beta^2}) \rightarrow 0$
as $x\rightarrow \frac{\kappa \pi}{\beta}$ ($\kappa \in \mathbb{Z}$) when $k_1=c_1=d_1$. In particular,
$|q_1|^2_{(\varepsilon=i)}(x,t=-\frac{\ln \beta}{4\beta^2})$ is an even function of $x$
and has limit $|q_1|^2_{(\varepsilon=i)}(x,t=-\frac{\ln \beta}{4\beta^2}) \rightarrow 0$
as $x\rightarrow 0$ (see Fig. 13(e)).
Besides, one can recognize that for fixed $t$, function $|q_1|^2_{(\varepsilon=i)}\rightarrow 1$ as
$x\rightarrow \infty$, which means the background level here is 1.
We depict this solution in Fig. 13.


\begin{center}
\begin{picture}(120,100)
\put(-150,-23){\resizebox{!}{3.5cm}{\includegraphics{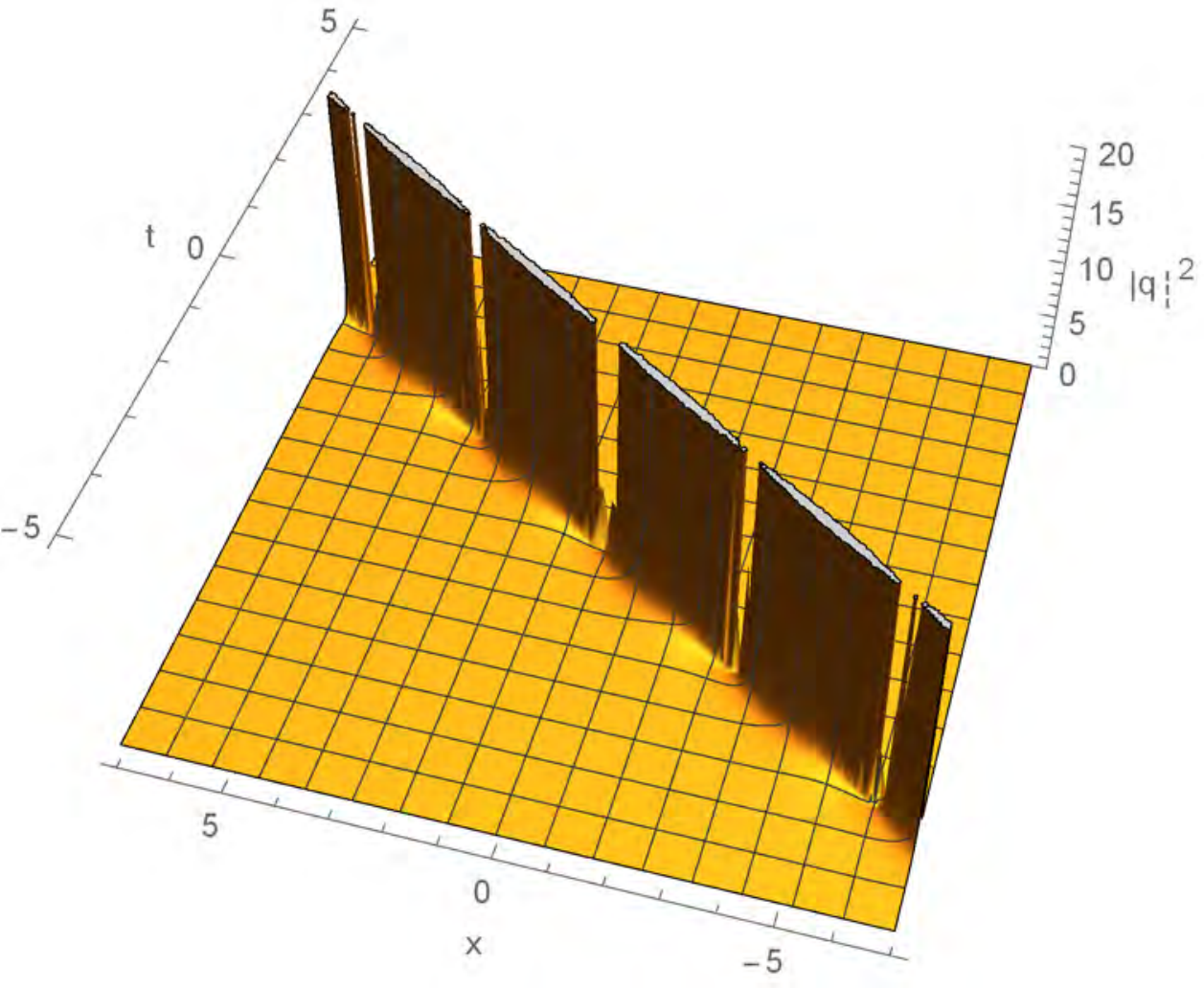}}}
\put(10,-23){\resizebox{!}{3cm}{\includegraphics{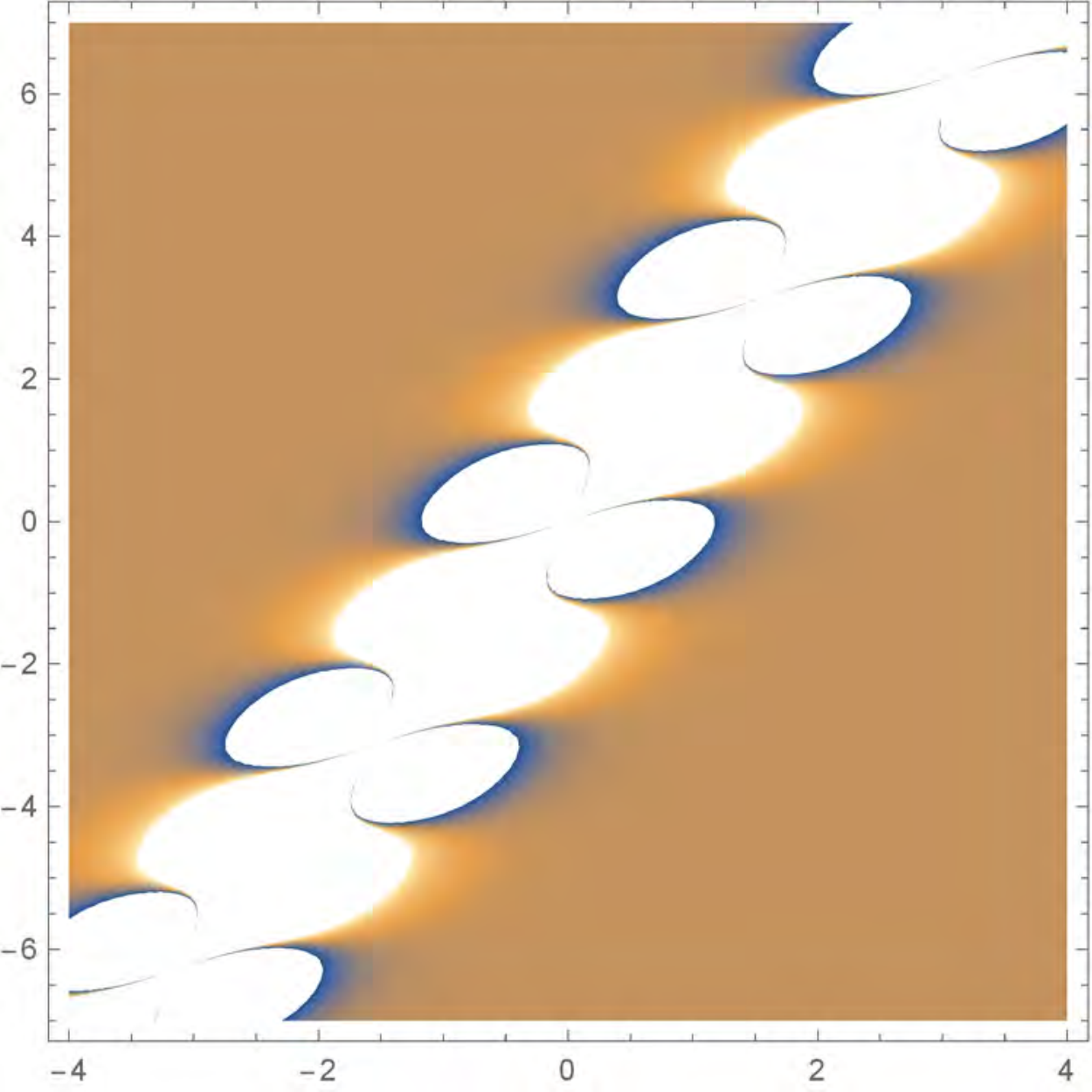}}}
\put(150,-23){\resizebox{!}{3cm}{\includegraphics{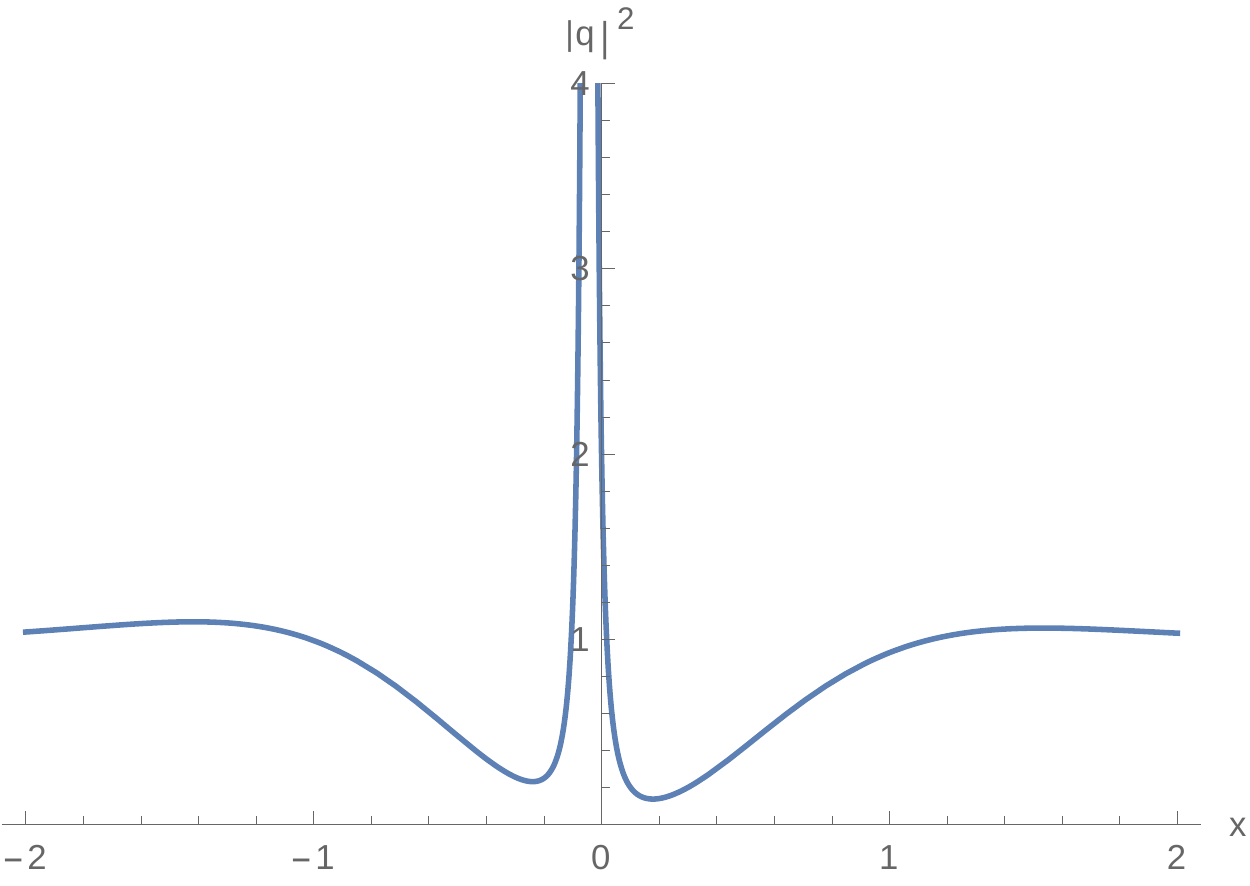}}}
\end{picture}
\end{center}
\vskip 20pt
\begin{center}
\begin{minipage}{16cm}{\footnotesize
\quad\qquad\qquad\qquad(a)\qquad\qquad\qquad\qquad \qquad\quad \qquad \qquad (b) \qquad\qquad \qquad \qquad\qquad\qquad\qquad \qquad (c)}
\end{minipage}
\end{center}
\vskip 10pt
\begin{center}
\begin{picture}(120,80)
\put(-150,-23){\resizebox{!}{3cm}{\includegraphics{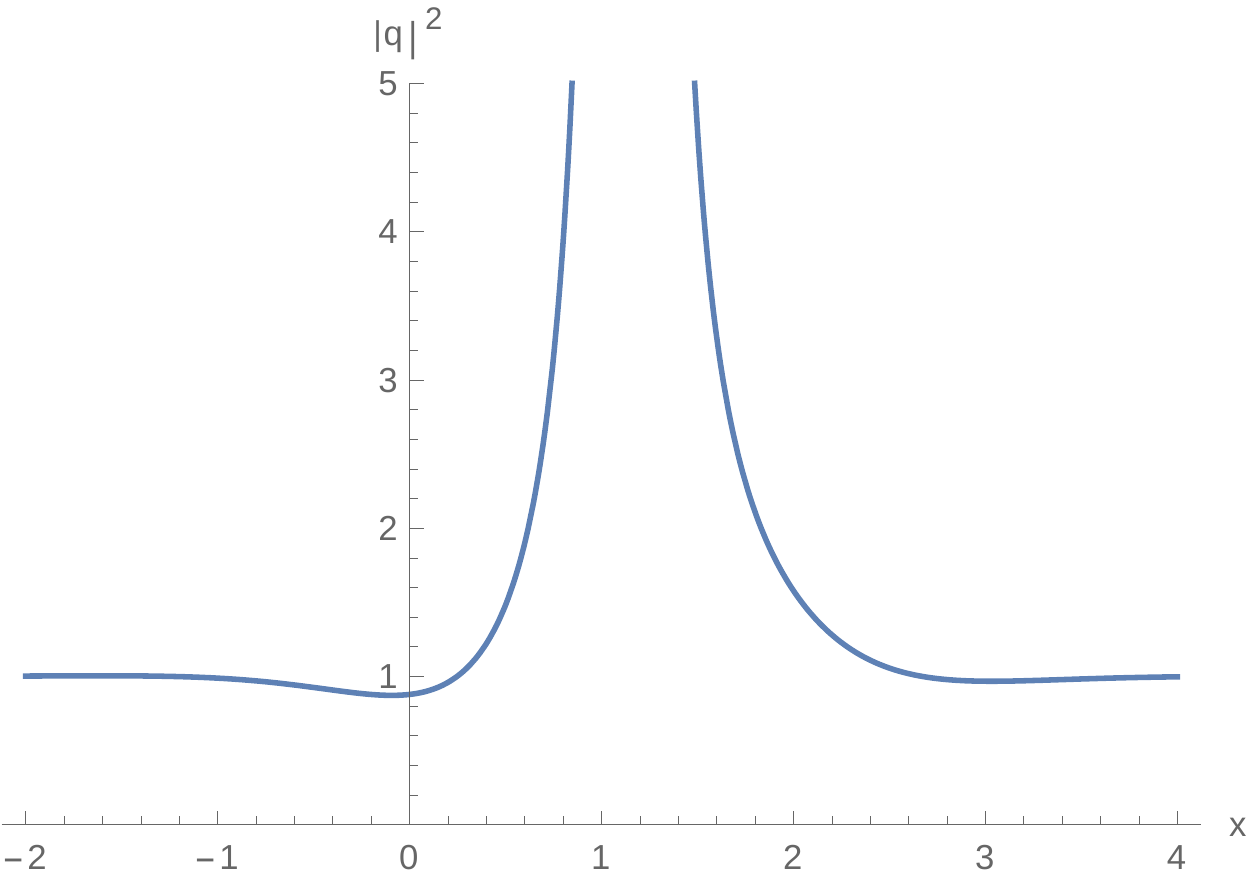}}}
\put(10,-23){\resizebox{!}{3cm}{\includegraphics{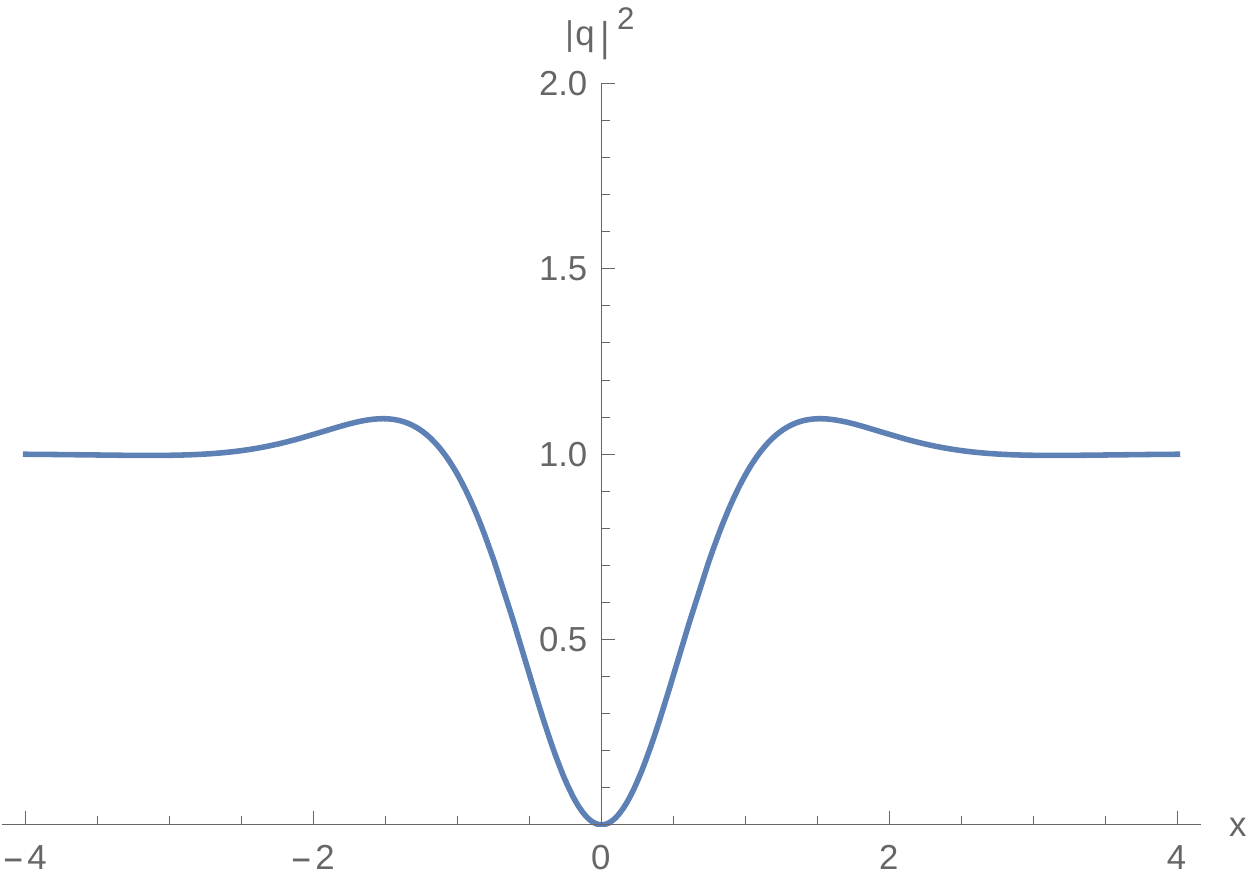}}}
\put(150,-23){\resizebox{!}{3cm}{\includegraphics{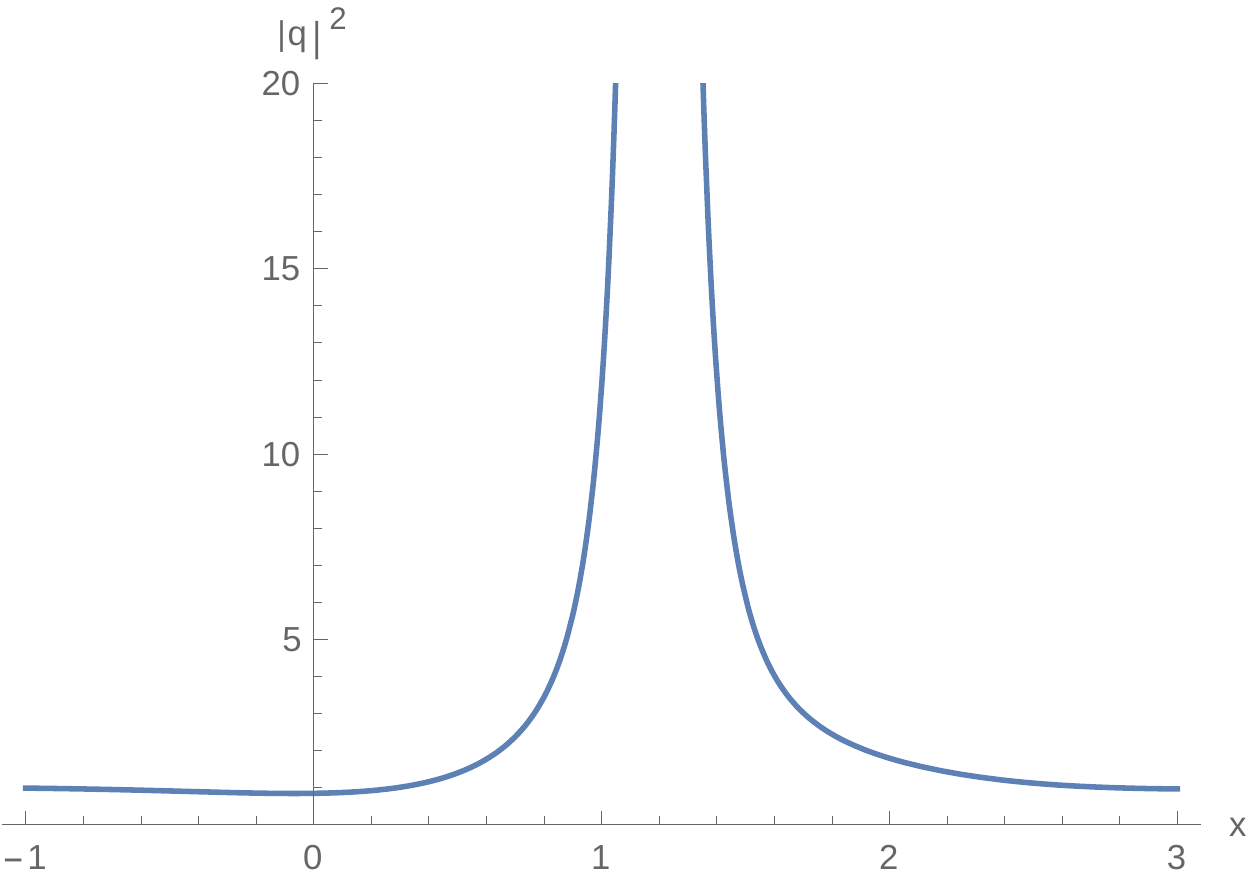}}}
\end{picture}
\end{center}
\vskip 20pt
\begin{center}
\begin{minipage}{16cm}{\footnotesize
\quad\qquad\qquad\qquad(d)\qquad\qquad\qquad\qquad \qquad\quad \qquad \qquad \qquad (e) \qquad \qquad \qquad\qquad\qquad\qquad \qquad (f)\\
{\bf Fig. 13} shape and motion of $|q_1|^2_{(\varepsilon=i)}$ for $c_1=d_1=1+i$: (a)
3D-plot for for $k_1=1+i$. (b) a contour plot of (a) with range
$x\in [-4, 4]$ and $t\in [-7,7]$. (c) and (d) 2D-plot for $k_1=1.1+i$ at $t=0$ and $t=0.6$, respectively. (e) and (f)
2D-plot for $k_1=1+i$ at $t=0$ and $t=0.6$, respectively.}
\end{minipage}
\end{center}


To proceed, we next consider \eqref{q-solu-nls-(1)}. Under the expansions \eqref{k1-com} and
\eqref{cd1-cmKdV} we have $|q_1|^2_{(\varepsilon=1)}=\frac{q_{\ty{D},4}}{q_{\ty{D},3}}$ with
\begin{subequations}
\label{q34-pNLS}
\begin{align}
& q_{\ty{D},4}=(\alpha^2+\beta^2)^2|\vartheta_1|^4e^{4\varsigma_2}+16\beta^4\big[(\Theta^2_1+\Theta^2_2)
e^{2(\varsigma_2-\varsigma_1)}+(\alpha^2+\beta^2)^2-2(\alpha^2+\beta^2)(\Theta_1 \sin\iota_1 \nn \\
& \quad +\Theta_2 \cos\iota_1)e^{\varsigma_2-\varsigma_1}\big]+8|\beta \vartheta_1|^2 \big[(\beta^4-\alpha^4)
\cos\iota_2-2\alpha\beta(\alpha^2+\beta^2)\sin\iota_2+\big((2\alpha\beta \Theta_1 \nn \\
& \quad +\Theta_2(\alpha^2-\beta^2))\cos(\iota_2-\iota_1)+(2\alpha\beta \Theta_2-\Theta_1(\alpha^2-\beta^2))
\sin(\iota_2-\iota_1)\big)e^{\varsigma_2-\varsigma_1}\big]e^{2\varsigma_2}, \\
& q_{\ty{D},3}=(\alpha^2+\beta^2)^2(16\beta^4+|\vartheta_1|^4e^{4\varsigma_2}-8|\beta \vartheta_1|^2e^{2\varsigma_2}\cos\iota_2),
\end{align}
\end{subequations}
where $\iota_2=4\beta x$. Solution $|q_1|^2_{(\varepsilon=1)}$ has singularities along points
\begin{align}
(x,t)=\left(\frac{\kappa \pi}{2\beta},\frac{1}{4\alpha\beta}\ln \frac{2\beta}{|\vartheta_1|}\right), \quad \kappa\in \mathbb{Z}.
\end{align}
and possesses quasi-periodic phenomenon.
We depict this solution in Fig. 14.


\begin{center}
\begin{picture}(120,90)
\put(-150,-23){\resizebox{!}{3.5cm}{\includegraphics{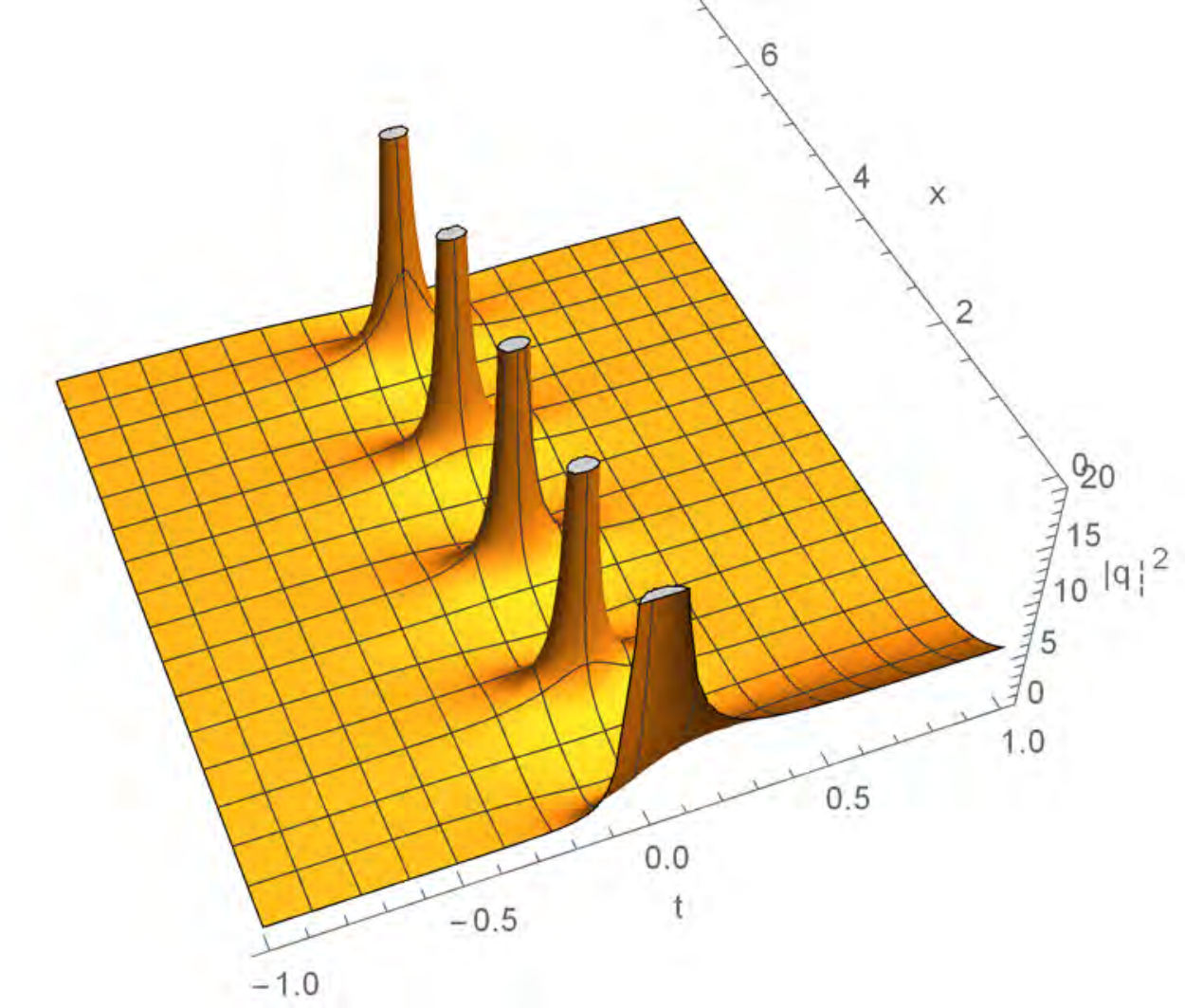}}}
\put(10,-23){\resizebox{!}{3cm}{\includegraphics{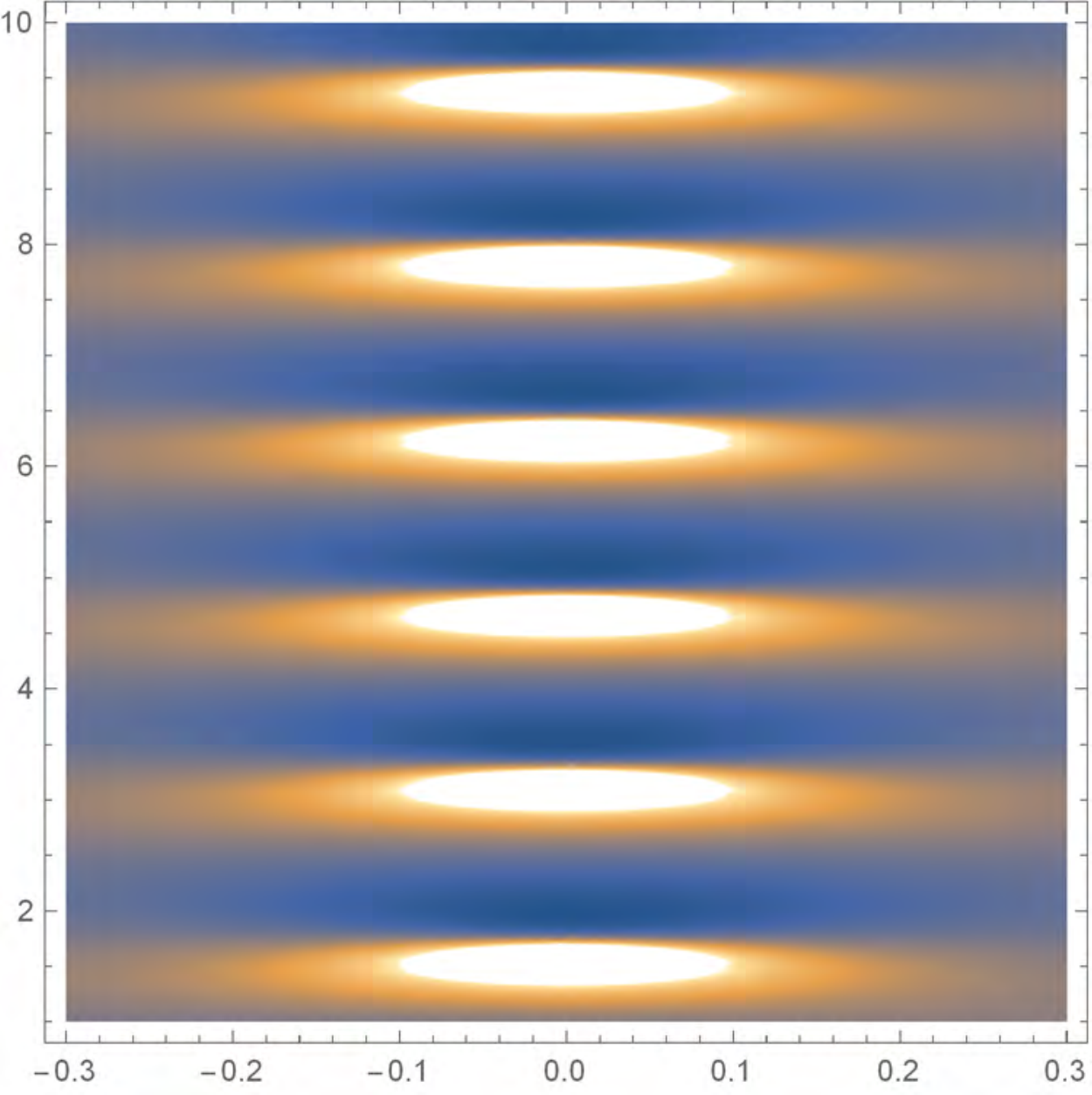}}}
\put(150,-23){\resizebox{!}{3cm}{\includegraphics{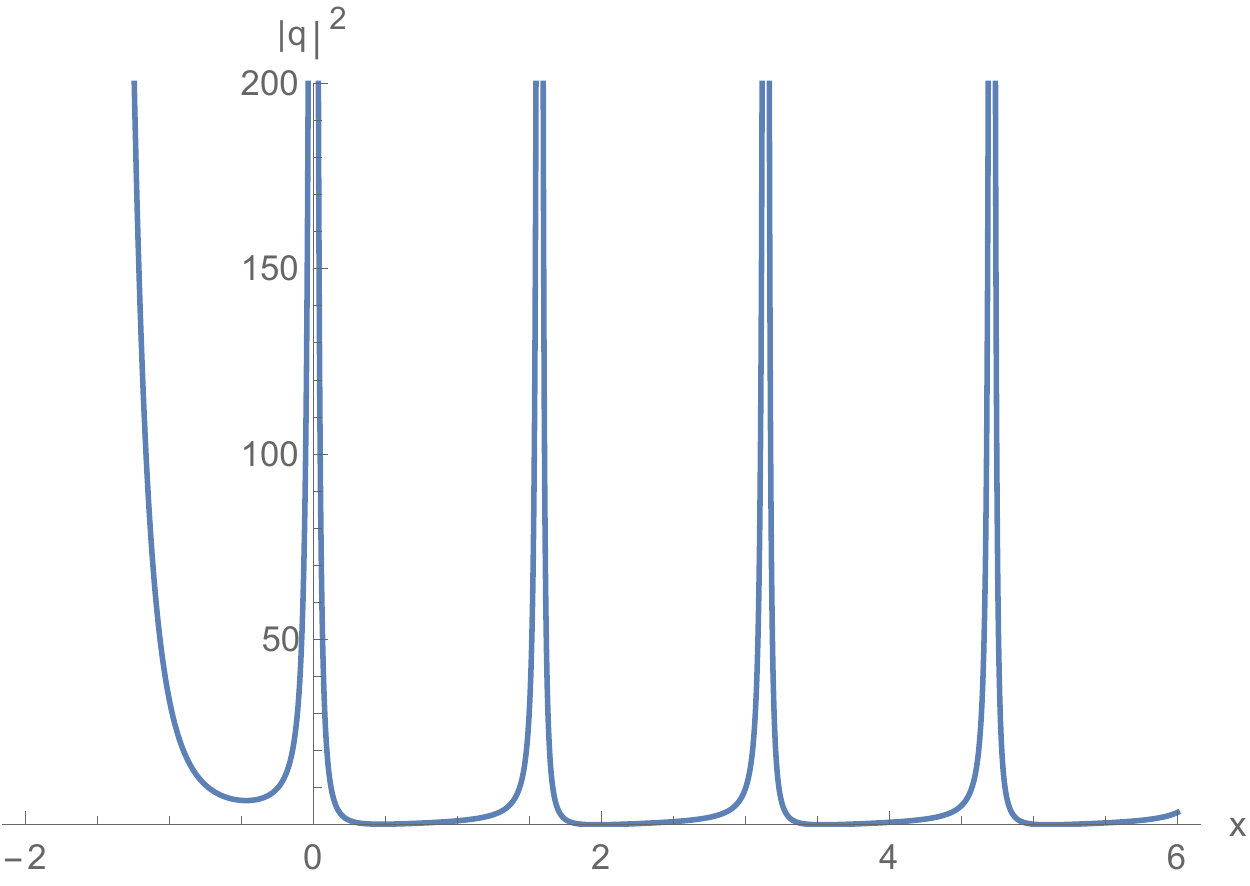}}}
\end{picture}
\end{center}
\vskip 20pt
\begin{center}
\begin{minipage}{16cm}{\footnotesize
\quad\qquad\qquad\qquad(a)\qquad\qquad\qquad\qquad \qquad\quad \qquad \qquad (b) \qquad\qquad \qquad \qquad\qquad\qquad\qquad \qquad (c)\\
{\bf Fig. 14} (a) shape and motion of $|q_1|^2_{(\varepsilon=1)}$ for $k_1=c_1=d_1=1+i$.
(b) a contour plot of (a) with range $x\in [1, 10]$ and $t\in [-0.3,0.3]$. (c) 2D-plot at $t=0$.}
\end{minipage}
\end{center}


\subsection{Local and nonlocal potential cmKdV equation}

We finally consider the construction of Cauchy matrix solutions to the local and nonlocal potential cmKdV equation
\eqref{n-pmKdV-1}. To do so, we take $n=3$ in \eqref{DES-C} and summarize the result in the following Theorem.
\begin{Thm}
\label{so-nlcpmKdV}
The function
\begin{align}\label{nlcpmKdV-solu}
q=\bs^{\st}_2(\bI-\bM_2\bM_1)^{-1}\Og_2^{-1}\br_2-\bs^{\st}_2\bM_2(\bI-\bM_1\bM_2)^{-1}\Og_1^{-1}\br_1-1,
\end{align}
solves the local and nonlocal potential cmKdV equation \eqref{n-pmKdV-1},
where the entities satisfy \eqref{DES-C} $(n=3)$ and simultaneously obey the constraints
\begin{align}\label{nlcpmKdV-M1M12}
\br_1=\varepsilon \bT \br^*_2(\sigma x,\sigma t),\quad \bs_1=\varepsilon \bT^{\st^{-1}}\bs_2^*(\sigma x,\sigma t),\quad
\bM_1=\bT \bM^*_2(\sigma x,\sigma t)\bT^*,
\end{align}
in which $\bT\in \mathbb{C}_{N\times N}$ is a constant matrix satisfying
\begin{align}\label{nlcpmKdV-at-eq}
\Og_1\bT+\sigma\bT\Og^*_2=0,\quad \bC^+_1=\varepsilon\bT\bC_2^{+^*},\quad
\bD^+_1=\varepsilon\bT^{\st^{-1}}\bD_2^{+^*},\quad \varepsilon^2=\varepsilon^{*^2}=-\sigma.
\end{align}
\end{Thm}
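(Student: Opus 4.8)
The plan is to prove Theorem~\ref{so-nlcpmKdV} by direct verification, following the same route as the proofs of Theorems~\ref{so-nlcmKdV} and~\ref{so-nlpNLS}. Since Theorem~\ref{Thm-2} already guarantees that the pair $(q,r)$ built from \eqref{2th-pAKNS-so} (with $\bK_j$ replaced by its canonical form $\Og_j$ and $n=3$) solves the pAKNS(3) equation \eqref{3th-pAKNS}, it suffices to show that the constraints \eqref{nlcpmKdV-M1M12}--\eqref{nlcpmKdV-at-eq} force the reduction relation $r(x,t)=q^*(\sigma x,\sigma t)$; the local and nonlocal potential cmKdV equation \eqref{n-pmKdV-1} then follows by substituting this into \eqref{3th-pAKNS} via the reduction \eqref{3pAKNS-red}.

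First I would record the elementary consequences of \eqref{nlcpmKdV-at-eq}: the relation $\Og_1\bT+\sigma\bT\Og_2^*=0$ gives $\Og_1=-\sigma\bT\Og_2^*\bT^{-1}$, hence $\Og_1^3=-\sigma^3\bT\Og_2^{*3}\bT^{-1}=-\sigma\bT\Og_2^{*3}\bT^{-1}$ since $\sigma^3=\sigma$, and by conjugation $\Og_1^*\bT^*+\sigma\bT^*\Og_2=0$. Inserting these into the closed forms \eqref{rsj} with $n=3$ and using $\bC_1^+=\varepsilon\bT\bC_2^{+^*}$, exactly as in the proof of Theorem~\ref{so-nlcmKdV}, yields $\br_1(x,t)=\varepsilon\bT\br_2^*(\sigma x,\sigma t)$, and from $\bD_1^+=\varepsilon\bT^{\st^{-1}}\bD_2^{+^*}$ one gets $\bs_1(x,t)=\varepsilon\bT^{\st^{-1}}\bs_2^*(\sigma x,\sigma t)$. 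Next, conjugating the second Sylvester equation in \eqref{DES-M12-C} at $(\sigma x,\sigma t)$ and substituting $\Og_1^*=-\sigma\bT^*\Og_2\bT^{*-1}$, I would check that $\bX:=\bT\bM_2^*(\sigma x,\sigma t)\bT^*$ satisfies $\Og_1\bX-\bX\Og_2=\br_1\bs_2^{\st}$; this is where the identity $-\sigma\varepsilon^*\varepsilon^{-1}=1$, equivalently $\varepsilon^*=-\sigma\varepsilon$ (which is exactly what $\varepsilon^2=\varepsilon^{*^2}=-\sigma$ enforces, in both the $\sigma=1,\ \varepsilon=\pm i$ case and the $\sigma=-1,\ \varepsilon=\pm1$ case), is needed. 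Since $\mathcal{E}(\Og_1)\cap\mathcal{E}(\Og_2)=\varnothing$, the first Sylvester equation has a unique solution, so $\bM_1=\bX=\bT\bM_2^*(\sigma x,\sigma t)\bT^*$, which is \eqref{nlcpmKdV-M1M12}.

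With \eqref{nlcpmKdV-M1M12} established I would derive the companion identities $\bM_2(x,t)=\bT^{*-1}\bM_1^*(\sigma x,\sigma t)\bT^{-1}$, $\br_2(x,t)=\varepsilon^{*-1}\bT^{*-1}\br_1^*(\sigma x,\sigma t)$ and $\bs_2^{\st}(x,t)=\varepsilon^{*-1}\bs_1^{*\st}(\sigma x,\sigma t)\bT^*$, all obtained by evaluating the constraints at $(\sigma x,\sigma t)$ and conjugating (using that $\bT$ is constant and $\sigma^2=1$), together with $\Og_1^{-1}=-\sigma\bT\Og_2^{*-1}\bT^{-1}$ and $\Og_2^{-1}=-\sigma\bT^{*-1}\Og_1^{*-1}\bT^*$. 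Then I would substitute everything into $r=\bs_1^{\st}(\bI-\bM_1\bM_2)^{-1}\Og_1^{-1}\br_1-\bs_1^{\st}\bM_1(\bI-\bM_2\bM_1)^{-1}\Og_2^{-1}\br_2-1$, using $\bM_1\bM_2=\bT\bM_2^*(\sigma x,\sigma t)\bM_1^*(\sigma x,\sigma t)\bT^{-1}$ and $\bM_2\bM_1=\bT^{*-1}\bM_1^*(\sigma x,\sigma t)\bM_2^*(\sigma x,\sigma t)\bT^*$ to pull the factors $\bT^{\pm1},\bT^{*\pm1}$ through the inverses. The first term then collapses, via $-\sigma\varepsilon^2=1$, to the complex conjugate of the first term of $q$ evaluated at $(\sigma x,\sigma t)$; the second term collapses, via $\sigma\varepsilon/\varepsilon^*=-1$, to the conjugate of the second term of $q$ at $(\sigma x,\sigma t)$; and $-1=(-1)^*$. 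Hence $r(x,t)=q^*(\sigma x,\sigma t)$, which is precisely \eqref{3pAKNS-red}, completing the proof.

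I expect the main obstacle to be organizational rather than conceptual. Unlike the scalar $u$ in the cmKdV reduction, the potential $q$ here is a \emph{two-term} object, so one must track the transpose-inverse factor $\bT^{\st^{-1}}$ attached to $\bs_1$ separately from the factor $\bT^*$ attached to $\bM_1$, and make sure that in each of the two terms of $r$ the matrices $\bT,\bT^*,\bT^{-1},\bT^{*-1}$ cancel in exactly the right order while the scalar prefactors reduce to $1$ through the single sign condition $\varepsilon^2=\varepsilon^{*^2}=-\sigma$. Once the bookkeeping is set up, the computation is routine and mirrors the $n=2$ case of Theorem~\ref{so-nlpNLS}, the only changes being $\Og_j^2\mapsto\Og_j^3$ and the appearance of the time argument $\sigma t$ (in place of $t$) in the reduction.
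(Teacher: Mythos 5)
Your proposal is correct and follows exactly the route the paper intends: the paper omits the verification of Theorem \ref{so-nlcpmKdV} precisely because it is the same direct computation as for Theorems \ref{so-nlcmKdV} and \ref{so-nlpNLS}, namely deriving $\br_1,\bs_1,\bM_1$ from \eqref{nlcpmKdV-at-eq} via the exponential forms and Sylvester uniqueness, then substituting into $r$ to recover $r(x,t)=q^*(\sigma x,\sigma t)$. Your sign bookkeeping ($-\sigma\varepsilon^2=1$ for the first term, $\sigma\varepsilon/\varepsilon^*=-1$ for the second, and $\varepsilon^*=-\sigma\varepsilon$ in the Sylvester step) checks out in both the $\sigma=1$ and $\sigma=-1$ cases.
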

We skip the verification here and proceed now with deriving Cauchy matrix-type soliton solutions and
Jordan block solutions for the local and nonlocal potential cmKdV equation \eqref{n-pmKdV-1}.

\subsubsection{Exact solutions}

Applying \eqref{nlcpmKdV-M1M12} and \eqref{nlcpmKdV-at-eq} to \eqref{nlcpmKdV-solu}, we can rewrite solution \eqref{nlcpmKdV-solu} as
\begin{align}\label{nlcpmKdV-solu-1}
& q=\bs^{\st}_2(\bI-\bM_2\bM^*_2(\sigma x,\sigma t))^{-1}\Og_2^{-1}\br_2 \nn \\
& \qquad -\varepsilon^*\bs^{\st}_2\bM_2(\bI-\bM^*_2(\sigma x,\sigma t)\bM_2)^{-1}\Og_2^{*^{-1}}\br^*_2(\sigma x,\sigma t)-1,
\end{align}
where we have absorbed $\bT$ into $\bM_2$ and the components are given by \eqref{rsj-mKdV} and \eqref{DES-M12-C-mKdV-nt}. Analogously, we just
present the solutions correspond to $\varepsilon=i$ and $\varepsilon=1$.

Soliton solutions in these two cases read
\begin{subequations}
\begin{align}\
\label{nlcpmKdV-solu-(i)}
q_{3,(\varepsilon=i)}= & \bs^{\st}_2(\bI-\bM^{(1)}_2\bM^{(1)^*}_2)^{-1}\Og_{\ty{D},2}^{-1}\br_2
+i\bs^{\st}_2\bM^{(1)}_2(\bI-\bM^{(1)^*}_2\bM^{(1)}_2)^{-1}\Og_{\ty{D},2}^{*^{-1}}\br^*_2-1, \\
\label{nlcpmKdV-solu-(1)}
q_{3,(\varepsilon=1)}= &  \bs^{\st}_2(\bI-\bM^{(3)}_2\bM^{(3)^*}_2(-x,-t))^{-1}\Og_{\ty{D},2}^{-1}\br_2 \nn \\
& -\bs^{\st}_2\bM^{(3)}_2(\bI-\bM^{(3)^*}_2(-x,-t)\bM^{(3)}_2)^{-1}\Og_{\ty{D},2}^{*^{-1}}\br^*_2(-x,-t)-1,
\end{align}
\end{subequations}
where $\br_{2},~\bs_{2},~\bM^{(1)}_{2}$ and $\bM^{(3)}_{2}$ are
given by \eqref{rsM-mKdV-D}. The corresponding 1-soliton solutions are expressed by
\begin{subequations}
\label{q-solu-mKdV-(i1)}
\begin{align}
\label{q-solu-mKdV-(i)}
q_{3,(\varepsilon=i)}=& \frac{k_1^*(\vartheta_1e^{\xi_1}-k_1)-|\vartheta_1|^2k_1^2e^{\xi_1+\xi^*_1+\theta_{11}}}{
|k_1|^2(1-|\vartheta_1|^2e^{\xi_1+\xi^*_1+\theta_{11}})}, \\
\label{q-solu-cpmKdV-(1)}
q_{3,(\varepsilon=1)}=& \frac{k_1^*(\vartheta_1e^{\xi_1}-k_1)-|\vartheta_1|^2k_1^2e^{\xi_1-\xi^*_1+\epsilon_{11}}}{
|k_1|^2(1+|\vartheta_1|^2e^{\xi_1-\xi^*_1+\epsilon_{11}})}.
\end{align}
\end{subequations}

The Jordan block solutions are
\begin{subequations}
\begin{align*}
q_{4,(\varepsilon=i)} = & \bs^{\st}_2(\bI-\wh{\bM}_2\wh{\bM}^*_2)^{-1}\Og^{-1}_{\ty{J},2}\br_2
+i\bs^{\st}_2\wh{\bM}_2(\bI-\wh{\bM}^*_2\wh{\bM}_2)^{-1}\Og_{\ty{J},2}^{*^{-1}}\br^*_2-1, \\
q_{4,(\varepsilon=1)}= & \bs^{\st}_2(\bI-\wt{\bM}_2\wt{\bM}^*_2(-x,-t))^{-1}\Og^{-1}_{\ty{J},2}\br_2 \\
& -\bs^{\st}_2\wt{\bM}_2(\bI-\wt{\bM}^*_2(-x,-t)\wt{\bM}_2)^{-1}\Og_{\ty{J},2}^{*^{-1}}\br^*_2(-x,-t)-1,
\end{align*}
\end{subequations}
where $\br_2,~\bs_2,~\wh{\bM}_2$ and $\wt{\bM}_2$ are given by \eqref{rsj-mKdV-OM}.

\subsubsection{Dynamics}

We now study the dynamics of soliton solutions \eqref{q-solu-mKdV-(i1)}.
Taking \eqref{k1-com} and \eqref{cd1-cmKdV} into \eqref{q-solu-mKdV-(i)} we get
$ |q_3|^2_{(\varepsilon=i)}=\frac{q'_{\ty{D},2}}{q'_{\ty{D},1}}$ with
\begin{subequations}
\begin{align*}
& q'_{\ty{D},2}=16(\alpha^4(\Theta_1^2+\Theta_2^2)-|\alpha\beta \vartheta_1|^2(\alpha^2+\beta^2))e^{2(\mu_1+\mu_2)} \nn \\
&\quad\quad +(\alpha^2+\beta^2)^2(4\alpha^2e^{2\mu_1}+|\vartheta_1|^2e^{2\mu_2})^2 -32\alpha^4(\alpha^2+\beta^2)(\Theta_1\sin\nu_1+
\Theta_2\cos\nu_1)e^{3\mu_1+\mu_2}\nn \\
&\quad\quad
-8|\alpha \vartheta_1|^2\big((\Theta_1(\alpha^2-\beta^2)-2\alpha\beta \Theta_2)\sin\nu_1+(2\alpha\beta \Theta_1+(\alpha^2-\beta^2)\Theta_2)\cos\nu_1\big)e^{\mu_1+3\mu_2}, \\
& q'_{\ty{D},1}=((\alpha^2+\beta^2)(4\alpha^2e^{2\mu_1}-|\vartheta_1|^2e^{2\mu_2}))^2,
\end{align*}
\end{subequations}
where $\mu_1=2\alpha(x+3\beta^2t)$, $\mu_2=2\alpha^3t$ and $\nu_1=2\beta(x-(3\alpha^2-\beta^2)t)$.
There is a quasi-periodic phenomenon and $|q_3|^2_{\varepsilon=i}$ has singularity along with straight line
$x=(\alpha^2-3\beta^2) t-\hbar$. The speed is $\alpha^2-3\beta^2$.
We depict this solution in Fig. 15. When $k_1=c_1=d_1=1+i$,
$|q_3|^2_{(\varepsilon=i)}(x,t=\frac{\kappa\pi}{2})\rightarrow 0$
as $x\rightarrow -\kappa\pi$. Specially, $|q_3|^2_{(\varepsilon=i)}(x,t=0)$ is an even function of $x$ (see Fig. 15(f)).
The background level of here is 1.


\begin{center}
\begin{picture}(120,100)
\put(-150,-23){\resizebox{!}{3.5cm}{\includegraphics{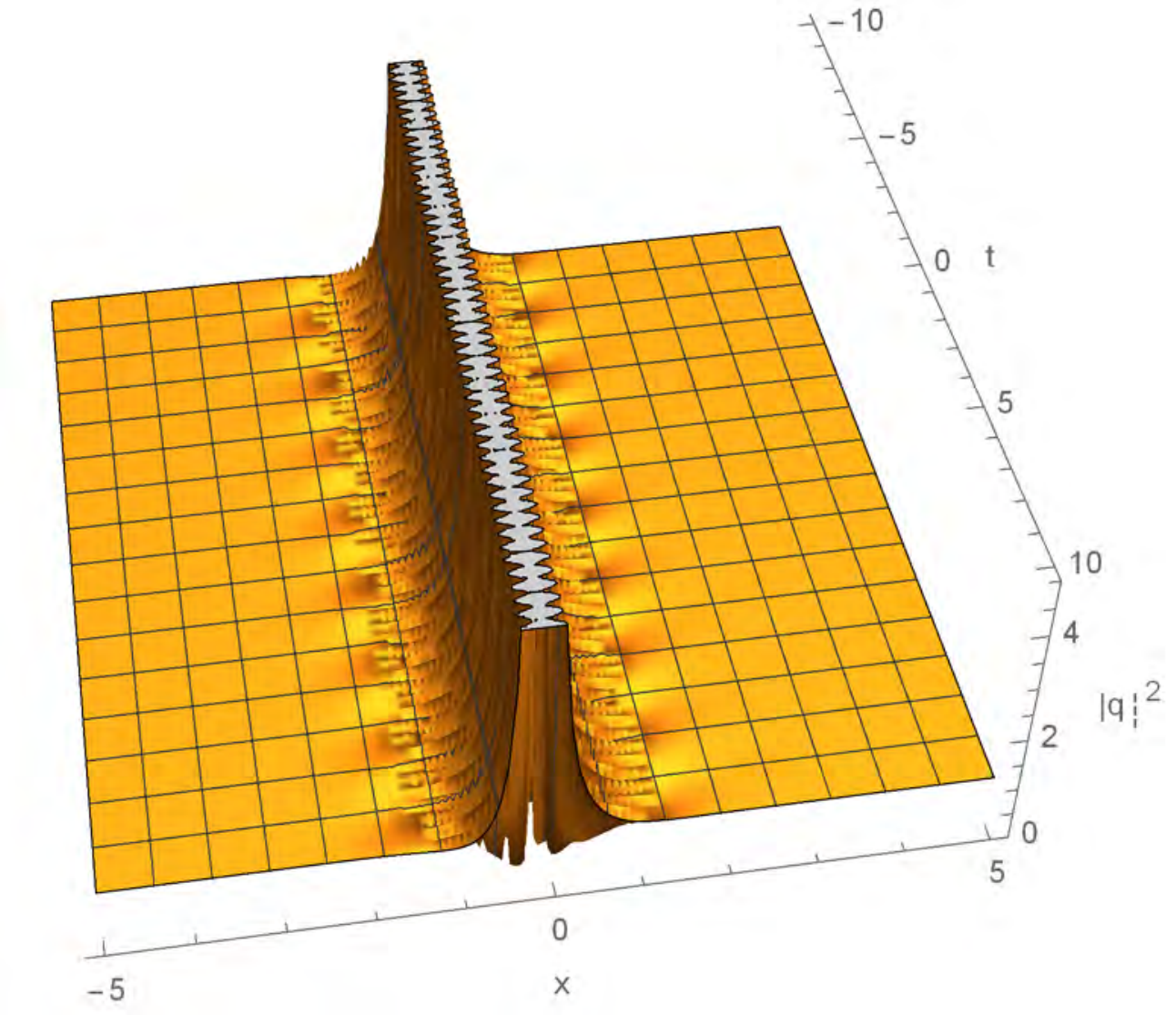}}}
\put(10,-23){\resizebox{!}{3cm}{\includegraphics{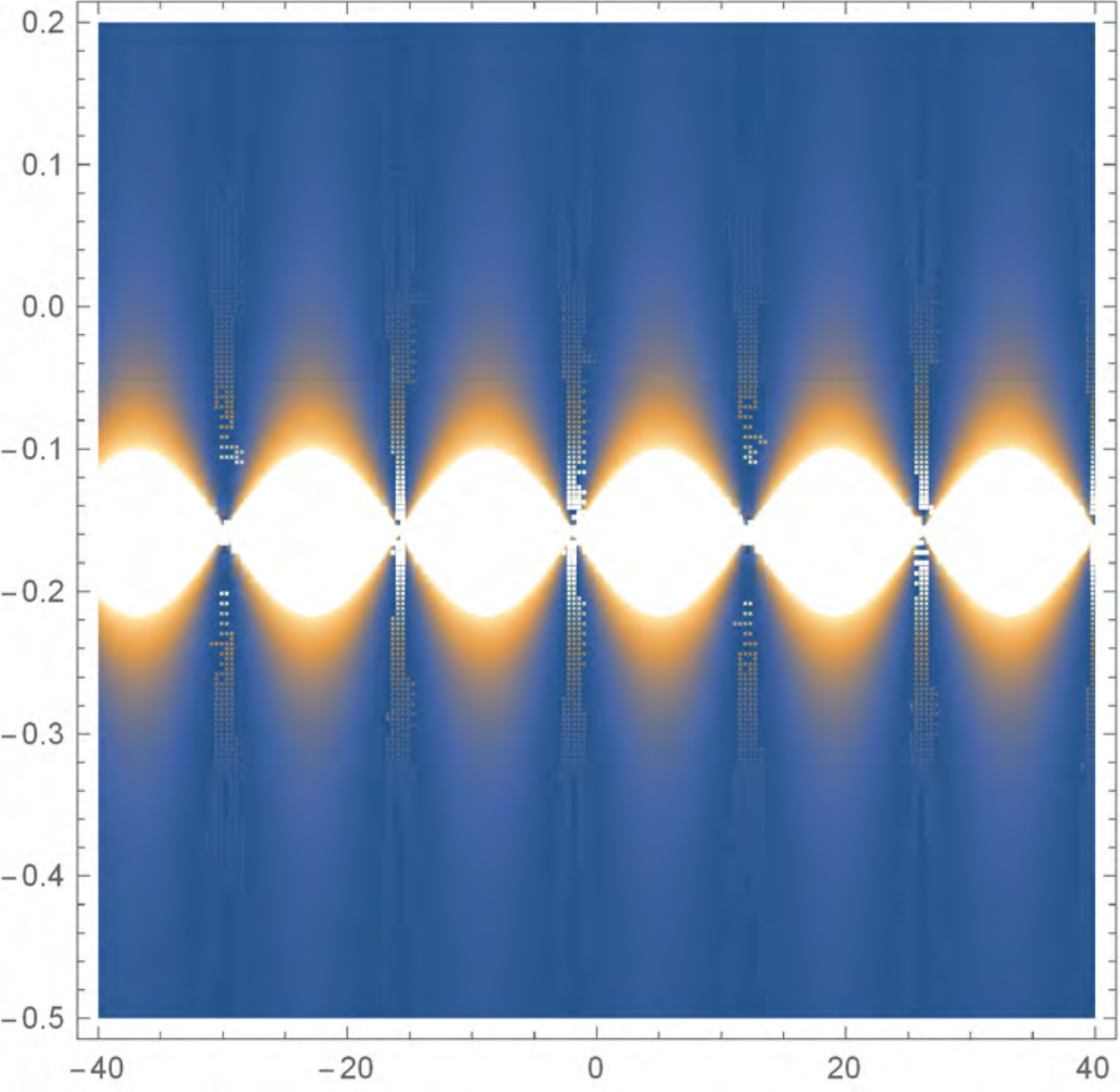}}}
\put(150,-23){\resizebox{!}{3cm}{\includegraphics{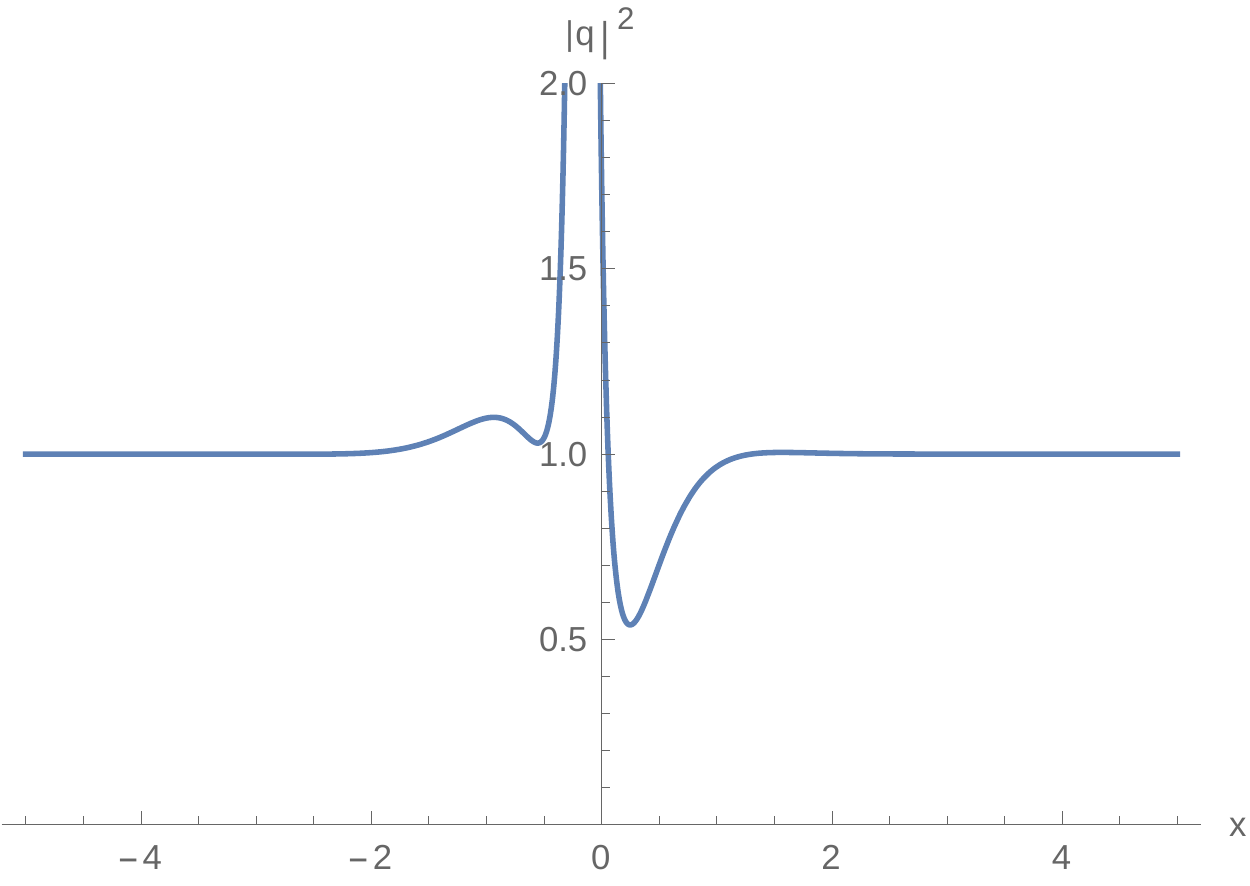}}}
\end{picture}
\end{center}
\vskip 20pt
\begin{center}
\begin{minipage}{16cm}{\footnotesize
\quad\qquad\qquad\qquad(a)\qquad\qquad\qquad\qquad \qquad\quad \qquad \qquad (b) \quad\qquad \qquad \qquad\qquad\qquad\qquad \qquad (c)}
\end{minipage}
\end{center}
\vskip 10pt
\begin{center}
\begin{picture}(120,80)
\put(-150,-23){\resizebox{!}{3.5cm}{\includegraphics{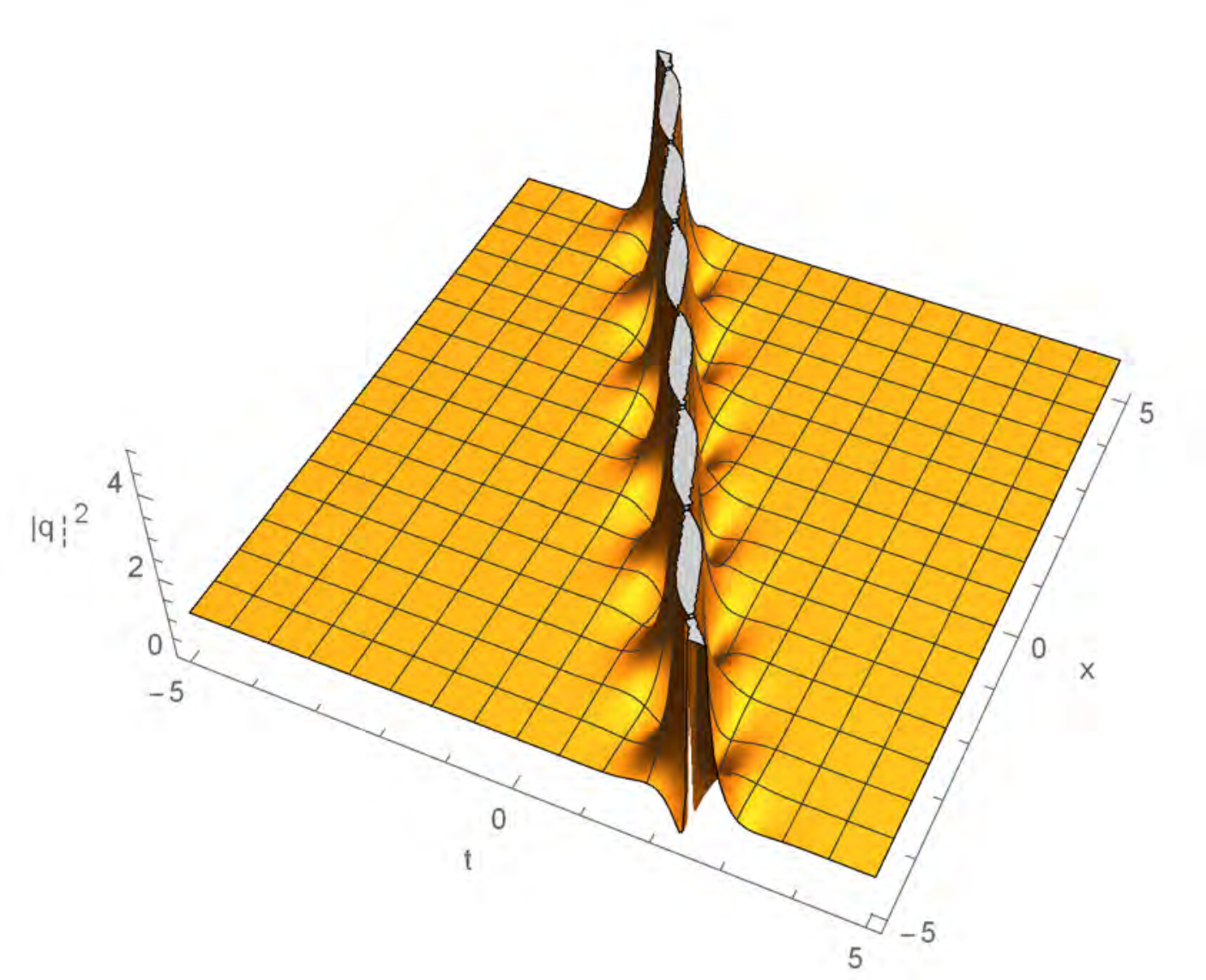}}}
\put(10,-23){\resizebox{!}{3cm}{\includegraphics{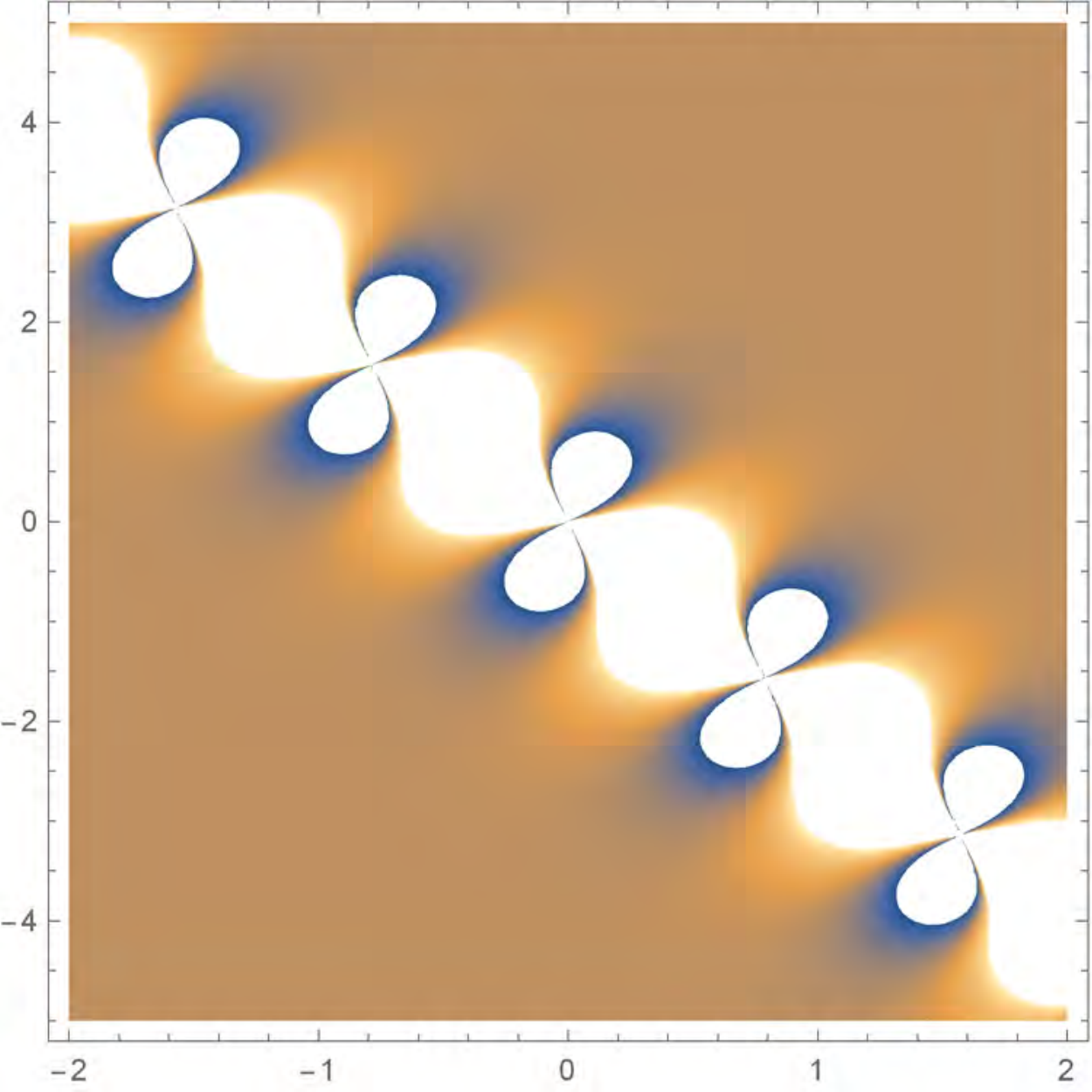}}}
\put(150,-23){\resizebox{!}{3cm}{\includegraphics{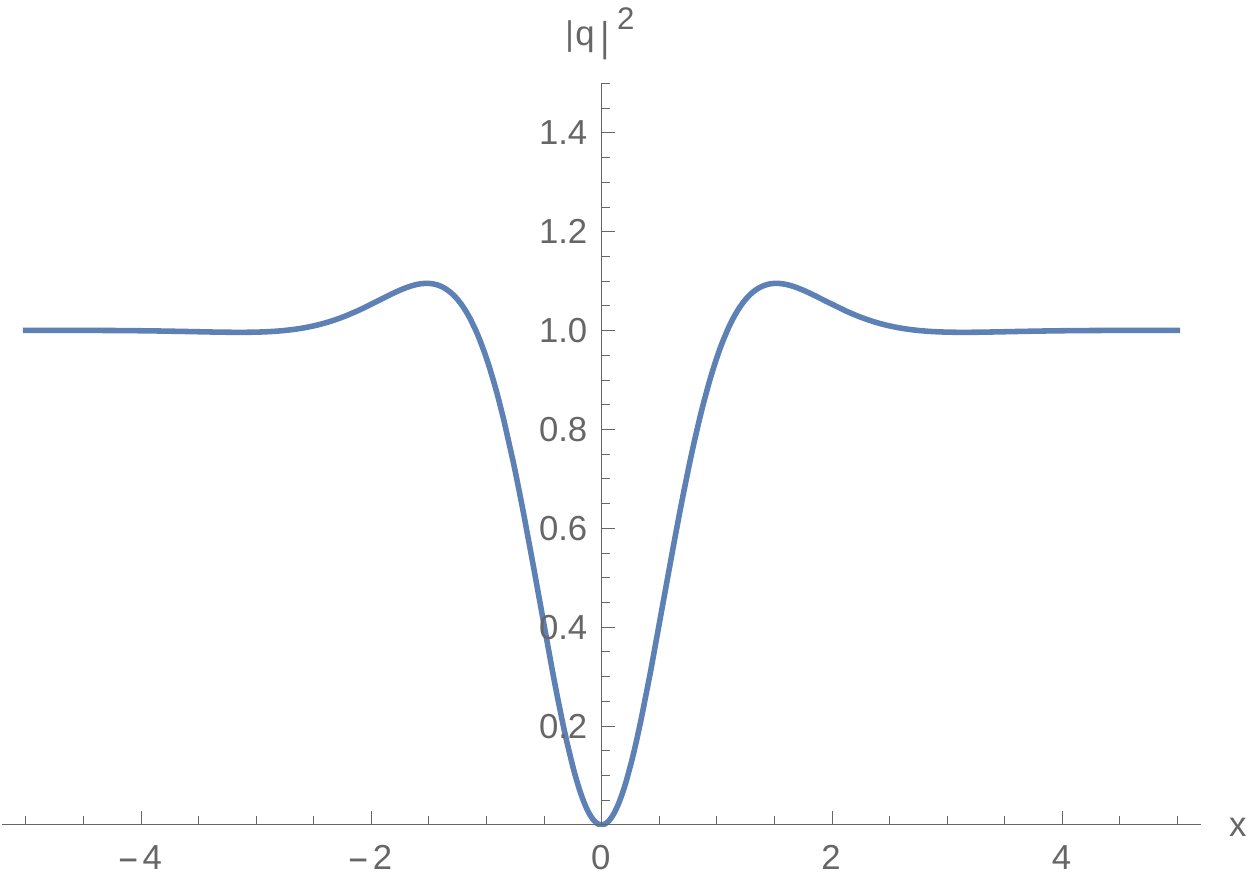}}}
\end{picture}
\end{center}
\vskip 20pt
\begin{center}
\begin{minipage}{16cm}{\footnotesize
\quad\qquad\qquad\qquad(d)\qquad\qquad\qquad\qquad \qquad\quad \qquad \qquad (e) \qquad\qquad \qquad \qquad\qquad\qquad\qquad \quad (f)\\
{\bf Fig. 15} Shape and motion of $|q_3|^2_{(\varepsilon=i)}$ for $c_1=d_1=1+i$:
(a) wave for $k_1=\sqrt{3}+i$.
(b) a contour plot of (a) with range $x\in [-0.5, 0.2]$ and $t\in [-40,40]$.
(c) 2D-plot of (a) at $t=0$. (d) wave for $k_1=1+i$.
(e) a contour plot of (d) with range $x\in [-5, 5]$ and $t\in [-2,2]$.
(f) 2D-plot of (d) at $t=0$.}
\end{minipage}
\end{center}


The solution \eqref{q-solu-cpmKdV-(1)} yields
$|q_3|^2_{(\varepsilon=1)}=\frac{q'_{\ty{D},4}}{q'_{\ty{D},3}}$, where
\begin{subequations}
\label{q34}
\begin{align}
& q'_{\ty{D},4}=16\beta^4(\Theta_1^2+\Theta_2^2)e^{2\mu_3}+8\beta^2
\big[\big(|\vartheta_1|^2\big((\alpha^2-\beta^2)\Theta_2+2\alpha\beta \Theta_1\big)-
4\beta^2(\alpha^2+\beta^2)\Theta_2\big)\cos\nu_1 \nn \\
& \quad\quad +\big(|\vartheta_1|^2\big(2\alpha\beta \Theta_2+(\beta^2-\alpha^2)\Theta_1\big)
-4\beta^2 (\alpha^2+\beta^2)\Theta_1\big)\sin\nu_1\big]e^{\mu_2+\mu_3+\mu_4} +
(\alpha^2+\beta^2)\nn \\
& \quad\quad \cdot \big[(\alpha^2+\beta^2)(|\vartheta_1|^4+16\beta^4)+
8|\beta \vartheta_1|^2\big((\beta^2-\alpha^2)\cos 2\nu_1-2\alpha\beta\sin2\nu_1\big)\big]e^{2(\mu_2+\mu_4)}, \\
& q'_{\ty{D},3}=(\alpha^2+\beta^2)^2(|\vartheta_1|^4+16\beta^4-8|\beta\vartheta_1|^2\cos2\nu_1)e^{2(\mu_2+\mu_4)},
\end{align}
\end{subequations}
where $\mu_3=4\alpha(\alpha^2-3\beta^2) t$ and $\mu_4=2\alpha(x-3\beta^2t)$.
Solution $|q_3|^2_{(\varepsilon=1)}$ has singularities along points
\begin{align}
x=(3\alpha^2-\beta^2)t+\frac{\kappa \pi}{2\beta}, \quad \kappa\in \mathbb{Z}.
\end{align}
Similarly, it has quasi-periodic phenomenon. We depict this solution in Fig. 16.


\begin{center}
\begin{picture}(120,100)
\put(-150,-23){\resizebox{!}{3.5cm}{\includegraphics{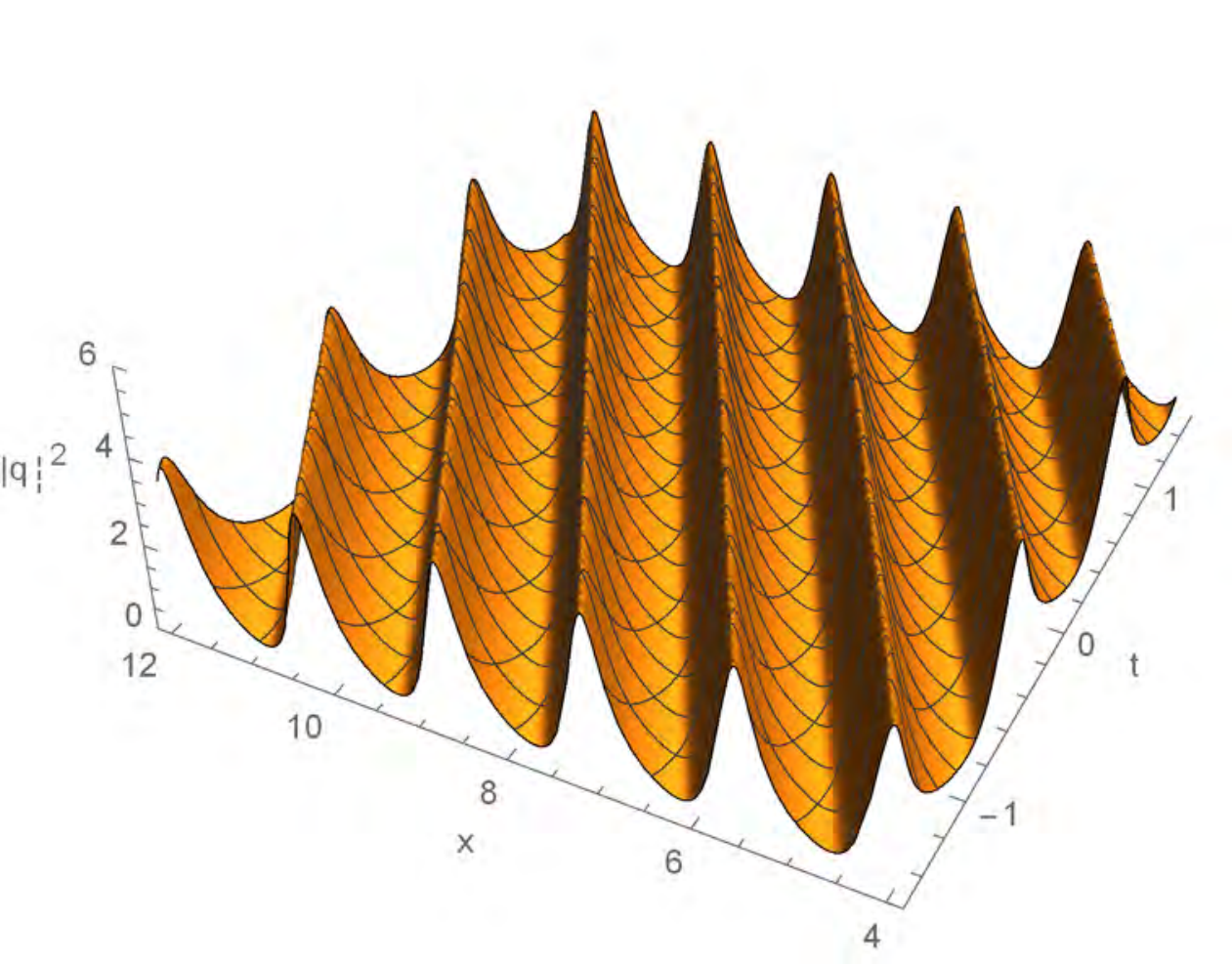}}}
\put(10,-23){\resizebox{!}{3cm}{\includegraphics{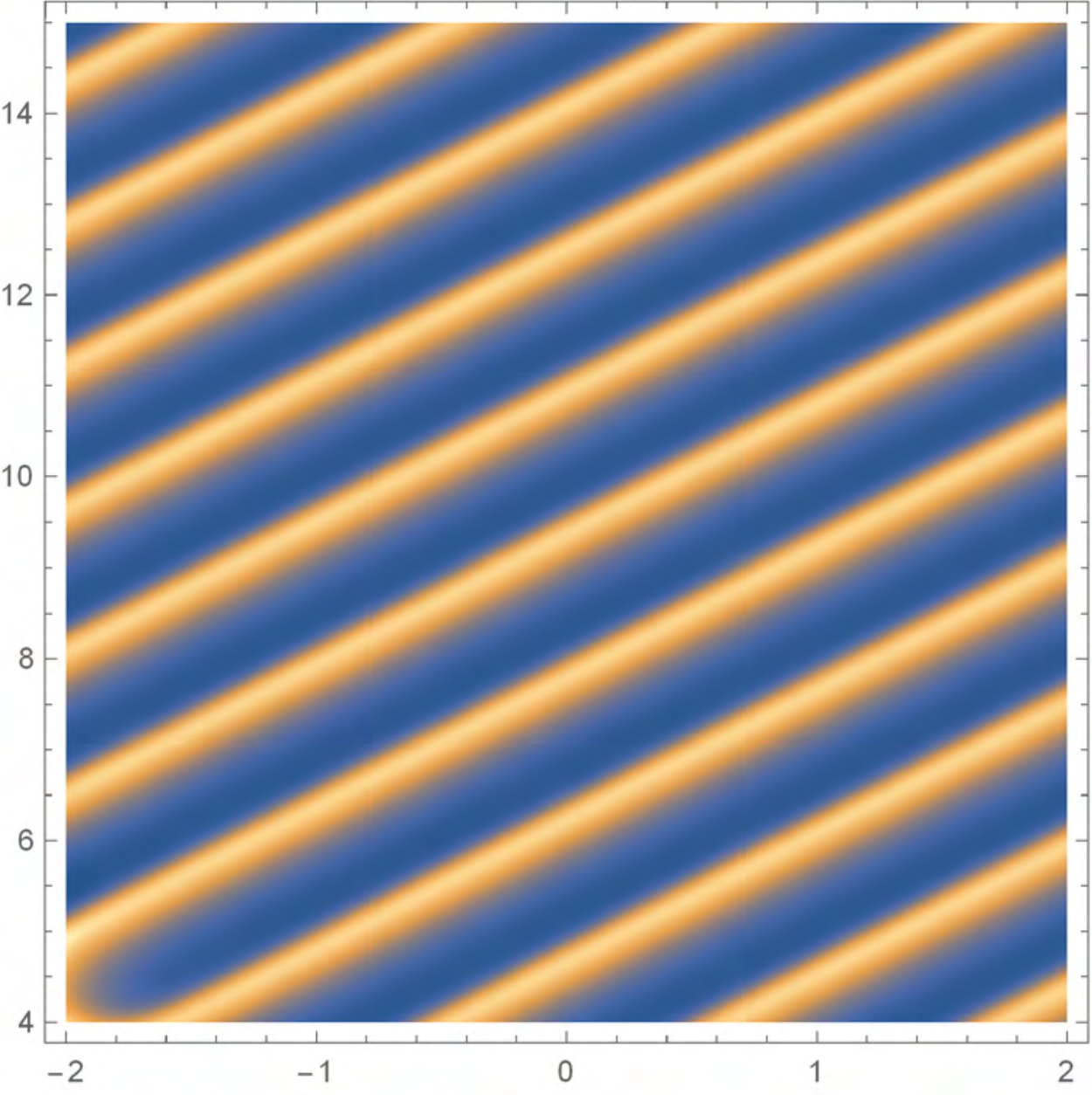}}}
\put(150,-23){\resizebox{!}{3cm}{\includegraphics{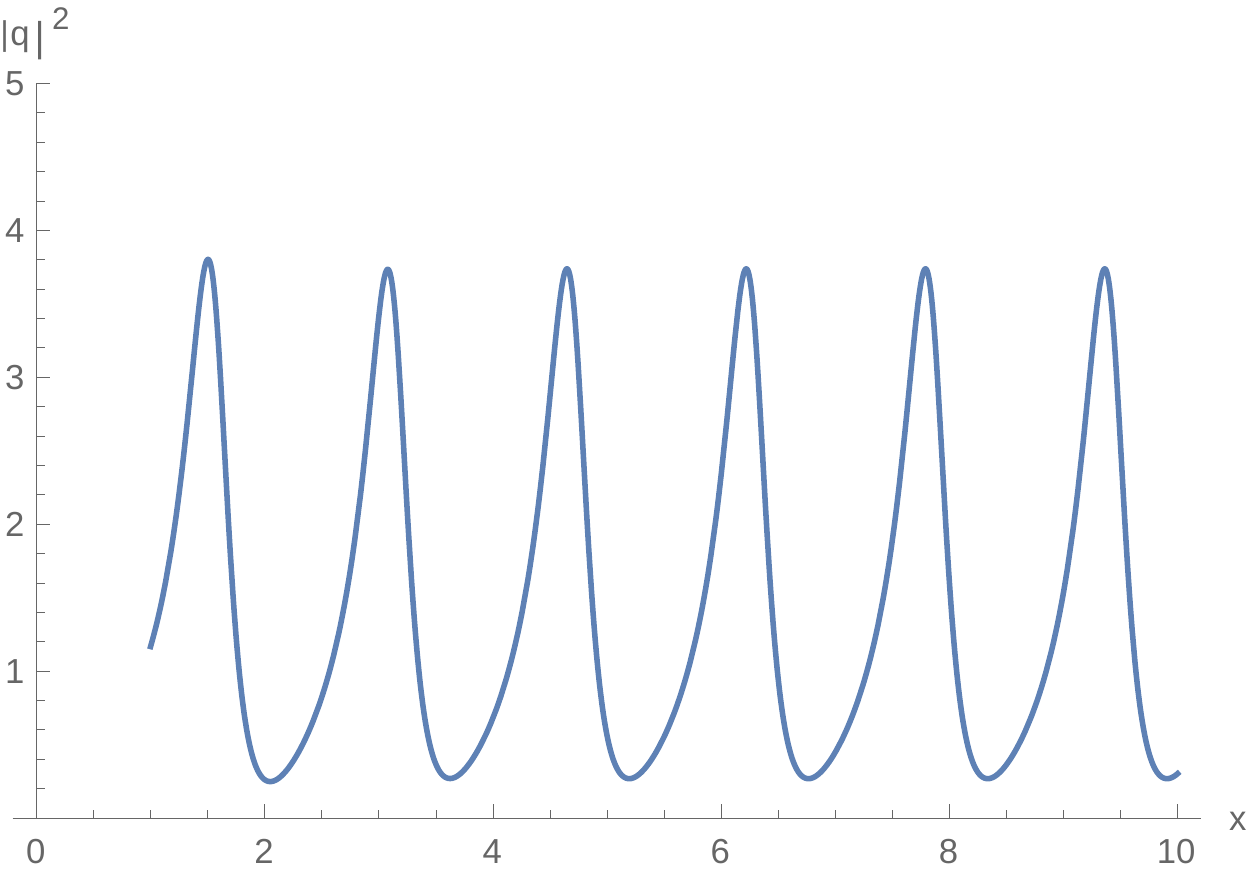}}}
\end{picture}
\end{center}
\vskip 20pt
\begin{center}
\begin{minipage}{16cm}{\footnotesize
\quad\qquad\qquad\qquad(a)\qquad\qquad\qquad\qquad \qquad\quad \qquad \qquad (b) \qquad\qquad \qquad \qquad\qquad\qquad\qquad \qquad (c)}
\end{minipage}
\end{center}
\vskip 10pt
\begin{center}
\begin{picture}(120,80)
\put(-150,-23){\resizebox{!}{3.5cm}{\includegraphics{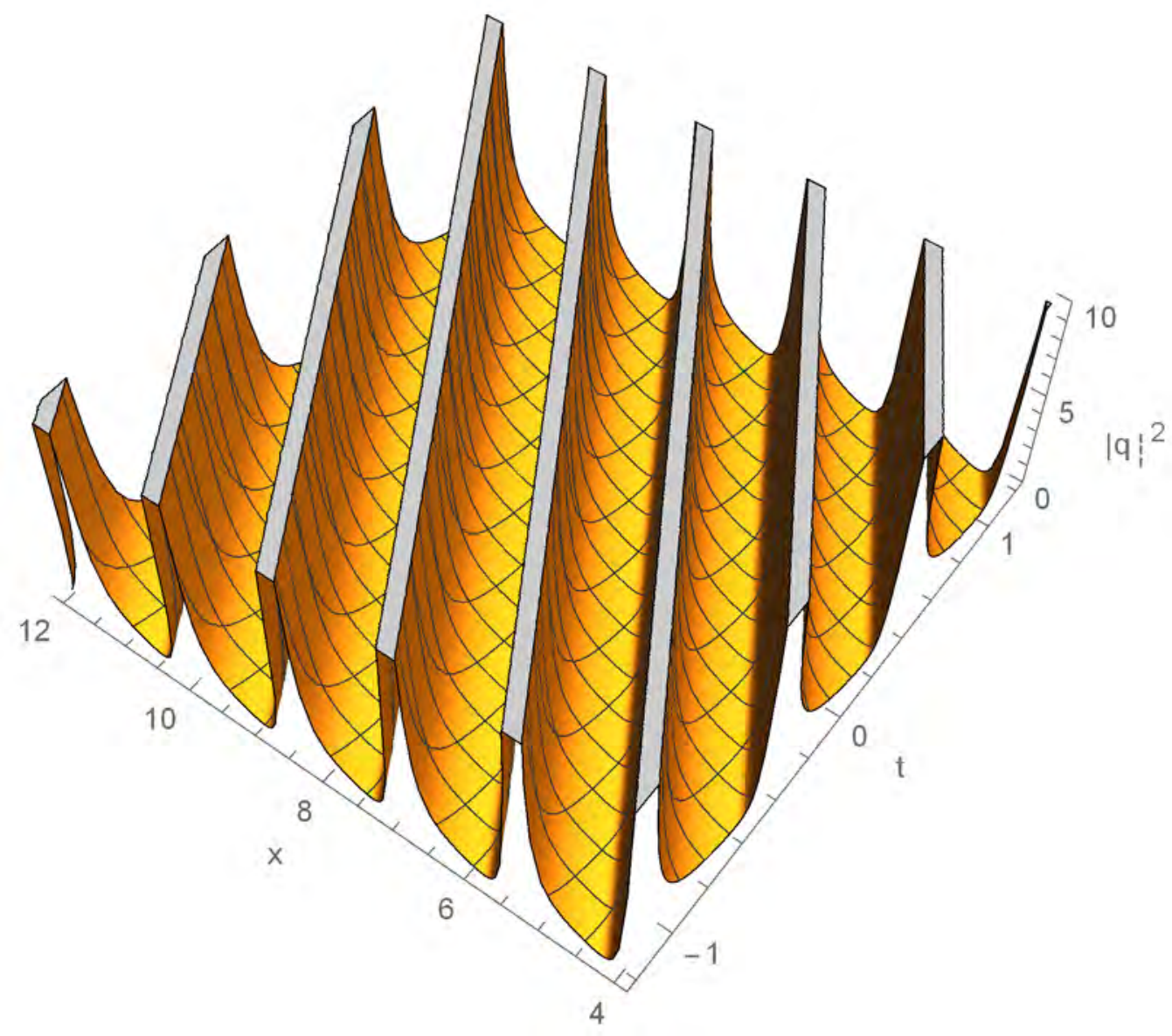}}}
\put(10,-23){\resizebox{!}{3cm}{\includegraphics{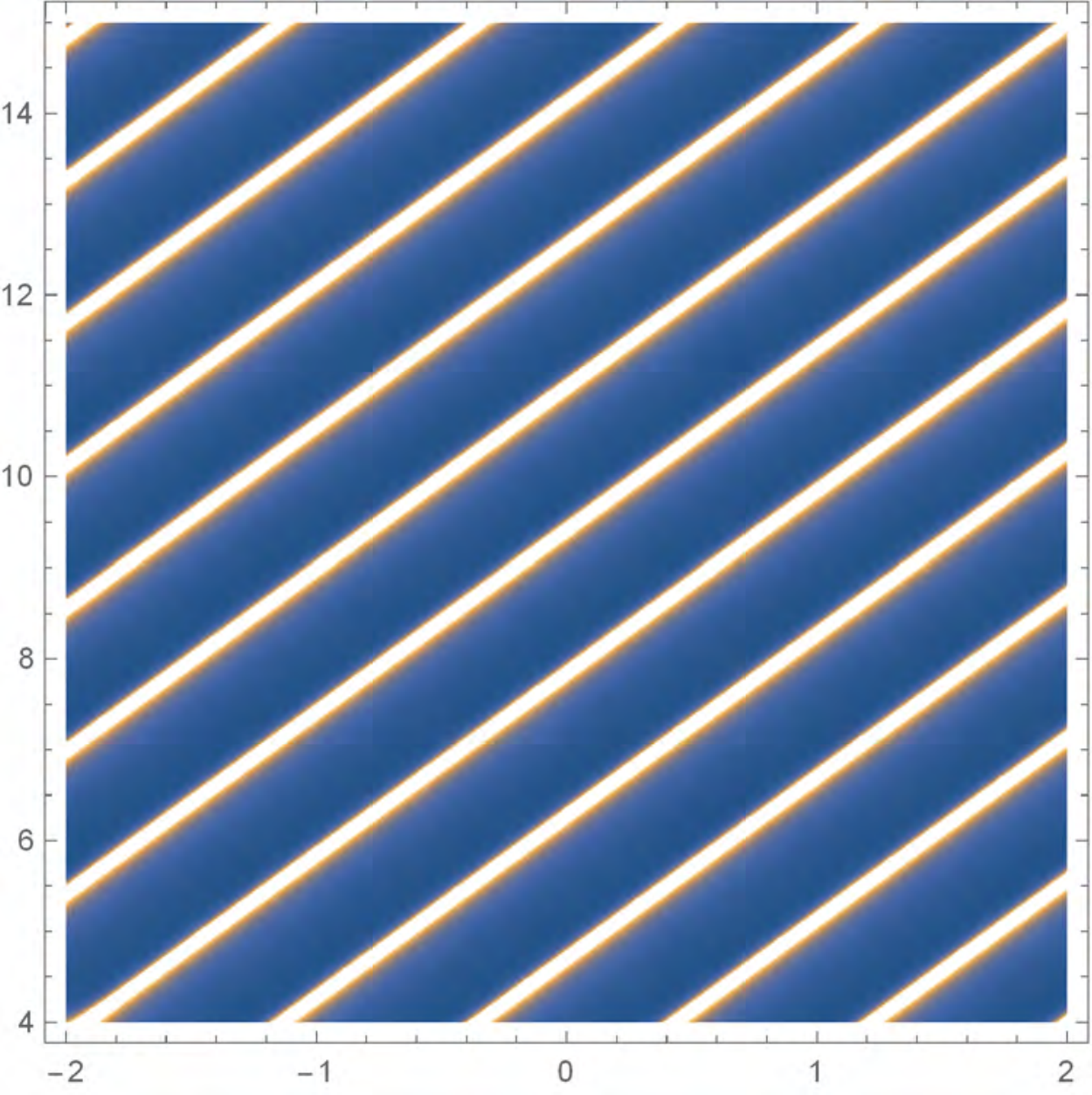}}}
\put(150,-23){\resizebox{!}{3cm}{\includegraphics{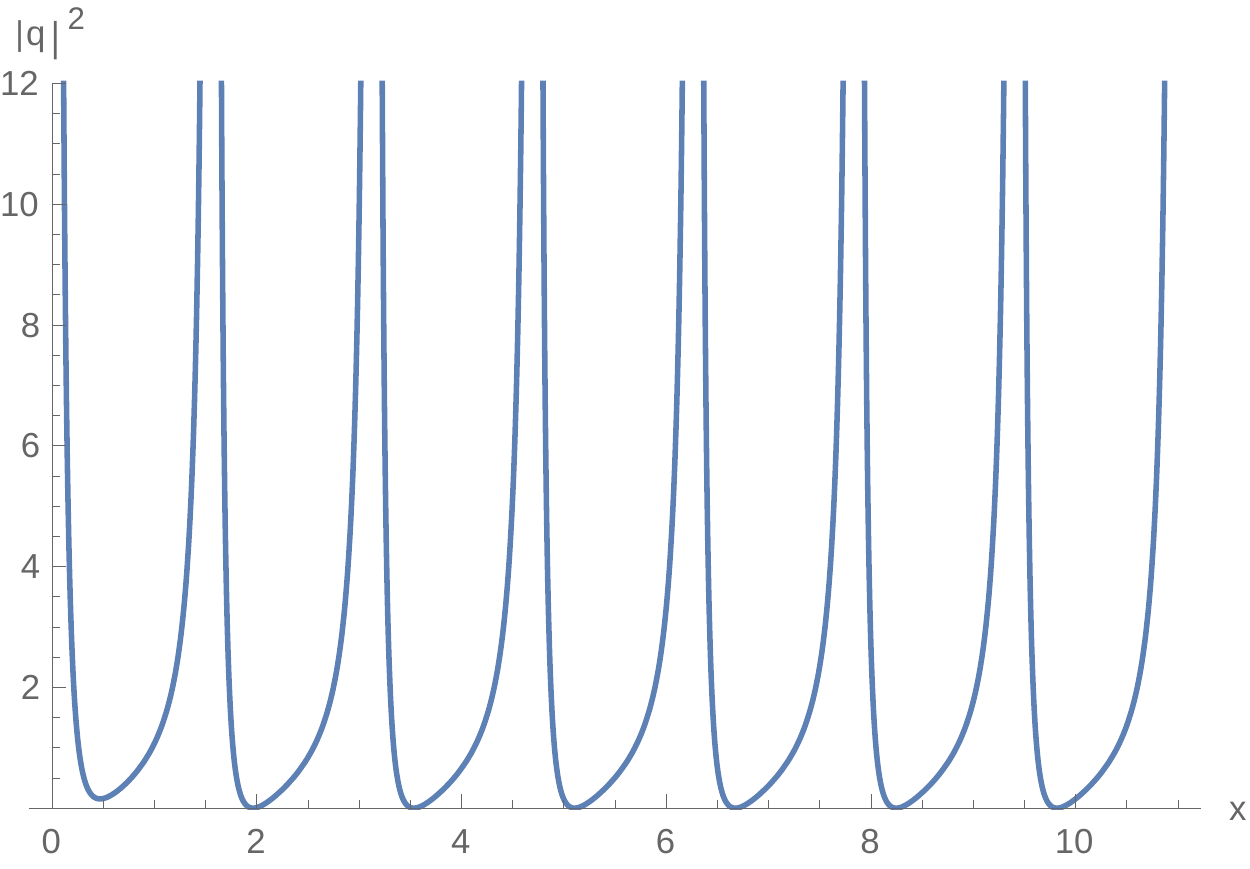}}}
\end{picture}
\end{center}
\vskip 20pt
\begin{center}
\begin{minipage}{16cm}{\footnotesize
\quad\qquad\qquad\qquad(d)\qquad\qquad\qquad\qquad \qquad\quad \qquad \qquad (e) \qquad\qquad \qquad \qquad\qquad\qquad\qquad \qquad (f)\\
{\bf Fig. 16} shape and motion of $|q_3|^2_{(\varepsilon=1)}$: (a) wave for $k_1=0.9+i$, $c_1=2+i$
and $d_1=1+i$. (b) a contour plot of (a) with range $x\in [4, 15]$ and $t\in [-2,2]$.
(c) 2D-plot of (a) at $t=0$. (d) wave for $k_1=c_1=d_1=1+i$.
(e) a contour plot of (d) with range $x\in [4, 15]$ and $t\in [-2,2]$.
(f) 2D-plot of (d) at $t=0$.}
\end{minipage}
\end{center}


\section{Conclusions and some remarks} \label{Con}

In this paper, we have exhibited a Cauchy matrix reduction technique
that enables us to obtain solutions for the reduced local and nonlocal complex equations from the Cauchy matrix
solutions of the original before-reduction systems. We investigated the Cauchy matrix solutions of
four local and nonlocal complex equations, including local and nonlocal cmKdV equation, local and nonlocal csG equation,
local and nonlocal potential NLS equation and local and nonlocal potential cmKdV equation. For each reduced equation, we
gave their soliton solutions and Jordan block solutions, respectively. In particular, we presented the
explicit expressions of 1-soliton solutions, 2-soliton solutions and the simplest Jordan block solutions.
The dynamical behaviors for these solutions were analyzed with graphical illustrations.

We found that the local and nonlocal cmKdV equation \eqref{cmKdV} and the local and nonlocal csG equation \eqref{n-sG},
share the same solutions in formal, but with different dispersion relations, which also holds for the
local and nonlocal potential NLS equation \eqref{n-pNLS-1} and the local and nonlocal potential cmKdV equation \eqref{n-pmKdV-1}.
The matrix $\bT$ is indispensable to relate
$(\Og_1,~\bC^+_1,~\bD^+_1,~\bM_1,~\br_1,~\bs_1)$ and $(\Og_2,~\bC^{+}_2,~\bD^{+}_2,~\bM_2,~\br_2,~\bs_2)$,
while it can be absorbed by $\bM_2$ in the expressions of solution.
Besides, we observed that all the solutions, involving 1-soliton solutions, 2-soliton solutions and the simplest Jordan block solutions,
for the nonlocal complex equations have the quasi-periodic phenomenon.

We finish the paper by the following remarks.

First of all, the reduction technique given in the present paper can be used to
study the local and nonlocal complex reductions of the AKNS hierarchies \eqref{AKNS-hie} and
\eqref{NAKNSH}. The most direct way is to extend dispersion relations by simply introducing
infinite time variables. For example, for the positive AKNS hierarchy
\begin{subequations}
\label{posi-AKNS-hie}
\begin{align}
& 2 u_{t_2}-u_{xx}-8u^2v=0, \\
& 2 v_{t_2}+v_{xx}+8uv^2=0, \\
& 2 u_{t_{n}}+u_{xt_{n-1}}+8u\partial^{-1}(uv)_{t_{n-1}}=0, \quad n=3,4,\ldots, \\
& 2 v_{t_{n}}-v_{xt_{n-1}}-8v\partial^{-1}(uv)_{t_{n-1}}=0, \quad n=3,4,\ldots,
\end{align}
\end{subequations}
its Cauchy matrix solutions are still expressed by
\eqref{AKNS-solu}, with \eqref{DES-M12-C} and general dispersion relations
\begin{subequations}
\begin{align*}
& \br_{j}\rightarrow \mbox{exp}\big((-1)^{j-1}(\Og_{j}x-\sum_{n=2}^{\infty}\Og^n_{j}t_{n})\big)\bC^+_{j},  \\
& \bs_{j}\rightarrow \mbox{exp}\big((-1)^{j-1}(\Og^{\st}_{j}x-\sum_{n=2}^{\infty}(\Og_j^{\st})^nt_{n})\big)\bD^+_{j}.
\end{align*}
\end{subequations}
Here we neglect the trivial member $u_{t_1}-u_{x}=0,~v_{t_1}-v_{x}=0$.
For the even members in the hierarchy \eqref{posi-AKNS-hie}, we consider
complex reduction $v(x,t)=\delta u^*(\sigma x,\sigma t), ~ \delta,~\sigma=\pm 1$.
Table 1 and Table 2, respectively, show soliton solutions and Jordan block solutions
to the higher order local and nonlocal cmKdV equations.
For the odd members, we introduce complex reduction $v(x,t)=\delta u^*(\sigma x, t),~\delta,~\sigma=\pm 1,~t\rightarrow -it$.
The solutions to the higher order local and nonlocal NLS equations can be got without any problem.

What's more, compared with the double Wronskian reduction technique developed in \cite{SAPM-2018},
there is a deficiency of the Cauchy matrix reduction technique, that is, we can't get the Cauchy matrix
solutions for the nonlocal real equations reduced from the AKNS-type equations. For instance, to derive Cauchy matrix solutions for the nonlocal real mKdV equation
\begin{align*}
4 u_{t}+u_{xxx}+24\delta uu(-x,-t)u_{x}=0, \quad \delta=\pm 1,
\end{align*}
we should replace the first equation in
\eqref{nlcmKdV-at-eq} by $\Og_1\bT-\bT\Og_2=0$, where $\bT$ is an invertible matrix. This implies that
matrices $\Og_1$ and $\Og_2$ share the same eigenvalues, which is contradictory to the condition \eqref{con-Og}.

Finally, let us go back to the pAKNS equations \eqref{2th-pAKNS} and \eqref{3th-pAKNS}.
These two equations were firstly derived from \cite{Zhao-ROMP} by utilizing the Cauchy matrix approach.
For the pAKNS equation \eqref{3th-pAKNS}, when $r=q$ it reduces to an equation
\begin{align}
\label{meta}
4 q_{t}+q_{xxx}-3\frac{q_xq_{xx}}{q}=0,
\end{align}
which is always called potential mKdV equation \cite{N-2004-math}. In \cite{Carillo}, a noncommutative version of
\eqref{meta} was introduced. Moreover, its mirror counterpart, recursion operators, hierarchies, and an explicit solution class were also
derived. So how to construct the potential AKNS hierarchy and discuss its integrability is an interesting problem worth consideration.

\vskip 20pt
\subsection*{Acknowledgments}
This project is supported by the National Natural Science Foundation of
China (No. 12071432) and the Natural Science Foundation of Zhejiang Province (Nos. LY17A010024,
LY18A010033).

{\small
}

\begin{thebibliography}{99}

\bibitem{Zak} V.E. Zakharov,
        Stability of periodic waves of finite amplitude on the surface of deep fluid,
        {\em J. Appl. Mech. Tech. Phys.\/} {\bf 9} (1968) 190--194.
\bibitem{Hasimoto} H. Hasimoto,
        A soliton on a vortex filament,
        {\em J. Fluid Mech.\/} {\bf 51} (1972) 477--485.
\bibitem{Zak-1} V.E. Zakharov,
        Collapse of Langmuir waves,
        {\em Sov. Phys. JETP\/} {\bf 35} (1972) 908--914.
\bibitem{Char} F.F.K. Charles, A. Sen, F.Y.F. Chu,
        Nonlinear evolution of lower hybrid waves,
        {\em Phys. Fluids \/} {\bf 22} (1979) 940--952.
\bibitem{Gorb} O.B. Gorbacheva, L.A. Ostrovsky,
        Nonlinear vector waves in a mechanical model of a molecular chain,
        {\em Physica D\/} {\bf 8} (1983) 223--228.
\bibitem{Anco} S.C. Anco, M. Mohiuddin, T. Wolf,
        Traveling waves and conservation laws for complex mKdV-type equations,
        {\em Appl. Math. Comput.\/} {\bf 219} (2012) 6796--6798.
\bibitem{Vega} H.J. de Vega, J. Ramirez Mittelbrunn, M. Ramon Medrano, N.G. Sanchez,
        The general solution of the 2-d sigma model stringy black hole and the massless complex sine-Gordon model,
        {\em Phys. Lett. B\/} {\bf 323} (1994) 133--138.
\bibitem{Park} Q-Han Park, H.J. Shin,
        Field theory for coherent optical pulse propagation,
        {\em Phys. Rev. A\/} {\bf 57} (1998) 4621--4642.
\bibitem{AM-nNLS} M.J. Ablowitz, Z.H. Musslimani,
        Integrable nonlocal nonlinear Schr\"{o}dinger equation,
        {\em Phys. Rev. Lett.\/} {\bf 110} (2013) 064105(5pp).
\bibitem{AM-nNLS-IST} M.J. Ablowitz, Z.H. Musslimani,
        Inverse scattering transform for the integrable nonlocal nonlinear Schr\"{o}dinger equation,
        {\em Nonlinearity \/} {\bf 29} (2016) 915--946.
\bibitem{Lou-JMP} S.Y. Lou,
        Alice-Bob systems, $\hat{P}-\hat{T}-\hat{C}$ symmetry invariant and symmetry breaking soliton solutions,
        {\em J. Math. Phys.\/} {\bf 59(8)} (2018) 083507(20pp).
\bibitem{Sinha} D. Sinha, P.K. Ghosh,
        Symmetries and exact solutions of a class of nonlocal nonlinear Schr\"{o}dinger equations with self-induced parity-time-symmetric potential,
        {\em Phys. Rev. E \/} {\bf 91} (2015) 042908(14pp).
\bibitem{Valchev} T. Valchev,
        On Mikhailov's reduction group,
        {\em Phys. Lett. A \/} {\bf 379} (2015) 1877--1880.
\bibitem{AbLuoMu} M.J. Ablowitz, X.D. Luo, Z.H. Musslimani,
        Inverse scattering transform for the nonlocal nonlinear Schr\"{o}dinger equation with nonzero boundary conditions,
        {\em J. Math. Phys. \/} {\bf 59} (2018) 011501(42pp).
\bibitem{Yan} Z.Y. Yan,
        Integrable PT-symmetric local and nonlocal vector nonlinear Schr\"{o}dinger equations: A unified two-parameter model,
        {\em Appl. Math. Lett. \/} {\bf 47} (2015) 61--68.
\bibitem{ChenZ-AML-2017} K. Chen, D.J. Zhang,
        Solutions of the nonlocal nonlinear Schr\"{o}dinger hierarchy via reduction,
        {\em Appl. Math. Lett.\/} {\bf 75} (2018) 82--88.
\bibitem{LX} M. Li, T. Xu,
        Dark and antidark soliton interactions in the nonlocal nonlinear Schr\"{o}dinger equation with the self-induced parity-time-symmetric potential,
        {\em Phys. Rev. E \/} {\bf 91} (2015) 033202(8pp).
\bibitem{Luo} X.D. Luo,
        Inverse scattering transform for the complex reverse space-time nonlocal modified Korteweg-de Vries equation with nonzero boundary conditions and constant phase shift,
        {\em Chaos \/} {\bf 29} (2019) 073118(13pp).
\bibitem{MaShenZhu} L.Y. Ma, S.F. Shen, Z.N. Zhu,
        Soliton solution and gauge equivalence for an integrable nonlocal complex modified Korteweg-de Vries equation,
        {\em J. Math. Phys. \/} {\bf 58} (2017) 103501(12pp).
\bibitem{YY-SAPM} B. Yang, J.K. Yang,
        Transformations between nonlocal and local integrable equations,
        {\em Stud. Appl. Math.\/} {\bf 40} 178--201 (2017).
\bibitem{GuPe} M. G\"{u}rses, A. Pekcan,
        Nonlocal modified KdV equations and their soliton solutions by Hirota method,
        {\em Commun. Nonlinear Sci. Numer. Simulat.\/} {\bf 67} (2019) 427--448.
\bibitem{SAPM-2018} K. Chen, X. Deng, S.Y. Lou, D.J. Zhang,
        Solutions of nonlocal equations reduced from the AKNS hierarchy,
        {\em Stud. Appl. Math.\/} {\bf 141} (2018) 113--141.
\bibitem{DLZ} X. Deng, S.Y. Lou, D.J. Zhang,
        Bilinearisation-reduction approach to the nonlocal discrete nonlinear Schr\"{o}dinger equations,
        {\em Appl. Math. Comput.\/} {\bf 332} (2018) 477--483.
\bibitem{FZS} W. Feng, S.L. Zhao, Y.Y. Sun,
        Double Casoratian solutions to the nonlocal semi-discrete modified Korteweg-de Vries equation,
        Int. J. Mod. Phys. B 34(5) (2020) 2050021(13pp).
\bibitem{LWZ} S.M. Liu, H. Wu, D.J. Zhang,
        New dynamics of the classical and nonlocal Gross-Pitaevskii equation with a parabolic potential,
        {\em Rep. Math. Phys.\/} {\bf 86} (2020) 271--292.
\bibitem{FZ-IJMPB} W. Feng, S.L. Zhao,
        Soliton solutions to the nonlocal non-isospectral nonlinear Schr\"{o}dinger equation,
        {\em Int. J. Mod. Phys. B \/} {\bf 34(25)} (2020) 2050219.
\bibitem{Xu-Zhao} H.J. Xu, S.L. Zhao,
        Local and nonlocal reductions of two nonisospectral Ablowitz-Kaup-Newell-Segur equations and solutions,
        {\em Symmetry \/} {\bf 13} (2021) 23.
\bibitem{Zhang-AML-SP} K. Chen, S.M. Liu, D.J. Zhang,
        Covariant hodograph transformations between nonlocal short pulse models and the AKNS(-1) system,
        {\em Appl. Math. Lett.\/} {\bf 88} (2019) 230--236.
\bibitem{NAH-2009-JPA} F.W. Nijhoff, J. Atkinson, J. Hietarinta,
        Soliton solutions for ABS lattice equations: I. Cauchy matrix approach,
        {\em J. Phys. A: Math. Theor.\/} {\bf 42} (2009) 404005(34pp).
\bibitem{N-2004-math} J. Hietarinta, N. Joshi, F.W. Nijhoff,
        {\it Discrete Systems and Integrability},
        {\em Cambridge University Press\/}, (2016).
\bibitem{FA-DL} A.S. Fokas, M.J. Ablowitz,
        Linearization of the Korteweg-de Vries and Painlev\'{e} II equations,
        {\em Phys. Rev. Lett.\/} {\bf 47} (1981) 1096--1100.
\bibitem{NQLC} F.W. Nijhoff, G.R.W. Quispel, J. van der Linden, H.W. Capel,
        On some linear integral equations generating solutions of nonlinear partial differential equations,
        {\em Physica A\/} {\bf 119} (1983) 101--142.
\bibitem{FCW-BOOK} F.W. Nijhoff, H.W. Capel, G.L. Wiersma,
        Integrable lattice systems in two and three dimensions,
        In Martini, R. (ed), Geometric Aspects of the Einstein Equations and Integrable Systems
        (Scheveningen, 1984), Lecture Notes in Phys., {\bf 239}. Berlin: Springer  263 (1985).
\bibitem{SAPM-2012} D.J. Zhang, S.L. Zhao, F.W. Nijhoff,
        Direct linearization of extended lattice BSQ system,
        {\em Stud. Appl. Math.\/} {\bf 129} (2012) 220--248.
\bibitem{PRSA-2017} W. Fu, F.W. Nijhoff,
        Direct linearising transform for three-dimensional discrete integrable systems: the lattice AKP, BKP and CKP equations,
        {\em Proc. R. Soc. A\/} {\bf 473} (2017) 20160915(22pp).
\bibitem{ZZ-SAPM-2013} D.J. Zhang, S.L. Zhao,
        Solutions to ABS lattice equations via generalized Cauchy matrix approach,
        {\em Stud. Appl. Math.\/} {\bf 131} (2013) 72--103.
\bibitem{S-1884} J. Sylvester,
        Sur l'equation en matrices $px=xq$,
        {\em C. R. Acad. Sci. Paris\/} {\bf 99} (1884) 67--71, 115--116.
\bibitem{XZZ} D.D. Xu, D.J. Zhang, S.L. Zhao,
        The Sylvester equation and integrable equations: I. The Korteweg-de Vries system and sine-Gordon equation,
        {\em J. Nonlinear Math. Phys.\/} {\bf 21(3)} (2014) 382--406.
\bibitem{Zhao-ROMP} S.L. Zhao,
        The Sylvester equation and integrable equations: The Ablowitz-Kaup-Newell-Segur system,
        {\em Rep. Math. Phys.\/} {\bf 82(2)} (2018) 241--263.
\bibitem{FZ-ROMP} W. Feng, S.L. Zhao,
        Cauchy matrix type solutions for the nonlocal nonlinear Schr\"{o}dinger equation,
        {\em Rep. Math. Phys.\/} {\bf 84(1)} (2019) 75--83.
\bibitem{AKNS-1973} M.J. Ablowitz, D.J. Kaup, A.C. Newell, H. Segur,
        Nonlinear-evolution equations of physical significance,
        {\em Phys. Rev. Lett.\/} {\bf 31} (1973) 125--127.
\bibitem{ZJZ} D.J. Zhang, J. Ji, S.L. Zhao,
        Soliton scattering with amplitude changes of a negative order AKNS equation,
        {\em Physica D\/} {\bf 238} (2009) 2361--2367.
\bibitem{Schwar} F.W. Nijhoff, G.R.W. Quispel, J. van der Linden, H.W. Capel,
        On some linear integral equations generating solutions of nonlinear partial differential equations,
        {\em Physica A\/} {\bf 119} (1983) 101--142.
\bibitem{Schwar-1} F.W. Nijhoff,
        On some ``Schwarzian equations'' and their discrete analogues,
        Eds. A.S. Fokas and I.M. Gel'fand, in: Algebraic Aspects of Integrable Systems:
        In memory of Irene Dorfman, Birkh\"{a}user Verlag, (1996) 237--260.
\bibitem{Quispel-1982-3} G.R.W. Quispel, H.W. Capel,
        The nonlinear Schr\"{o}dinger equation and the anisotropic Heisenberg spin chain,
        {\em Phys. Lett. A\/} {\bf 88} (1982) 371--374.
\bibitem{Carillo} S. Carillo, M. Lo Schiavo, E. Porten, C. Schiebold,
        A novel noncommutative KdV-type equation, its recursion operator, and solitons,
        {\em J. Math. Phys.\/} {\bf 59} (2018) 043501(15pp).


\end{thebibliography}
\end{document}